% Template for the submittion to:
%   The Annals of Probability           [aop]
%   The Annals of Applied Probability   [aap]
%   The Annals of Statistics            [aos]
%
%Author: In this template, the places where you need to add information
%        (or delete line) are indicated by {???}.  Mostly the information
%        required is obvious, but some explanations are given in lines starting
%Author:
%All other lines should be ignored.  After editing, there should be
%no instances of ??? after this line.

% use option [preprint] to remove info line at bottom
% journal options: aop,aap,aos
\documentclass[preprint]{imsart}

\usepackage{amsthm,amsmath,natbib}
%\RequirePackage[dvips]{hyperref}

% use this package if hyperref and natbib is used:
%\RequirePackage{hypernat}

% provide arXiv number if available:
%\arxiv{math.PR/0000000}

% put your definitions there:
\startlocaldefs

\usepackage{graphics,amssymb,color}
\usepackage{graphicx,subfigure}
\usepackage{algorithm}
\usepackage{algorithmic}

\newcommand{\argmax}{\mathop{\mathrm{argmax}}}
\newcommand{\argmin}{\mathop{\mathrm{argmin}}}
\newcommand{\minimize}{\mathop{\mathrm{minimize}}}

\newcommand{\st}{\mathop{\mathrm{subject\,\,to}}}

\newcommand{\real}{\mathbb{R}}
\newcommand{\Pp}{{\mathbb P}}
\newcommand{\Ee}{{\mathbb E}}
\newcommand{\Cov}{\mathrm{Cov}}
\newcommand{\sign}{\mathrm{sign}}
\newtheorem{theorem}{Theorem}
\newtheorem{lemma}{Lemma}

\newtheorem{remark}{Remark}

\newcommand{\K}{\mathcal{K}}
\newcommand{\V}{\mathcal{V}}
\newcommand\tf{\widetilde{f}}
\newcommand\tX{\widetilde{X}}
\newcommand\ty{\widetilde{y}}
\newcommand\tSigma{\widetilde{\Sigma}}
\newcommand{\grad}{\nabla}
\newcommand\indic{1}
\newcommand{\hauss}{\mathcal{H}}

\newcommand{\pen}{\mathcal{P}}
\newcommand{\dualpen}{\mathcal{Q}}
\newcommand{\dualball}{C}
\newcommand{\Mm}{\mathbb{M}}
\newcommand{\Qq}{\mathbb{Q}}
\newcommand{\Ss}{\mathbb{S}}
\newcommand{\spa}{\mathrm{span}}
\newcommand{\nul}{\mathrm{null}}
\newcommand{\relint}{\mathrm{relint}}

\newcommand{\cG}{\mathcal{G}}
\newcommand{\op}{\mathrm{op}}
\newcommand{\tr}{\mathrm{tr}}

\newcommand{\JLcomment}[1]{\begin{quote}
\color{green} \emph{JL -- #1} \end{quote}}

\newcommand{\hbeta}{\hat{\beta}}
\newcommand{\sgn}{\mathrm{sign}}

\makeatletter
\newcommand{\pushright}[1]{\ifmeasuring@#1\else\omit\hfill$\displaystyle#1$\fi\ignorespaces}
\newcommand{\pushleft}[1]{\ifmeasuring@#1\else\omit$\displaystyle#1$\hfill\fi\ignorespaces}
\makeatother

\endlocaldefs

\begin{document}
\begin{frontmatter}

% "Title of the paper"
\title{Inference in Adaptive Regression via the Kac-Rice Formula}
\runtitle{Inference via Kac-Rice}

% indicate corresponding author with \corref{}
\begin{aug}
\author{\fnms{Jonathan} \snm{Taylor}\corref{}\ead[label=e1]{jonathan.taylor@stanford.edu}\thanksref{t1},  
\fnms{Joshua} \snm{Loftus}\ead[label=e2]{joftius@stanford.edu},
\fnms{Ryan J.} \snm{Tibshirani}\ead[label=e3]{ryantibs@cmu.edu} %,
%\fnms{Robert} \snm{Tibshirani}\ead[label=e4]{tibs@stanford.edu}
}
\runauthor{Taylor et al.}

% Doesn't show up at all with no specification of aos, aop, etc.,  style
\affiliation{$^1$Stanford University and $^2$Carnegie Mellon
University}

\address{Department of Statistics\\ Stanford University 
\\ Sequoia Hall \\ Stanford, CA 94305, USA \\ \printead{e1} \\
\printead*{e2} } % \\ \printead*{e4}}

\address{Department of Statistics \\
Carnegie Mellon University \\ Baker Hall \\
Pittsburgh, PA 15213, USA \\ \printead{e3}}

\thankstext{t1}{Supported in part by NSF grant DMS 1208857 and
AFOSR grant 113039.}
\end{aug}

%\address{Department of Mathematics and  Statistics\\
%McGill University \\805 ouest, rue Sherbrooke\\ Montr\'eal\\ Qu\'ebec\\ Canada H3A 2K6 \\ \printead{e2}}

\begin{abstract}
We derive an exact p-value for testing a global null
hypothesis in a general adaptive regression setting.  Our
approach uses the Kac-Rice formula \citep[as described in][]{RFG}
applied to the problem of maximizing a Gaussian process. 
The resulting test statistic has a known distribution in
finite samples, assuming Gaussian errors.  We
examine this test statistic in the case of the lasso, group lasso,
principal components, and matrix completion problems. For the lasso
problem, our test relates closely to the recently proposed covariance
test of \citet{covtest}.   Our approach also yields exact selective
inference for the mean parameter at the global maximizer of the process.
\end{abstract}

\begin{keyword}[class=AMS]
\kwd[Primary ]{62M40}
\kwd[; secondary ]{62J05}
\end{keyword}

\begin{keyword}
\kwd{Gaussian processes}
\kwd{convex analysis}
\kwd{regularized regression}
\end{keyword}

\end{frontmatter}

\section{Introduction}
\label{sec:introduction}

In this work, we consider the problem of finding the distribution of 
\begin{equation}
\label{eq:support}
\max_{\eta \in \K} \, \eta^T\epsilon, \qquad 
\epsilon \sim N(0,\Theta), 
\end{equation}
for a convex set $\K \subseteq \real^p$. 
In other words, we study a Gaussian process with a
finite Karhunen-Lo\`eve expansion \citep{RFG}, restricted to a 
convex set in $\real^p$.  

While this is a well-studied topic in the literature of Gaussian
processes, our aim here is to describe an implicit formula for both
the distribution of \eqref{eq:support}, as well as the almost surely
unique maximizer  
\begin{equation}
\label{eq:maximizer}
\eta^* = \argmax_{\eta \in \K} \, \eta^T\epsilon.
\end{equation}
A main point of motivation underlying our work is the
application of such a formula for inference in modern statistical
estimation problems.  We note that a similar (albeit simpler) formula
has proven useful in problems related to sparse regression
\citep{covtest,taylor21}.  Though the general setting considered in 
this paper is ultimately much more broad, we begin by discussing the 
sparse regression case.

\subsection{Example: the lasso}
\label{sec:lasso}

As a preview, consider the $\ell_1$ penalized regression problem,
i.e., the lasso problem \citep{tibshirani:lasso}, of the form
\begin{equation}
\label{eq:lasso}
\hbeta = \argmin_{\beta \in \real^p} \,
\frac{1}{2} \|y-X\beta\|^2_2+ \lambda \|\beta\|_1,
\end{equation}
where $y\in\real^n$, 
$X \in \real^{n\times p}$, and $\lambda \geq 0$. 
Very mild conditions on the predictor matrix $X$ ensure uniqueness of 
the lasso solution \smash{$\hbeta$}, see, e.g.,
\citet{lassounique}. Treating $X$ as fixed, we assume that the outcome
$y$ satisfies  
\begin{equation}
\label{eq:ynormal1}
y \sim N(X\beta_0,\Sigma),
\end{equation}
where $\beta_0 \in \real^p$ is some fixed true (unknown) coefficient
vector, and $\Sigma\in\real^{p\times p}$ is a known covariance matrix.
In the following sections, we derive a formula that enables a test of
a global null hypothesis $H_0$ in a general regularized regression
setting.  Our main result, Theorem \ref{thm:mainresult}, can be
applied to the lasso problem in order to test the null hypothesis 
$H_0 : \beta_0=0$.  This test 
involves the quantity  
\begin{equation*}
\lambda_1 = \|X^Ty\|_{\infty},
\end{equation*}
which can be seen as the first knot (i.e., critical value) in the
lasso solution path over the regularization parameter $\lambda$
\citep{lars}.  Recalling the duality of the $\ell_1$ and $\ell_\infty$
norms, we can rewrite this quantity as 
\begin{equation}
\label{eq:lassolam1}
 \lambda_1= \max_{\eta \in \K} \, \eta^T (X^Ty),
\end{equation}
where $\K=\{ \eta:  \|\eta\|_1 \leq 1\}$, showing that
$\lambda_1$ is of the form \eqref{eq:support}, with $\epsilon=X^Ty$  
(which has mean zero under the null hypothesis). 
Assuming uniqueness of the entries of $X^Ty$, the maximizer $\eta^*$ 
in \eqref{eq:lassolam1} is
\begin{equation*}
\eta^*_j =
\begin{cases}
\text{sign}(X_j^Ty) & |X_j^Ty| = \|X^Ty\|_{\infty} \\ 
0 & \text{otherwise}
\end{cases}, \qquad j=1,\ldots p.
\end{equation*}
Let $j^*$ denote the maximizing index, so that
 $|X_{j^*}^T y| = \|X^T y\|_\infty$, and also $s^*= \sgn(X_{j^*}^Ty)$,   
$\Theta_{jk} = X_j^T\Sigma X_k$. 
%$X_j^TP_{X,D}\Sigma P_{X,D}X_k$. 
To express our test statistic, we define
\begin{align*}
\V^-_{\eta^*} &= 
\max_{\substack{s \in \{-1,1\}, \, k \neq j \\
 1 - s\Theta_{j^*k}/\Theta_{j^*j^*} > 0}}  
\frac{s(X_k - \Theta_{j^*k}/\Theta_{j^*j^*} X_{j^*})^Ty}
{1 - s \Theta_{j^*k}/\Theta_{j^*j^*}}, \\ 
\V^+_{\eta^*} &= \min_{\substack{s \in \{-1,1\}, \, k \neq j \\
 1 - s\Theta_{j^*k}/\Theta_{j^*j^*} < 0}} 
\frac{s(X_k - \Theta_{j^*k}/\Theta_{j^*j^*} X_{j^*})^Ty}
{1 - s \Theta_{j^*k}/\Theta_{j^*j^*}}.
\end{align*}
Then under $H_0: \beta_0=0$, we prove that
\begin{equation}
\label{eq:lassopval1}
\frac{\Phi\big(\V^+_{\eta^*} / \Theta_{j^*j^*}^{1/2}\big) -
\Phi\big(\lambda_1  / \Theta_{j^*j^*}^{1/2}\big) }
{\Phi \big(\V^+_{\eta^*} / \Theta_{j^*j^*}^{1/2}\big) - 
\Phi\big(\V^-_{\eta^*} / \Theta_{j^*j^*}^{1/2}\big) } \sim
\mathrm{Unif}(0,1),
\end{equation}
where $\Phi$ is the standard normal cumulative 
distribution function. 
This formula is somewhat remarkable, in that it is exact---not
asymptotic in $n,p$---and relies only on the assumption of normality
for $y$ in \eqref{eq:ynormal1} (with essentially no real restrictions
on the predictor matrix $X$). As mentioned above, it is a special case
of Theorem \ref{thm:mainresult}, the main result of this paper.   

The above test statistic \eqref{eq:lassopval1}, which we refer to as the {\em Kac-Rice test} for the LASSO, may seem
complicated, but when the predictors are standardized, $\|X_j\|_2=1$
for $j=1,\ldots p$, and the observations are independent with (say)
unit marginal variance, $\Sigma=I$,
then \smash{$\V^-_{\eta^*}$} is equal to the second knot $\lambda_2$
in the lasso path and \smash{$\V^+_{\eta^*}$} is equal to $\infty$.
Therefore \eqref{eq:lassopval1} simplifies to 
\begin{equation}
\label{eq:lassopval2}
\frac{1 - \Phi(\lambda_1)}
{1 - \Phi(\lambda_2)} \sim \mathrm{Unif}(0,1).
\end{equation}
This statistic measures the relative sizes of $\lambda_1$ and 
$\lambda_2$, with values of $\lambda_1 \gg \lambda_2$ being
evidence against the null hypothesis.

\begin{figure}
\begin{center}
\subfigure[Kac-Rice test]{
\label{fig:lassopvala}
\includegraphics[width=0.5\textwidth]{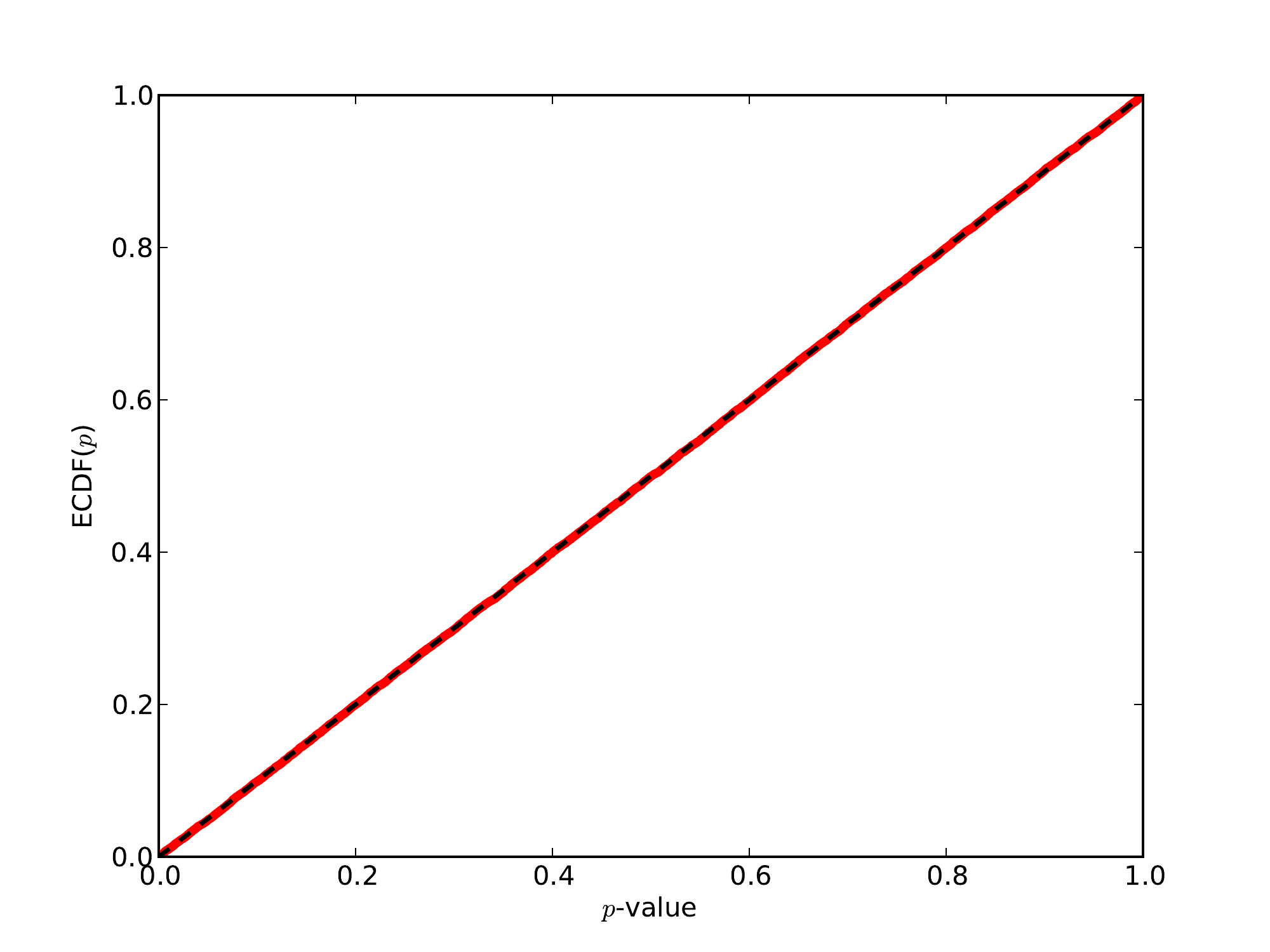}}
\hspace{-15pt}
\subfigure[Covariance test]{
\label{fig:lassopvalb}
\includegraphics[width=0.5\textwidth]{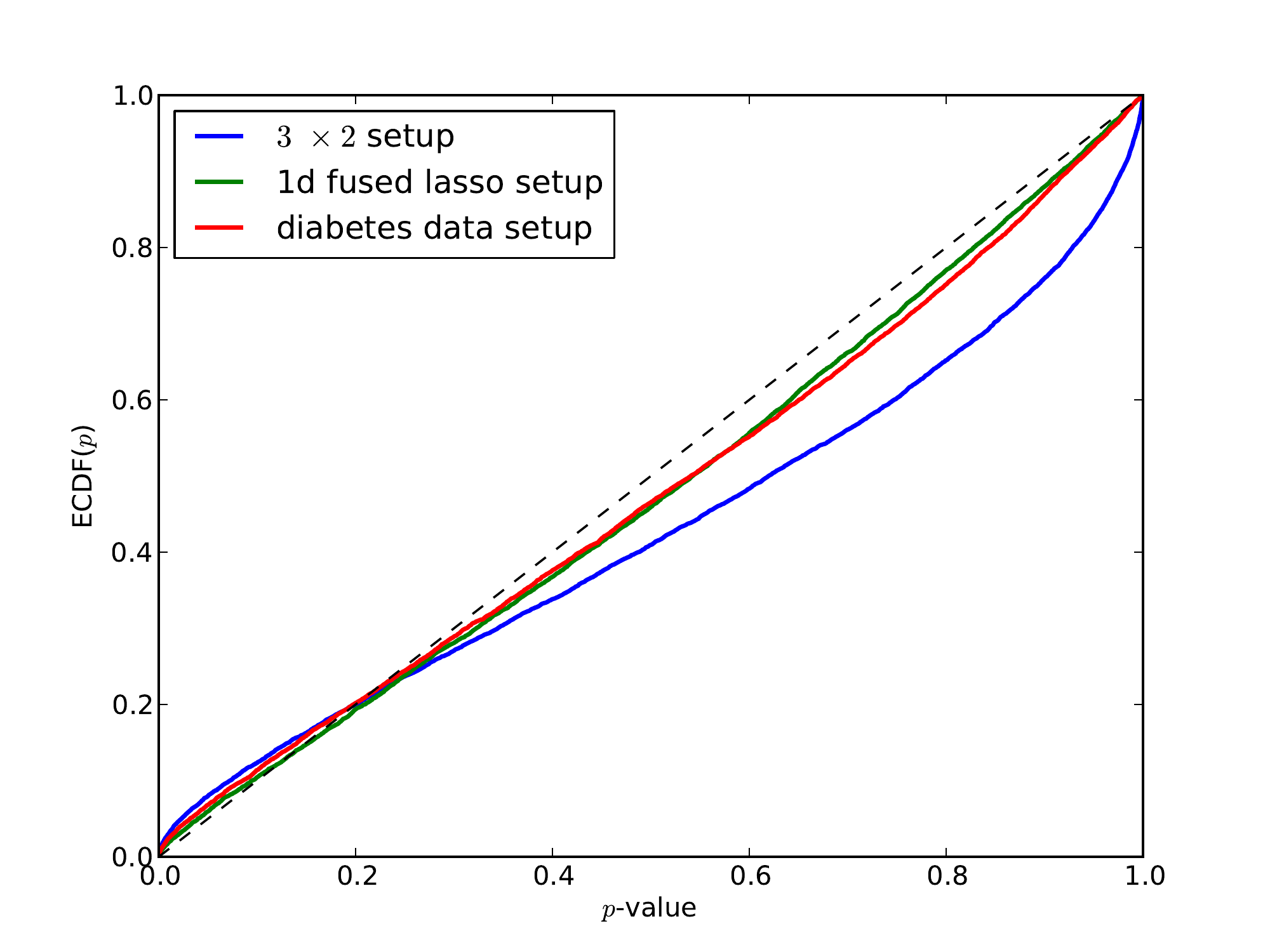}}
\caption{\small \it The left panel shows the empirical distribution
  function of a sample of 20,000 p-values \eqref{eq:lassopval1} coming
  from a variety of different lasso setups.  The agreement with
  uniform here is excellent.  The right panel shows the empirical
  distribution function of a sample of 10,000 covariance test
  p-values, computed using an $\mathrm{Exp}(1)$ approximation,
  using three different lasso setups.  The
  $\mathrm{Exp}(1)$ approximation is generally conservative whereas the 
  Kac-Rice test is exact. }
\end{center}
\end{figure}

Figure \ref{fig:lassopvala} shows the empirical distribution function
of a sample of 20,000 p-values \eqref{eq:lassopval1}, constructed from
lasso problems with a variety
of different predictor matrices, all under the global null model
$\beta_0=0$.  In particular, for each
sample, we drew the predictor matrix $X$ uniformly at random from the
following cases: 
\begin{itemize}
\item small case: $X$ is ${3 \times 2}$, with values (in
  row-major order) equal to $1,2,\ldots 6$;
\item fat case: $X$ is ${100 \times 10,000}$, with columns drawn
  from the compound symmetric Gaussian distribution having correlation
  0.5; 
\item tall case: $X$ is ${10,000 \times 100}$, with columns drawn
  from the compound symmetric Gaussian distribution having correlation 
  0.5;  
\item lower triangular case: $X$ is ${500 \times 500}$, a lower
  triangular matrix of 1s [the lasso problem here is
  effectively a reparametrization of the 1-dimensional fused lasso
  problem \citep{fusedlasso}]; 
\item diabetes data case: $X$ is ${442 \times 10}$, the diabetes
  data set studied in \citet{lars}.
\end{itemize}
As is seen in the plot, the agreement with uniform is very strong.

In their proposed {\it covariance test}, \citet{covtest} show that
under the global null hypothesis $H_0 : \beta_0 = 0$,
\begin{equation}
\label{eq:covtest}
\lambda_1 (\lambda_1 - \lambda_2)
\overset{d}{\rightarrow} \mathrm{Exp}(1) \quad
\text{as} \quad n,p \rightarrow \infty, 
\end{equation}
assuming standardized predictors, $\|X_j\|_2=1$ for $j=1,\ldots p$,
independent errors in \eqref{eq:ynormal1} with unit marginal variance, 
$\Sigma=I$, and a condition to ensure that $\lambda_2$ diverges to
$\infty$ at a sufficient rate.  %The same result can be easily obtained
%from \eqref{eq:lassopval2}, by sending $\lambda_2 \rightarrow \infty$
%at a rate such that $\lambda_1/\lambda_2 \rightarrow 1$ and $\lambda_2-\lambda_1,
%%(in probability), 
%and then applying a simple Mills ratio calculation.  

In finite samples, using $\mathrm{Exp}(1)$ as an approximation to the
distribution of the covariance test statistic seems generally 
conservative, especially for smaller values of $n$ and $p$. Figure
\ref{fig:lassopvalb} shows the empirical distribution function from
10,000 covariance test p-values, in three of the above scenarios.
[The predictors were standardized before applying the covariance test,
in all three cases; this is not necessary, as the covariance test can
be adapted to the more general case of unstandardized predictors, but
was done for simplicity, to match the form of the test as written in 
\eqref{eq:covtest}.]  Even though the idea behind the covariance test
can be conceivably extended to other regularized regression problems
(outside of the lasso setting), the $\mathrm{Exp}(1)$ approximation 
to its distribution is generally inappropriate, as we will 
see in later examples. Our test, however, naturally extends to
general regularization settings, allowing us to attack
problems with more complex penalties such as the group lasso and
nuclear norm penalties. 
%We leave a rigorous study of the covariance
%test distribution for general penalties, such as the group lasso and
%nuclear norm penalties, to future work.

The rest of this paper is organized as follows.
In Section \ref{sec:generalpen}, we describe the general framework for   
regularized regression problems that we consider, and a corresponding
global null hypothesis of interest; we also state our main result,
Theorem \ref{thm:mainresult}, which gives an exact p-value for this
null hypothesis. The next two sections are then dedicated to proving 
Theorem \ref{thm:mainresult}.  Section \ref{sec:character}
characterizes the global maximizer \eqref{eq:maximizer} in terms of
the related Gaussian process and its gradient.  Section
\ref{sec:kacrice} applies the Kac-Rice formula to derive the joint
distribution of the maximum value of the process \eqref{eq:support}
and its maximizer \eqref{eq:maximizer}, which is ultimately used to
derive the (uniform) distribution of our proposed test.  In Section
\ref{sec:examples} we broadly consider practicalities associated with  
constructing our test statistic, revisit the lasso problem, and
examine the group lasso, principal components, and matrix completion
problems as well.  Section \ref{sec:linfrac} discusses the details
of the computation of the quantities $\V^-_{\eta^*}$ and
$\V^+_{\eta^*}$ needed for our main result.  We empirically
investigate the null distribution of our test statistic under
non-Gaussian errors in Section \ref{sec:nongaussian}, and end with a
short discussion in Section \ref{sec:discussion}.

\section{General regularized regression problems}
\label{sec:generalpen}

We examine a class of regularized least squares problems of the form 
\begin{equation}
\label{eq:genprob}
\hbeta \in \argmin_{\beta \in \real^p} \,
\frac{1}{2} \|y-X\beta\|^2_2 + \lambda\cdot\pen(\beta),
\end{equation}
with outcome $y\in\real^n$, predictor matrix $X\in\real^{n\times p}$, 
and regularization parameter $\lambda \geq 0$.  We assume that the
penalty function $\pen$ satisfies
\begin{equation}
\label{eq:seminorm}
\pen(\beta) = \max_{u \in \dualball}\, u^T\beta,
\end{equation}
where $\dualball \subseteq \real^p$ is a convex set,
i.e., $\pen$ is the support function of $\dualball$.  This
serves as a very general penalty, as we can represent any seminorm
(and hence any norm) in this form with the proper choice of set
$\dualball$. In this work, we will use the abuse of notation
of calling \eqref{eq:seminorm} a semi-norm (that is, we do not
require symmetry of semi-norms).  We note that the solution 
\smash{$\hbeta=\hbeta_\lambda$} above is
not necessarily unique (depending on the matrix $X$ and set
$\dualball$) and the element notation used in  
\eqref{eq:genprob} reflects this.  A standard calculation,
which we give in Appendix \ref{app:uniqueness}, shows that 
the fitted value \smash{$X\hbeta$} is always unique, and hence so is
\smash{$\pen(\hbeta)$}.  

Now define
\begin{equation*}
\lambda_1 = \min\{\lambda \geq 0 : 
\pen(\hbeta_\lambda) =0 \}.
\end{equation*}
This is the smallest value of $\lambda$ for which the penalty term in 
\eqref{eq:genprob} is zero; any smaller value of the tuning
parameter returns a non-trivial solution, according to the penalty. 
A straightforward calculation involving subgradients, which we 
give in Appendix \ref{app:genlam1}, shows that
\begin{equation}
\label{eq:genlam1}
\lambda_1 = \dualpen \big(X^T(I-P_{X\dualball^\perp}) y\big),
\end{equation}
where $\dualpen$ is the dual seminorm of $\pen$, i.e.,
$\dualpen(\beta) = \max_{v \in \dualball^{\circ}} v^T \beta$, the
support function of the polar set $\dualball^\circ$ of
$\dualball$. This set can be defined as 
$\dualball^\circ = \{v : v^Tu \leq 1 \;\,\text{for all}\;\, u \in
\dualball\}$, or equivalently, 
\begin{equation*}
\dualball^{\circ} = \{v: \pen(v) \leq 1\},
\end{equation*}
the unit ball in $\pen$.  Furthermore, in
\eqref{eq:genlam1}, we use $P_{X\dualball^\perp}$ to denote the
projection operator onto the linear subspace    
\begin{equation*}
X\dualball^\perp = X \{v : v \perp \dualball\} \subseteq \real^n.
\end{equation*}
We recall that for the lasso problem \eqref{eq:lasso}, the penalty
function is $\pen(\beta)=\|\beta\|_1$, so
$\dualpen(\beta)=\|\beta\|_\infty$; also $\dualball = \{ u :
\|u\|_\infty \leq 1\}$, which means that $P_{X\dualball^\perp}=0$,
and hence $\lambda_1 = \|X^T y\|_\infty$, as claimed in Section 
\ref{sec:lasso}.  

Having just given an example of a seminorm in which 
$\dualball$ is of full dimension $p$, so that 
$\dualball^\perp=\{0\}$,
we now consider one in which $\dualball$ has dimension less 
than $p$, so that $\dualball^\perp$ is nontrivial.
In a generalized lasso problem \citep{genlasso1}, the penalty is
$\pen(\beta)=\|D\beta\|_1$ for some chosen penalty matrix 
$D \in \real^{m\times n}$. 
In this case, it can be shown that the dual seminorm is
$\dualpen(\beta) = \min_{D^T z = \beta} \|z\|_\infty$. 
Hence $\dualball=\{u : \min_{D^T z = u} \|z\|_\infty \leq 1\}$, 
and $\dualball^\perp=\nul(D)$, the null space of $D$. In many interesting 
cases, this null space is nontrivial; e.g., if $D$ is the fused lasso 
penalty matrix, then its null space is spanned by the vector of all
1s.
In fact, the usual form of the LASSO \cite{tibshirani:lasso} is 
a seminorm:
$$
\minimize_{\beta \in \real^p, \gamma \in \real} \frac{1}{2} \|y-(\gamma 1+X\beta)\|^2_2 + \lambda \|\beta\|_1, \qquad y \sim N(X\beta, \sigma^2 I).
$$
Of course, the above problem can be solved by solving
$$
\minimize_{\beta \in \real^p} \frac{1}{2} \|z-PX\beta\|^2_2 + \lambda \|\beta\|_1, \qquad z = Py \sim N(X\beta, \sigma^2 P), P = I - \frac{1}{n} 11^T.
$$
Both of these problems fit into the framework \eqref{eq:genprob}.

\subsection{A null hypothesis}
\label{sec:nullhyp}

As in the lasso case, we assume that $y$ is
generated from the normal model
\begin{equation}
\label{eq:ynormal2}
y \sim N(X\beta_0,\Sigma),
\end{equation}
with $X$ considered fixed.  We are interested in the distribution
of \eqref{eq:support} in order to test the following hypothesis:   
\begin{equation}
\label{eq:null1}
H_0: \pen(\beta_0)=0.
\end{equation}
This can be seen a global null hypothesis, a test of 
whether the true underlying coefficient
vector $\beta_0$ has a trivial structure, according to the
designated penalty function $\pen$. 

Assuming that the set $\dualball$
contains 0 in its relative interior, we have $\pen(\beta)=0
\Longleftrightarrow P_\dualball \beta = 0$, where $P_\dualball$ denotes
the projection matrix onto $\spa(\dualball)$.  Therefore we can
rewrite the null hypothesis \eqref{eq:null1} in a more transparent
form, as
\begin{equation}
\label{eq:null2}
H_0 : P_\dualball \beta_0 = 0.  
\end{equation}
Again, using the lasso problem \eqref{eq:lasso} as a reference
point, we have $\spa(\dualball)=\real^p$ for this problem, so the
above null hypothesis reduces to $H_0: \beta_0=0$, as in 
Section \ref{sec:lasso}.  In general, the null hypothesis \eqref{eq:null2}
tests $\beta_0 \in \dualball^\perp$, the orthocomplement of 
$\dualball$. 
% But in the generalized lasso problem \citep{tibshirani:taylor:path},
% for example, the penalty function is $\pen(\beta)=\|D\beta\|_1$ for a
% matrix $D\in\real^{m\times p}$, and here $\|D\beta\|_1=0
% \Longleftrightarrow \beta \in \nul(D)$, so the null hypothesis in
% \eqref{eq:null1} becomes $H_0 : \beta_0 \in \nul(D)$. 

Recalling that
\begin{equation*}
\lambda_1 = \max_{v  \in \dualball^\circ} \, 
v^T X^T(I-P_{X\dualball^\perp}) y,
\end{equation*}
one can check that, under $H_0$, the quantity
$\lambda_1$ is precisely of the form \eqref{eq:support}, with
$\K=\dualball^\circ$, $\epsilon = X^T  (I-P_{X\dualball^\perp})y$, and
$\Theta=\Cov(\epsilon)= X^T(I-P_{X\dualball^\perp})\Sigma 
(I-P_{X\dualball^\perp}) X$ [as
$\Ee(\epsilon)=X^T(I-P_{X\dualball^\perp})X\beta_0=0$ when $\beta_0
\in \dualball^\perp$].

% We are interested in the distribution \eqref{eq:maximizer} in order
% to test the following hypothesis 
% \begin{equation}
% \label{eq:null}
% H_0:P_D\beta_0=0.
% \end{equation}
% where $P_{D}$ is projection orthogonal to
% $$
% \text{span}(v: v \perp \text{span}(D)) 
% $$
% \begin{equation}
% \label{eq:null:test}
% \lambda_1 = h_{B_{\pen}(1)}(X^TP_{D,X}\epsilon)
% \end{equation}
% \RTcomment{SHOULD $\epsilon$ be $y$ in this expression?}
% and note that this is of the form \eqref{eq:maximizer} above with $\K=B_{\pen}(1)$ and 
% $$\Theta= X^TP_{D,X}\Sigma P_{D,X}X.$$

% In this setting, our assumption that $\K$ is bounded on $L=\text{span}(\K)$ is equivalent to $B_{\pen}(1)$ is bounded. In Appendix \ref{app:polar} we show that a sufficient condition for this is for 0 to be contained in the relative interior of $\dualball$. In turn, this implies that 
% $$
% P_D\beta=0 \iff \pen(P_D\beta)=0.
% $$

\subsection{Statement of main result and outline of our approach}
\label{sec:mainresult}

We now state our main result.  

\begin{theorem}[Main result: Kac-Rice test]
\label{thm:mainresult}
Consider the general regularized regression problem in
\eqref{eq:genprob}, with
$\pen(\beta)=\max_{u\in\dualball} u^T \beta$ for
a closed, convex set $\dualball \subseteq \real^p$ containing 0
in its relative interior.  Denote  
$\K = \dualball^\circ = \{v:\pen(v) \leq 1\}$, the polar set of
$\dualball$, and assume that $\K$ can be stratified into
pieces of different dimensions, i.e., 
\begin{equation}
\label{eq:strata}
\K = \bigcup_{j=0}^p \partial_j \K,
\end{equation}
where $\partial_0\K, \ldots \partial_p \K$ are smooth disjoint
manifolds of dimensions $0,\ldots p$, respectively. 
Assume also assume that the process
\begin{equation}
\label{eq:f}
f_\eta = \eta^T X^T (I-P_{X\dualball^\perp}) y, \qquad \eta \in  
\K, 
\end{equation}
is Morse for almost every $y \in \real^n$.  Finally, assume that
$y \in \real^n$ is drawn from the normal distribution in
\eqref{eq:ynormal2}.   

Now, consider testing the null hypothesis 
$H_0 : \pen(\beta_0)=0$ [equivalently, $H_0: P_\dualball \beta_0=0$,
since we have assumed that $0 \in \relint(\dualball)$].  Define 
$\Lambda_\eta = G^{-1}_\eta H_\eta$ for $G_\eta,H_\eta$ as in
\eqref{eq:g}, \eqref{eq:h},  $\V^-_\eta,\V^+_\eta$ as in
\eqref{eq:vplus}, \eqref{eq:vminus}, and $\sigma^2_\eta$ as in
\eqref{eq:sigma}.  Finally, let $\eta^*$ denote the almost sure
unique maximizer of the process $f_\eta$ over $\K$, 
\begin{equation*}
\eta^* = \argmax_{\eta \in \K} \, f_\eta,
\end{equation*} 
and let $\lambda_1=f_{\eta^*}$ denote the first knot in the solution
path of problem \eqref{eq:genprob}.  Then under $H_0$,
\begin{equation}
\label{eq:sunif1}
\frac{\displaystyle \int_{\lambda_1}^{\V^+_{\eta^*}}
\det(\Lambda_{\eta^*}+z I) \phi_{\sigma^2_{\eta^*}} (z) \, dz}
{\displaystyle \int_{\V^-_{\eta^*}}^{\V^+_{\eta^*}}
\det(\Lambda_{\eta^*}+z I) \phi_{\sigma^2_{\eta^*}} (z) \, dz}
\sim \mathrm{Unif}(0,1),
\end{equation}
where $\phi_{\sigma^2}$ denotes the density function of a normal
random variable with mean 0 and variance $\sigma^2$. 
\end{theorem}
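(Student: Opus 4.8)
The plan is to reduce the uniformity claim to a general fact about the Kac-Rice representation of the joint law of $(\lambda_1,\eta^*)$, and then apply the probability-integral transform conditionally. The key realization is that \eqref{eq:sunif1} is precisely the conditional CDF of $\lambda_1$ given the event $\{\eta^* = $ some fixed point on a stratum$\}$ and given the "nuisance" directions of the process, truncated to the interval $[\V^-_{\eta^*},\V^+_{\eta^*}]$. So the strategy has three stages: (i) characterize $\eta^*$ via stationarity/KKT conditions of the restricted process $f_\eta$ on the stratum $\partial_j\K$ containing it, as promised in Section~\ref{sec:character}; (ii) use the Kac-Rice formula to write the joint density of the maximum value $\lambda_1 = f_{\eta^*}$, the location $\eta^*$, and the relevant derivatives of $f_\eta$ at $\eta^*$, which is where the Hessian-type factor $\det(\Lambda_{\eta^*}+zI)$ arises (it is the Jacobian/Kac-Rice weight coming from the expected absolute determinant of the Hessian of the restricted field, conditioned on a stationary point of the appropriate index); (iii) isolate from this joint density the one-dimensional conditional law of $\lambda_1$ and show it is, after truncation to $[\V^-_{\eta^*},\V^+_{\eta^*}]$, a Gaussian density $\phi_{\sigma^2_{\eta^*}}$ times the polynomial-in-$z$ factor $\det(\Lambda_{\eta^*}+zI)$. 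Once that conditional density is identified, \eqref{eq:sunif1} is simply $F(\lambda_1)$ for $F$ the conditional CDF, hence $\mathrm{Unif}(0,1)$ by the probability integral transform, and since this holds conditionally on the stratum index, the maximizer location, and the nuisance directions, it holds unconditionally.

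The first step is to set up the decomposition $\epsilon = X^T(I-P_{X\dualball^\perp})y$ and write $f_\eta = \eta^T\epsilon$, noting $\Ee(\epsilon)=0$ under $H_0$ and $\Cov(\epsilon)=\Theta$. For a point $\eta$ in the interior of stratum $\partial_j\K$, the maximizer condition forces the gradient of $f_\eta$ along the stratum to vanish and the Hessian (second fundamental form contribution plus the ambient Hessian, which vanishes since $f$ is linear) to be negative semidefinite; the normal-direction KKT conditions give inequality constraints that are exactly what carves out $\V^-_{\eta^*}$ and $\V^+_{\eta^*}$ as the admissible range of the value $f_{\eta^*}$ once the tangential derivatives are fixed. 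Here is where the matrices $G_\eta, H_\eta$ enter: conditioning on the event that $\eta$ is a local max on its stratum, the value $f_\eta$ and the tangential gradient are jointly Gaussian, and regressing out the gradient yields a conditional Gaussian with mean $G_\eta^{-1}H_\eta \cdot(\text{something})$ and variance $\sigma^2_\eta$ — so that $\Lambda_\eta = G_\eta^{-1}H_\eta$ appears inside the determinant as the conditional-mean shift of the Hessian eigenvalues.

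The second and central step is the Kac-Rice computation: by the Kac-Rice / Morse-theoretic formula (as in \citet{RFG}), the expected number of local maxima of $f_\eta$ on $\partial_j\K$ with value in a set $A$ and with prescribed Hessian equals an integral over the stratum of $\Ee[\,|\det \nabla^2_{\partial_j} f_\eta|\,\indic\{f_\eta\in A,\ \nabla^2\prec 0\}\mid \nabla_{\partial_j}f_\eta=0\,]\,p_{\nabla f}(0)$, and because $f$ is Gaussian, the conditional expectation of the absolute Hessian determinant given the value $f_\eta = z$ factors as $\det(\Lambda_\eta + zI)$ times the Gaussian density of the value; summing over strata and using that the global maximizer is almost surely unique (so "expected number" collapses to "probability"), one gets that the joint law of $(\eta^*, \lambda_1)$ has density proportional, in the $\lambda_1 = z$ variable, to $\det(\Lambda_{\eta^*}+zI)\,\phi_{\sigma^2_{\eta^*}}(z)$ on the interval $[\V^-_{\eta^*},\V^+_{\eta^*}]$. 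Normalizing over that interval gives the denominator of \eqref{eq:sunif1}, and integrating from $\lambda_1$ to $\V^+_{\eta^*}$ gives the (upper) conditional CDF; the probability integral transform finishes the argument.

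\textbf{The main obstacle} I anticipate is the rigorous justification of the Kac-Rice formula on a \emph{stratified} domain with boundary — ordinary Kac-Rice applies to smooth fields on manifolds without boundary, so one must either argue that the global maximizer lies almost surely in the relative interior of a unique stratum (using the Morse assumption on $f_\eta$ and smoothness of the strata $\partial_j\K$), handle each stratum separately with its own Kac-Rice formula including the correct conditioning on the active normal-cone constraints, and then check that the boundary/measure-zero overlaps between strata contribute nothing. A secondary technical point is verifying that the conditional expectation of $|\det \nabla^2 f_\eta|$ given $\{f_\eta = z,\ \nabla f_\eta = 0\}$ really simplifies to the \emph{unsigned} polynomial $\det(\Lambda_\eta + zI)$ without absolute values — this works because, restricted to the event that $\eta$ is a local maximum, the Hessian is negative definite so its determinant has a fixed sign $(-1)^{\dim}$, which cancels against the orientation factor; making this precise requires carefully pairing the index constraint with the sign of the determinant. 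Everything else — the Gaussian conditioning computations yielding $G_\eta, H_\eta, \sigma^2_\eta$, and the final probability-integral-transform step — is routine once the Kac-Rice representation is in hand.
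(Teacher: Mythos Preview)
Your three-stage plan (characterize the maximizer via KKT-type conditions, apply Kac-Rice on each stratum to extract the joint law, then condition and apply the probability integral transform) is exactly the route the paper takes in Sections~\ref{sec:character} and~\ref{sec:kacrice}, culminating in Lemma~\ref{lem:mainresult}. Your identification of the two delicate points---handling the stratified boundary in Kac-Rice and controlling the sign of the Hessian determinant---matches the paper's concerns as well.

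One clarification: your description of where $G_\eta$ and $H_\eta$ come from is slightly off. They do \emph{not} arise from regressing the value $f_\eta$ on the tangential gradient; rather they arise from decomposing the \emph{Hessian} $-\nabla^2 f_{|T_\eta\K}$ into a piece $R_\eta$ linear in the gradient, a piece $G_\eta \tf^\eta_\eta$ linear in the (gradient-residualized) value, and a remainder $-H_\eta$ independent of both. At a critical point $R_\eta=0$, and factoring out $G_\eta$ gives $\det(-\nabla^2 f_{|T_\eta\K}) = \det(G_\eta)\det(\Lambda_\eta + zI)$ with $z=\tf^\eta_\eta$. The device you are groping toward---``regressing out the gradient''---is implemented in the paper via the auxiliary process $\tf^\eta_z = f_z - \Ee_0(f_z \mid \nabla f_{|T_\eta\K})$, which makes the value $\tf^\eta_\eta$ independent of the gradient and hence also independent of $(\V^-_\eta,\V^+_\eta,\V^0_\eta, H_\eta)$; this independence is what lets you read off the conditional density of $f_{\eta^*}$ as proportional to $\det(\Lambda_\eta+zI)\phi_{\sigma^2_\eta}(z)$ on $[\V^-_\eta,\V^+_\eta]$. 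Without this auxiliary process your step~(i) will be awkward to execute, so you should introduce it explicitly.
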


The quantity \eqref{eq:sunif1} is the {\em Kac-Rice pivot}
evaluated at $\mu=0$. Lemma \ref{lem:mainresult} shows it is
a pivotal quantity for the mean $\mu$ near $\eta^*$, derived via the
Kac-Rice formula. 
Here we give a rough explanation of the result in \eqref{eq:sunif1},
and the approach we take to prove it in Sections \ref{sec:character}
and \ref{sec:kacrice}.  The next section, Section \ref{sec:assump},
discusses the assumptions behind Theorem \ref{thm:mainresult}; in
summary, 
the assumption that $\K$ separates as in \eqref{eq:strata} allows us
to apply the Kac-Rice formula to each of its strata, and the Morse
assumption on the process $f_\eta$ in \eqref{eq:f} ensures the
uniqueness of its maximizer $\eta^*$.  These are very 
weak assumptions, especially considering the strength of the exact, 
non-asymptotic conclusion in \eqref{eq:sunif1}.   

Our general approach
is based on finding an implicit formula for $\Pp(\lambda_1 > t)$
under the null hypothesis $H_0$, where $\lambda_1$ is the first
knot in the solution path of problem
\eqref{eq:genprob} and can be written as
\begin{equation*}
\lambda_1 = \max_{\eta \in \K} \, f_\eta,
\end{equation*}
where $f_\eta= \eta^T X^T (I-P_{X\dualball^\perp}) y$, the process in
\eqref{eq:f}.  Our representation for the tail probability of
$\lambda_1$ has the form
\begin{equation}
\label{eq:implicit}
\Pp(\lambda_1 > t) = \Ee\big(\Qq(\indic_{(t,\infty)})\big).
\end{equation}
Here $\Qq=\Qq_{\eta^*}$ is a random distribution function,
$\indic_{(t,\infty)}$ is the indicator function for the
interval $(t,\infty$), and $\eta^*$ is a maximizer of the process 
$f_\eta$.  This maximizer $\eta^*$, almost surely unique by 
the Morse assumption, satisfies 
\begin{equation*}
\eta^* \in 
\partial \dualpen \big( X^T (I-P_{X\dualball^\perp})) y\big)
\subseteq \K,
\end{equation*}
with $\partial \dualpen$ the subdifferential of the seminorm 
$\dualpen$.  Under the assumption that $\K =
\cup_{j=0}^p \partial_j \K$, the main tool we invoke is the Kac-Rice
formula \citep{RFG}, which essentially enables us to compute the
expected number of global maximizers occuring in each stratum
$\partial_j \K$.

\begin{remark}
Note that for almost every realization, under the Morse assumption, there is generically only
one maximizer overall and hence the number of them is either 0 or 1.
We use the term "number of global maximizers" when applying the Kac-Rice formula
as it applies to counting different types of points. In our applications of it, however, there is only
ever 0 or 1 such points. Similar arguments were used to establish
the accuracy of the expected Euler characteristic approximation for the distribution 
of the global maximum of a smooth Gaussian process in \cite{validity}.
\end{remark}

This leads to the distribution
of $\lambda_1$, in Theorem \ref{thm:dist}, as well as the 
representation in \eqref{eq:implicit}, with $\Qq$ given in an explicit
form. Unfortunately, computing tail
probabilities $\Pp(\lambda_1 > t)$ of this distribution involve
evaluating some complicated integrals over $\K$ that depend on
$X,\Sigma$, and hence the quantity $\lambda_1$ as a test statistic
does not easily lend itself to the computation of p-values.  We
therefore turn to the survival function $\Ss_{\eta^*}$ associated with
the measure $\Qq_{\eta^*}$, and our main result is that, when
carefully evaluated, this (random) survival function can be used to
derive a test of $H_0$, as expressed in \eqref{eq:sunif1} in Theorem
\ref{thm:mainresult} above.   

% The general approach we take is to find an implicit formula for 
% \begin{equation}
% \label{eq:implicit}
% \Pp \left(\lambda_1 > t \right) = \Ee \left(\Qq(t)\right)
% \end{equation}
% for some random distribution $\Qq=\Qq_{\eta^*}$
% where 
% $$
% \eta^* \in \partial \dualpen(X^Ty) \subset B_{\pen}
% $$
% achieves the supremum in \eqref{eq:maximizer}. That is,
% $$
% (X\eta^*)^Ty = \dualpen(X^Ty)=\lambda_1.
% $$
% The dependence on $\eta^*$ depends on the structure of $B_{\pen}$
% near $\eta^*$. 

% The main tool we use is the Kac-Rice formula \citep{RFG} which allows
% us to compute the expected number of critical points 
% of various types. For the problem we consider, under the Morse
% assumption, there is a (unique, global)  maxima. This allows us to
% compute the distribution of $\lambda_1$ in Theorem \ref{thm:dist},
% though it depends on some 
% complex integrals over $\K$ depending on $X, \Sigma$. Instead of
% attempting to evaluate these integrals, we consider 
% a "gap" between $\lambda_1=f_{\eta^*}$ and a strictly smaller quantity
% $\V^-_{\eta^*}$. This quantity, along with a second strictly larger
% quantity $\V^+_{\eta^*}$ and the curvature of $\K$ near $\eta^*$
% yields the implicit representation above, as well as an explicit form
% for 
% $\Qq_{\eta^*}$. 

% The distribution $\Qq_{\eta^*}$ has a survival function
% $\Ss=\Ss_{\eta^*}$, and our main result, Theorem \ref{thm:gap}, is
% that this 
% (random) survival function can be used to derive a test of $H_0$. That is,
% \begin{equation}
% \label{eq:main}
% \Ss(\lambda_1) \overset{H_0}{\sim} \mathrm{Unif}(0,1).
% \end{equation}

\subsection{Discussion of assumptions}
\label{sec:assump}

In terms of the assumptions of Theorem \ref{thm:mainresult}, we
require that $\dualball$ contains 0 in its relative interior so that
we can write the null hypothesis in the equivalent form $H_0 : P_C
\beta_0 = 0$, which makes the process $f_\eta$ in \eqref{eq:f} have
mean zero under $H_0$.  We additionally assume that $\dualball$ is
closed in order to guarantee that $f_\eta$ has a 
well-defined (finite) maximum over $\eta \in \K =
\dualball^\circ$.  See Appendix \ref{app:welldefined}.
% \footnote{A nice
% side consequence of assuming that $0 \in \relint(\dualball)$ is that,
% under this condition, the null hypothesis $H_0 : \pen(\beta_0)=0$ can
% be expressed in a more explicit form, as in
% \begin{equation*}
% H_0 : P_\dualball\beta_0=0,
% \end{equation*}
% where $P_\dualball$ is the projection operator onto $\spa(\dualball)$.  
% For any full-dimensional set $\dualball$, i.e., a set for which
% $\dim(\dualball)=p$, we have $P_\dualball=I$ and the null hypothesis
% is simply $H_0 : \beta_0=0$.  This covers, e.g., the lasso, group
% lasso, and nuclear norm penalties.  In general, the null hypothesis
% tests $\beta_0 \in \dualball^\perp$, the orthogonal complement of 
% $\dualball$.}   

Apart from these rather minor assumptions on $\dualball$, the main   
requirements of the theorem are: the polar set
$\dualball^\circ=\K$ can be stratified as in 
\eqref{eq:strata}, the process $f_\eta$ in \eqref{eq:f} is Morse, and
$y$ follows the normal distribution in \eqref{eq:ynormal2}. Overall,
these are quite weak assumptions.  The first assumption, on $\K$
separating as in \eqref{eq:strata}, eventually permits us to apply to
the Kac-Rice formula to each stratum $\partial_j \K$.  We remark  
that many convex (and non-convex) sets possess   
such a decomposition; see \citet{RFG}.  In particular, we note that
such an assumption does not limit our consideration to polyhedral
$\K$: a set can be stratifiable but still have a boundary with
curvature (e.g., as in $\K$ for the group lasso and nuclear norm
penalties).   

Further, the property of being a Morse function is truly generic;  
again, see \citet{RFG} for a discussion of Morse functions on
stratified spaces.  If $f_\eta$ is Morse for almost every $y$, then
its maximizers are almost surely isolated, and the convexity of $\K$
then implies that $f_\eta$ has an almost surely unique
maximizer $\eta^*$. From the form of our particular process $f_\eta$
in \eqref{eq:f}, the assumption that 
$f_\eta$ is Morse can be seen as a restriction on the predictor matrix
$X$ (or more generally, how $X$ interacts with the set $\dualball$).
For most problems, this only rules out trivial choices of $X$.  In the
lasso case, for example, recall that 
$f_\eta = (X\eta)^T y$ and $\K$ is equal to the unit $\ell_1$ ball, so   
\smash{$f_\eta^* = \|X^T y\|_\infty$}, and the
Morse property requires \smash{$|X_j^T y|$}, 
$j=1,\ldots p$ to be unique for almost every $y \in \real^p$.  This
can be ensured by taking $X$ with columns in general position [a weak  
condition that also ensures uniqueness of the lasso solution; see
\citet{lassounique}].  

Lastly, the assumption of normally distributed
errors in the regression model \eqref{eq:ynormal2} is important for 
the work that follows in Sections \ref{sec:character} and
\ref{sec:kacrice}, which is based on Gaussian process theory.  Note
that we assume a known covariance matrix $\Sigma$, but we allow for a 
dependence between the errors (i.e., $\Sigma$ need not be diagonal).   
Empirically, the (uniform) distribution of our test statistic under
the null hypothesis appears
to quite robust against non-normal errors in many cases of interest;
we present such simulation results in Section \ref{sec:nongaussian}.

\subsection{Notation}
\label{sec:notation}

Rewrite the process $f_\eta$ in
\eqref{eq:f} as   
\begin{equation*}
f_\eta = 
\eta^T X^T (I-P_{X\dualball^\perp}) (I-P_{X\dualball^\perp}) y = 
\eta^T \tX^T \ty, \qquad \eta \in \K,
\end{equation*}
where \smash{$\tX = (I-P_{X\dualball^\perp}) X$}
and \smash{$\ty = (I-P_{X\dualball^\perp}) y$}.  The
distribution of \smash{$\ty$} is hence 
\smash{$\ty \sim N(\tX \beta_0,\tSigma)$}, where 
\smash{$\tSigma = 
(I-P_{X\dualball^\perp}) \Sigma (I-P_{X\dualball^\perp})$}.  
Furthermore, under the null hypothesis 
$H_0 : P_\dualball \beta_0 = 0$, we have \smash{$\ty \sim
N(0,\tSigma)$}.  For convenience, in Sections \ref{sec:character}
and \ref{sec:kacrice}, we will
drop the tilde notation, and write \smash{$\ty,\tX,\tSigma$} as simply 
$y,X,\Sigma$, respectively.  To be perfectly explicit, this means that 
we will write the process $f_\eta$ in \eqref{eq:f} as 
\begin{equation*}
f_\eta = \eta^T X^T y, \qquad \eta \in \K,
\end{equation*}
where $y \sim N(X\beta_0,\Sigma)$, and the null hypothesis is
$H_0 : y \sim N(0,\Sigma)$.
Notice that when $\spa(\dualball)=\real^p$, we have exactly
\smash{$\ty=y,\, \tX=X,\, \tSigma=\Sigma$}, since
$P_{X\dualball^\perp}=0$.  However, we reiterate that replacing 
\smash{$\ty,\tX,\tSigma$} by $y,X,\Sigma$ in Sections
\ref{sec:character} and \ref{sec:kacrice}
is done purely for notational convenience, and the reader should bear
in mind that the arguments themselves do not portray any loss of 
generality. 

We will write $\Ee_0$ to emphasize that an expectation is taken under
the null distribution $H_0 : y \sim N(0,\Sigma)$.

\section{Characterization of the global maximizer}
\label{sec:character}

Near any point $\eta \in \K$, the set $\K$ is well-approximated by
the support cone $S_{\eta}\K$, which is defined as the polar cone
of the normal cone $N_{\eta}\K$.   
The support cone $S_{\eta}\K$ contains a largest linear subspace---we
will refer to this $T_{\eta}\K$, the {\it tangent space} to $\K$ at
$\eta$. The tangent space plays an important role in what follows.   

\begin{remark}
\label{rem:convexity}
Until Section 6, the convexity of the parameter space
$\K$ is not necessary; we only need local convexity as described in
\citet{RFG}, i.e., we only need to assume that the support cone of
$\K$ is locally convex everywhere. This is essentially the 
same as positive reach \citep{federer}. To be clear,
while convexity is used in connecting the
Kac-Rice test to regularized regression problems in \eqref{eq:genlam1} (i.e.
it establishes an equality between left and right hand sides), the right hand side
is well-defined even if the set $C^{\circ}$ is not convex. That is, the $\K$
 in \eqref{eq:support} need not be convex. In fact, the issue of convexity
 is only important for computational purposes, not theoretical purposes. In this sense,
 this work provides an exact conditional test based on the global maximizer of a
 smooth Gaussian field on a fairly arbitrary set. This is an advance in the theory of
 smooth Gaussian fields as developed in \cite{RFG} and will be investigated
 in future work.
%Where appropriate, we point
%out what simplifications are implied by the convexity of $\K$.
\end{remark}

We study the process $f_\eta$ in \eqref{eq:f}, which we now
write as $f_\eta = \eta^T X^T y$ over $\eta \in \K$, where $y \sim
N(X\beta_0,\Sigma)$, and with the null hypothesis $H_0 : y \sim 
N(0,\Sigma)$ (see our notational reduction in Section
\ref{sec:notation}).  We proceed as in
Chapter 14 of \citet{RFG}, with an important difference being that 
here the process $f_\eta$ does not have constant variance.  Aside from
the statistical implications of this work that have to do with
hypothesis testing, another goal of this 
paper is to derive analogues of the results in \citet{validity,RFG}
for Gaussian processes with nonconstant  
variance.  For each $\eta \in \K$, we define a modified process
\begin{equation*}
\tf^{\eta}_z = f_z - z^T\alpha_{\eta,X,\Sigma}(\nabla
f_{|T_{\eta}\K}), \qquad z \in \K,
\end{equation*}
where $\alpha_{\eta,X,\Sigma}(\nabla f_{|T_{\eta}\K})$ is the vector
that, under $H_0 : y \sim N(0,\Sigma)$, computes the expectation of
$f_z$ given $\nabla f_{|T_{\eta}\K}$, the gradient restricted to
$T_{\eta}\K$, i.e., 
\begin{equation*}
z^T\alpha_{\eta,X,\Sigma}(\nabla f_{|T_{\eta}\K}) = 
\Ee_0(f_z \big| \nabla f_{|T_{\eta}\K}).
\end{equation*} 
To check that such a representation is possible, suppose that the
tangent space $T_\eta \K$ is $j$-dimensional, and let $V_\eta \in
\real^{p\times j}$ be a 
matrix whose columns form an orthonormal basis for $T_\eta\K$.  Then   
\smash{$\nabla f_{|T_\eta\K} = V_\eta V_\eta^T X^T y$}, and a
simple calculation using the properties of conditional expectations for
jointly Gaussian random variables shows that 
\begin{equation*}
\Ee_0(f_z\big|  \nabla f_{|T_{\eta}\K}) = z^T X^T P_{\eta,X,\Sigma} y,
\end{equation*}
where
\begin{equation}
\label{eq:projection}
P_{\eta,X,\Sigma} = \Sigma X V_\eta(V_\eta^T X^T \Sigma
X V_\eta)^\dagger V_\eta^T X^T,
\end{equation}
the projection 
onto $XV_\eta$ with respect to $\Sigma$ (and \smash{$A^\dagger$}
denoting the Moore-Penrose pseudoinverse of a matrix $A$). 
Hence, we gather that
\begin{equation*}
\alpha_{\eta,X,\Sigma}(\nabla  f_{|T_{\eta}\K}) = X^T
P_{\eta,X,\Sigma} y,
\end{equation*}
and our modified process has the form 
\begin{equation}
\label{eq:tildef}
\tf^{\eta}_z = f_z - z^TX^TP_{\eta,X,\Sigma} y 
= (Xz)^T(I - P_{\eta, X,\Sigma}) y.
\end{equation}

The key observation, as in \citet{validity} and \citet{RFG}, is that
if $\eta$ is a critical point, i.e., $\nabla f_{|T_\eta\K}=0$, then  
\begin{equation}
\label{eq:keyobs}
\tf^{\eta}_z=f_z \qquad \text{for all}\;\, z \in \K.
\end{equation}
Similar to our construction of $\alpha_{\eta,X,\Sigma}(\nabla
f_{|T_\eta\K})$, we define $C_{X,\Sigma}(\eta)$ such that 
\begin{align}
\nonumber
\Ee_0(\tf^{\eta}_z \big| \tf^{\eta}_{\eta})
&=
\frac{z^TX^T(I-P_{\eta,X,\Sigma})\Sigma(I-P_{\eta,X,\Sigma}^T)X\eta}
{\eta^TX^T(I-P_{\eta,X,\Sigma}) \Sigma(I - P_{\eta,X,\Sigma}^T)X\eta}  
\cdot \tf^{\eta}_{\eta} \\ 
\label{eq:c}
& = z^T C_{X,\Sigma}(\eta) \cdot \tf^{\eta}_{\eta}.
\end{align}
and after making three subsequent definitions, 
\begin{align}
\label{eq:vminus}
\V^-_{\eta} &= \max_{z \in \K:\, z^T C_{X,\Sigma}(\eta) < 1} \,
\frac{\tf^{\eta}_z - z^T C_{X,\Sigma}(\eta) \cdot \tf^{\eta}_{\eta}}
{1 - z^T C_{X,\Sigma}(\eta)}, \\
\label{eq:vplus}
\V^+_{\eta} &= \min_{z \in \K:\, z^T C_{X,\Sigma}(\eta) > 1} \,
\frac{\tf^{\eta}_z - z^T C_{X,\Sigma}(\eta) \cdot \tf^{\eta}_{\eta} }
{1 - z^T C_{X,\Sigma}(\eta)}, \\ 
\label{eq:vzero}
\V^0_{\eta} &=  \max_{z \in \K:\, z^T C_{X,\Sigma}(\eta) = 1} \,
\tf^{\eta}_z - z^T C_{X,\Sigma}(\eta) \cdot \tf^{\eta}_{\eta}, 
\end{align}
we are ready to state our characterization of the global maximizer
$\eta$. 

\begin{lemma}
\label{lem:globalmax}
A point $\eta \in \K$ maximizes $f_\eta$ over a convex
set $\K$ if and only if the following conditions hold: 
\begin{equation}
\label{eq:maxcond}
\grad f_{|T_{\eta}\K} = 0, \quad
\tf^{\eta}_{\eta} \geq \V^-_{\eta}, \quad
\tf^{\eta}_{\eta} \leq \V^+_{\eta}, \quad \text{and} \quad
\V^0_{\eta} \leq 0.
\end{equation}
The same equivalence holds true even when $\K$ is only locally
convex. 
\end{lemma}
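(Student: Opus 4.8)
The plan is to exploit that $f_\eta=(X\eta)^Ty$ is linear in $\eta$, so that maximizing $f$ over $\K$ is nothing more than the first-order condition $f_z\le f_\eta$ for all $z\in\K$, and then to show that the four conditions in \eqref{eq:maxcond} are exactly this, split into a tangential part ($\grad f_{|T_\eta\K}=0$) and a normal part (the three $\V$-inequalities). First I would record the definitional equivalence: $\eta\in\K$ maximizes $f_\eta$ over $\K$ iff $f_z\le f_\eta$ for every $z\in\K$; for convex $\K$ this is the statement $X^Ty\in N_\eta\K$, but it is the inequality form that gets manipulated below.

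For the tangential part I would suppose $\eta$ is a maximizer and note that every $v$ in the tangent space $T_\eta\K$ (the largest linear subspace of the support cone $S_\eta\K$) is a limit $v=\lim_n(z_n-\eta)/t_n$ with $z_n\in\K$, $t_n\downarrow 0$; from $f_{z_n}\le f_\eta$ and linearity one gets $(X^Ty)^Tv\le 0$, and applying this to $-v\in T_\eta\K$ forces $(X^Ty)^Tv=0$. Hence $V_\eta^TX^Ty=0$, i.e.\ $\grad f_{|T_\eta\K}=0$. This is the one and only place where convexity is used, and the support-cone description of $T_\eta\K$ is equally available whenever $\K$ has positive reach \citep{federer,RFG}, which yields the final sentence of the lemma. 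Conversely, and this is the step to get right before any rewriting, whenever $\grad f_{|T_\eta\K}=0$ the identity \eqref{eq:keyobs} gives $\tf^\eta_z=f_z$ for all $z\in\K$, so on the stationary event the quantities $\V^-_\eta,\V^+_\eta,\V^0_\eta$ defined through $\tf^\eta$ in \eqref{eq:vminus}--\eqref{eq:vzero} agree with the ones defined through $f$.

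For the normal part I would work on the event $\grad f_{|T_\eta\K}=0$ (so $\tf^\eta\equiv f$) and write $c(z)=z^TC_{X,\Sigma}(\eta)$ with $C_{X,\Sigma}(\eta)$ as in \eqref{eq:c}. Using $f_\eta-c(z)f_\eta=(1-c(z))f_\eta$, the inequality $f_z\le f_\eta$ is equivalent to $f_z-c(z)f_\eta\le(1-c(z))f_\eta$; dividing according to the sign of $1-c(z)$ (flipping when it is negative) turns the family $\{f_z\le f_\eta:z\in\K\}$ into $f_\eta\ge(f_z-c(z)f_\eta)/(1-c(z))$ on $\{c(z)<1\}$, $f_\eta\le(f_z-c(z)f_\eta)/(1-c(z))$ on $\{c(z)>1\}$, and $f_z-c(z)f_\eta\le 0$ on $\{c(z)=1\}$. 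Taking the max, min, and max over these three regions and recalling $\tf^\eta=f$, these read exactly $\tf^\eta_\eta\ge\V^-_\eta$, $\tf^\eta_\eta\le\V^+_\eta$, $\V^0_\eta\le 0$, under the convention that an empty region imposes no constraint. Chaining: a maximizer satisfies $\grad f_{|T_\eta\K}=0$ and hence (by the rewriting) the three $\V$-inequalities; conversely the first condition forces $\tf^\eta=f$, under which the three $\V$-inequalities are precisely $f_z\le f_\eta$ for all $z$, i.e.\ maximality. I do not expect a deep obstacle here, since the lemma is first-order optimality repackaged; the points genuinely requiring care are performing the $\tf^\eta$-to-$f$ translation only after $\grad f_{|T_\eta\K}=0$ is established, handling the empty-region conventions for $\V^\pm_\eta$ and $\V^0_\eta$, and noting that $C_{X,\Sigma}(\eta)$ is well-defined exactly when $\tf^\eta_\eta$ has positive variance, the non-degenerate case of interest.
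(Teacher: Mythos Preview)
Your proposal is correct and follows essentially the same approach as the paper: both use that $\nabla f_{|T_\eta\K}=0$ allows replacing $\tf^\eta$ by $f$ via \eqref{eq:keyobs}, and then show that the three $\V$-inequalities are exactly the family $f_z\le f_\eta$ split according to the sign of $1-z^TC_{X,\Sigma}(\eta)$. You are more explicit than the paper about deriving $\nabla f_{|T_\eta\K}=0$ from maximality via the support-cone description (the paper simply says this is ``clearly true''), but the logical structure is identical.
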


\begin{proof}
In the forward direction ($\Rightarrow$), note that 
\smash{$\nabla f_{|T_\eta\K}=0$} implies that we can replace 
\smash{$\tf^\eta_z$} by $f_z$ (and \smash{$\tf^\eta_\eta$} by
$f_\eta$) in the definitions \eqref{eq:vplus}, \eqref{eq:vminus},  
\eqref{eq:vzero}, by the key observation \eqref{eq:keyobs}. 
As each $z\in\K$ is covered by one of the cases 
$C_{X,\Sigma}(\eta)<1$, $C_{X,\Sigma}(\eta)>1$,
$C_{X,\Sigma}(\eta)=1$, we conclude that 
\begin{equation*}
f_{\eta} \geq f_z \qquad \text{for all}\;\, z \in \K,
\end{equation*}
i.e., the point $\eta$ is a global maximizer. 

As for the reverse direction ($\Leftarrow$), when $\eta$ is the global 
maximizer of $f_\eta$ over $\K$, the first condition $\nabla
f_{|T_\eta\K}=0$ is clearly true (provided that $\K$ is convex or
locally convex), and the other three conditions follow from simple
manipulations of the inequalities   
\begin{equation*}
f_{\eta} \geq f_z \qquad \text{for all}\;\, z \in \K.
\end{equation*}
\end{proof}

\begin{remark}
The above lemma does not assume that $\K$ decomposes
into strata, or that $f_\eta$ is Morse for almost all $w$, or 
that $y \sim N(X\beta_0,\Sigma)$.  It only assumes that $\K$ is convex 
or locally convex, and its conclusion is completely deterministic, 
depending only on the process $f_\eta$ via its covariance function
under the null, i.e., via the terms $P_{\eta,X,\Sigma}$ and
$C_{X,\Sigma}(\eta)$.   

We note that, under the assumption that $f_\eta$ is Morse over $\K$
for almost every $y\in\real^p$, and $\K$ is convex, Lemma
\ref{lem:globalmax} gives necessary and sufficient conditions for a
point $\eta \in \K$ to be the almost sure unique global maximizer. 
Hence, for convex $\K$, the conditions in \eqref{eq:maxcond} are
equivalent to the usual subgradient conditions for optimality, which
may be written as 
\begin{equation*}
\nabla f_{\eta} \in N_{\eta}\K \quad\iff\quad 
\nabla f_{|T_{\eta}\K}=0, \; 
\nabla f_{|(T_\eta\K)^\perp} \in N_{\eta}\K,
\end{equation*}
where $N_\eta\K$ is the normal cone to $\K$ at $\eta$ and
$\nabla f_{|(T_\eta\K)^\perp}$ is the gradient restricted to the 
orthogonal complement of the tangent space $T_\eta\K$. 
\end{remark}

Recalling that $f_\eta$ is a Gaussian process, a helpful 
independence relationship unfolds.  

\begin{lemma}
\label{lem:indep}
With $y \sim N(X\beta_0,\Sigma)$, for each fixed 
$\eta \in \K$, the triplet 
\smash{$ (\V^-_{\eta}, \V^+_{\eta}, \V^0_{\eta})$}
is independent of \smash{$\tf^{\eta}_{\eta}$}.
\end{lemma}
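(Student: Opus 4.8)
The plan is to exploit the jointly Gaussian structure of the family $\{\tf^\eta_z : z \in \K\}$ together with the fact that $\V^-_\eta, \V^+_\eta, \V^0_\eta$ are built only from the ``residuals'' $\tf^\eta_z - z^T C_{X,\Sigma}(\eta)\cdot \tf^\eta_\eta$, which by construction are orthogonal (in the $L^2$ sense, under $H_0$) to $\tf^\eta_\eta$. First I would fix $\eta$ and observe that for each $z \in \K$ the quantity $r_z := \tf^\eta_z - z^T C_{X,\Sigma}(\eta)\cdot \tf^\eta_\eta$ is, by the definition of $C_{X,\Sigma}(\eta)$ in \eqref{eq:c}, exactly the residual from projecting $\tf^\eta_z$ onto $\tf^\eta_\eta$; hence $\Ee_0(r_z \cdot \tf^\eta_\eta) = 0$ for every $z$. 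Since $(\{r_z\}_{z\in\K}, \tf^\eta_\eta)$ is a jointly Gaussian collection (all are linear functionals of $y$), zero covariance implies that the whole process $\{r_z : z \in \K\}$ is independent of $\tf^\eta_\eta$.

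Next I would note that each of $\V^-_\eta$, $\V^+_\eta$, $\V^0_\eta$ is a deterministic (measurable) function of the process $\{r_z : z\in\K\}$ alone: in \eqref{eq:vminus} and \eqref{eq:vplus} the numerator is $r_z$ and the denominator $1 - z^T C_{X,\Sigma}(\eta)$ is nonrandom, the index sets $\{z : z^T C_{X,\Sigma}(\eta) \lessgtr 1\}$ are nonrandom, and in \eqref{eq:vzero} the maximand is again just $r_z$ over a nonrandom index set. Therefore the triplet $(\V^-_\eta, \V^+_\eta, \V^0_\eta)$ is a function of a collection that is independent of $\tf^\eta_\eta$, so it too is independent of $\tf^\eta_\eta$. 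One subtlety to handle carefully is the mean: under $H_0$ we have $y \sim N(0,\Sigma)$ and $\Ee_0 r_z = 0$, but the lemma is stated for general $y \sim N(X\beta_0,\Sigma)$; however a nonzero mean only shifts $r_z$ and $\tf^\eta_\eta$ by deterministic constants and does not change their covariance structure, so the independence is preserved (equivalently, independence of jointly Gaussian vectors depends only on the covariance, not the mean). I would make this remark explicitly so the reader sees why the ``under $H_0$'' definition of $C_{X,\Sigma}(\eta)$ still yields the orthogonality needed here.

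The main obstacle, and the only place requiring genuine care, is the measurability/``deterministic function of the residual process'' step when $\K$ is not finite: $\V^-_\eta$, $\V^+_\eta$, $\V^0_\eta$ are suprema and infima of $r_z$ over uncountable index sets, so I need $\{r_z\}_{z \in \K}$ to have a version with continuous (or at least separable) sample paths, which follows from the assumed smoothness of the process $f$ on the stratified set $\K$ (each $r_z$ is linear in $z$ composed with the linear map $z \mapsto \tf^\eta_z$, so $z \mapsto r_z$ is continuous, indeed affine in the underlying Gaussian vector). Given such a version, the extrema are measurable functions of the countable collection $\{r_z : z \in \K \cap \Qq^p\}$ (or of a countable dense subset respecting the stratification), and a $\sigma$-field generated by a countable subfamily of a Gaussian family independent of $\tf^\eta_\eta$ is again independent of $\tf^\eta_\eta$. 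I would phrase the conclusion as: since $(\{r_z\}_{z}, \tf^\eta_\eta)$ is jointly Gaussian and $\Cov_0(r_z, \tf^\eta_\eta) = 0$ for all $z$, the $\sigma$-algebra $\sigma(\{r_z : z \in \K\})$ is independent of $\tf^\eta_\eta$, and $(\V^-_\eta, \V^+_\eta, \V^0_\eta)$ is measurable with respect to it, giving the claim.
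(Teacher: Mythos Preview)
Your proposal is correct and follows essentially the same approach as the paper: the paper's proof is a one-line remark that $\Cov(\tf^\eta_z - z^T C_{X,\Sigma}(\eta)\cdot\tf^\eta_\eta,\ \tf^\eta_\eta)=0$ for all $z$, from which independence follows by joint Gaussianity. Your version simply spells out the remaining steps (that $\V^\pm_\eta,\V^0_\eta$ are measurable functions of the residual process, and that a nonzero mean does not affect the covariance-based independence) more carefully than the paper does.
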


\begin{proof}
This is a basic property of conditional expectation for jointly
Gaussian variables, i.e., it is easily verified that 
\smash{$\Cov(\tf^\eta_z - z^T C_{X,\Sigma}(\eta), \tf_\eta^\eta) = 0$} 
for all $z$.
\end{proof}

% \subsection{When $\K$ is convex}

% When $\K$ is convex, there is almost surely a unique global maximizer, which is also the only
% extended outward local maxima. Our characterization of $\eta^*$ in the case $\K$ is convex can be thought of as a different way to
% express this. 

% \begin{lemma}
% \label{lem:conv:cone}
% Suppose $\K$ is convex, then 
% the following hold 
% \begin{itemize}
% \item $\grad f_{|T_{\eta}\K} = 0$;
% \item $\tf^{\eta}_{\eta} \geq \V^-_{\eta}$;
% \item $\tf^{\eta}_{\eta} \leq \V^+_{\eta}$;
% \item $\V^0_{\eta} \leq 0.$
% \end{itemize}
% if, and only if,
% \begin{equation}
% \label{eq:convex:cone}
% \nabla f_{\eta} \in N_{\eta}\K \iff \nabla f_{|T_{\eta}\K}=0, \ \  P^{\perp}_{T_{\eta}\K} \nabla f \in N_{\eta}\K.
% \end{equation}
% In either case, 
% $$
% f_{\eta} = \sup_{\nu \in \K} f_{\nu}.
% $$
% \end{lemma}

\section{Kac-Rice formulae for the global maximizer and its value} 
\label{sec:kacrice}

The characterization of the global maximizer from the last
section, along with the Kac-Rice formula \citep{RFG}, allow  
us to express the joint distribution of
\begin{equation*}
\eta^* = \argmax_{\eta \in \K}\,  f_{\eta} \qquad\text{and}\qquad
f_{\eta^*} = \max_{\eta\in\K}\, f_\eta.
\end{equation*}

\begin{theorem}[Joint distribution of $(\eta^*,f_{\eta^*})$]
\label{thm:dist}
Writing \smash{$\K = \cup_{j=0}^p \partial_j\K$} for a
stratification of $\K$, 
%and 
%$f_\eta = \eta^T X^T y$ is a Morse process over $\eta \in \K$ for
%almost every $y\in\real^n$, with $y \sim N(X\beta_0,\Sigma)$,
for open sets $A \subseteq \real^p, O \subseteq \real$,    
\begin{multline}
\label{eq:dist}
\Pp(\eta^* \in A, \, f_{\eta^*} \in O) = 
\sum_{j=0}^p \int_{\partial_j\K \cap A} \Ee 
\bigg(\det(-\nabla^2 f_{|T_{\eta}\K}) \; \cdot \\
\indic_{\{\V^-_{\eta} \leq \tf^{\eta}_{\eta} \leq \V^+_{\eta}, 
\V^0_{\eta} \leq 0, \tf_{\eta}^{\eta} \in O\}} \,\bigg|\, 
\nabla f_{|T_{\eta}\K} =0 \bigg)  
\psi_{\nabla f_{|T_{\eta}\K}}(0) \, \hauss_j(d\eta),
\end{multline}
where:
\begin{itemize}
\item $\psi_{\nabla f_{|T_{\eta}\K}}$ is the density of the gradient
in some basis for the tangent space $T_{\eta}\K$, orthonormal with
respect to the standard inner product on $T_{\eta}\K$, i.e., the
standard Euclidean Riemannian metric on $\real^p$;  
\item the measure $\hauss_j$ is the Hausdorff measure induced by the
above Riemannian metric on each $\partial_j\K$; 
\item the Hessian $\nabla^2 f_{|T_{\eta}\K}$ is evaluated in this
orthonormal basis and, for $j=0$, we take as convention the
determinant of a $0 \times 0$ matrix to be 1 [in \citet{RFG}, this 
was denoted by $\nabla^2 f_{|\partial_j\K,\eta}$, to emphasize that it 
is the Hessian of the restriction of $f$ to $\partial_j\K$].
\end{itemize}
\end{theorem}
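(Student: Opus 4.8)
## Proof proposal for Theorem \ref{thm:dist}

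The plan is to apply the Kac-Rice formula of \citet{RFG} (the version for smooth processes on stratified manifolds, e.g.\ Chapter 12) to the process $f_\eta = \eta^T X^T y$ restricted to each stratum $\partial_j\K$ separately, and then sum. The Kac-Rice formula in its raw form counts the expected number of critical points of $f_{|\partial_j\K}$ lying in a set, weighted by any functional of the process that is measurable with respect to the value, gradient, and Hessian at that point; precisely, for a nice test functional it gives $\Ee\big(\sum_{\eta: \nabla f_{|T_\eta\K}=0}\, g(\eta)\big) = \int_{\partial_j\K}\Ee\big(|\det \nabla^2 f_{|T_\eta\K}|\, g(\eta)\mid \nabla f_{|T_\eta\K}=0\big)\, \psi_{\nabla f_{|T_\eta\K}}(0)\,\hauss_j(d\eta)$. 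First I would fix $j$ and choose, at each $\eta\in\partial_j\K$, the orthonormal basis $V_\eta$ for $T_\eta\K$ used in Section \ref{sec:character}, so that the gradient density $\psi_{\nabla f_{|T_\eta\K}}$ and the Hessian $\nabla^2 f_{|T_\eta\K}$ are expressed in the induced Euclidean Riemannian metric, and $\hauss_j$ is the corresponding Hausdorff measure. The Morse assumption on $f_\eta$ guarantees that for almost every $y$ the critical points on each stratum are nondegenerate and isolated, so these sums are finite and Kac-Rice applies without pathology.

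Next I would choose the test functional to isolate \emph{global maximizers} with the desired value. By Lemma \ref{lem:globalmax}, a critical point $\eta$ of $f_{|\partial_j\K}$ is the global maximizer of $f$ over all of $\K$ precisely when, in addition to $\nabla f_{|T_\eta\K}=0$, it satisfies $\V^-_\eta \le \tf^\eta_\eta \le \V^+_\eta$ and $\V^0_\eta\le 0$; and at such a point $\tf^\eta_\eta = f_\eta = f_{\eta^*}$ by the key observation \eqref{eq:keyobs}. So I take
\begin{equation*}
g(\eta) = \indic_{\{\eta\in A\}}\cdot \indic_{\{\det(-\nabla^2 f_{|T_\eta\K})>0\}}\cdot \indic_{\{\V^-_\eta\le \tf^\eta_\eta\le \V^+_\eta,\ \V^0_\eta\le 0,\ \tf^\eta_\eta\in O\}}.
\end{equation*}
The first two indicators restrict to local maxima (so $|\det\nabla^2 f| = \det(-\nabla^2 f_{|T_\eta\K})$ on the event, matching the integrand in \eqref{eq:dist}), the location indicator enforces $\eta\in A$, and the last block enforces global maximality and the value constraint. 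Summing the Kac-Rice identity over $j=0,\dots,p$ and using $\K=\cup_j\partial_j\K$, the left-hand side becomes $\Ee\big(\sum_{\eta}g(\eta)\big)$ where the sum is over all critical points of $f$ on all strata. By the Morse assumption and convexity (or local convexity) of $\K$, which via Lemma \ref{lem:globalmax} forces the global maximizer to be almost surely unique, exactly one term in this sum is nonzero when $\eta^*\in A$ and $f_{\eta^*}\in O$, and all terms vanish otherwise; hence $\Ee\big(\sum_\eta g(\eta)\big) = \Pp(\eta^*\in A,\ f_{\eta^*}\in O)$, as in the Remark following the statement. Matching the right-hand side of the summed identity term-by-term with \eqref{eq:dist} completes the argument.

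The main obstacle I anticipate is not the combinatorial bookkeeping but the justification that the Kac-Rice formula of \citet{RFG} may be applied here despite the \emph{nonconstant variance} of $f_\eta$ over $\K$ (noted explicitly in Section \ref{sec:character}), and that the chosen functional $g$ is admissible — i.e.\ that $g(\eta)$ is a measurable function of $(f_\eta, \nabla f_{|T_\eta\K}, \nabla^2 f_{|T_\eta\K})$ together with the ambient Gaussian field, and that the requisite integrability/regularity conditions of the Kac-Rice theorem (finite moments of the Hessian, continuity of the relevant conditional densities in $\eta$, non-degeneracy of $\nabla f_{|T_\eta\K}$) hold on each stratum. The quantities $\V^-_\eta, \V^+_\eta, \V^0_\eta$ are extrema over $\K$ of affine-in-$y$ functionals with coefficients depending only on $\eta$ (through $C_{X,\Sigma}(\eta)$ and $P_{\eta,X,\Sigma}$), so they are measurable functionals of the field, and by Lemma \ref{lem:indep} the conditioning on $\nabla f_{|T_\eta\K}=0$ interacts cleanly with them; the care required is mostly in checking these hypotheses stratum-by-stratum, which the assumed stratification \eqref{eq:strata} and Morse property are precisely designed to deliver. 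I would handle the nonconstant-variance point by noting that the Kac-Rice density formula is purely local — it only uses the joint law of $(f_\eta,\nabla f_{|T_\eta\K},\nabla^2 f_{|T_\eta\K})$ at the single point $\eta$ — so constancy of variance is never needed; what changes relative to \citet{RFG} is only that the conditional expectation $\alpha_{\eta,X,\Sigma}$ and the projection $P_{\eta,X,\Sigma}$ vary with $\eta$, which is already accounted for in the definitions of $\tf^\eta_z$ and $C_{X,\Sigma}(\eta)$.
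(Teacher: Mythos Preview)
Your proposal is correct and follows essentially the same route as the paper's own proof: use Lemma~\ref{lem:globalmax} to rewrite $\{\eta^*\in A,\ f_{\eta^*}\in O\}$ as the event that the count of critical points satisfying the $\V^\pm_\eta,\V^0_\eta$ constraints (with $\tf^\eta_\eta\in O$ via \eqref{eq:keyobs}) equals one, then apply the Kac--Rice meta-theorem of \citet{RFG} stratum by stratum and sum. The only cosmetic difference is that you insert an explicit local-maximum indicator $\indic_{\{\det(-\nabla^2 f_{|T_\eta\K})>0\}}$ to pass from $|\det\nabla^2 f|$ to $\det(-\nabla^2 f)$; this is harmless but redundant, since the global-maximizer constraints from Lemma~\ref{lem:globalmax} together with the Morse assumption already force $-\nabla^2 f_{|T_\eta\K}\succ 0$ at the single surviving point.
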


\begin{proof}
This is the Kac-Rice formula, or the ``meta-theorem'' of
Chapter 10 of \citet{RFG} \citep[see also][]{azaisW,brillinger},
applied to the problem of counting the number of global maximizers in
some set $A \subseteq \real^p$ having value in $O \subseteq
\real$. That is,     
\begin{align*}
\Pp(\eta^* \in A, \, f_{\eta^*} \in O)
&= \Ee\Big(\# \Big\{\eta \in \K \cap A: \nabla
    f_{|T_{\eta}\K}=0, 
\\ & \hspace{100pt}
 \V^-_{\eta} \leq \tf^{\eta}_{\eta} \leq
    \V^+_{\eta}, \, \V^0_{\eta} \leq 0, \, f_{\eta} \in O \Big\} \Big)
\\  &= \Ee\Big(\# \Big\{\eta \in \K \cap A: \nabla 
    f_{|T_{\eta}\K}=0, 
\\ & \hspace{100pt}
\V^-_{\eta} \leq \tf^{\eta}_{\eta} \leq
    \V^+_{\eta}, \, \V^0_{\eta} \leq 0, \, \tf^{\eta}_{\eta} \in O \Big\} 
\Big),
\end{align*}
where the second equality follows from \eqref{eq:keyobs}.  Breaking
down $\K$ into its separate strata, and then using the Kac-Rice
formula, we obtain the result in \eqref{eq:dist}.
\end{proof}

\begin{remark}
As before, the conclusion of Theorem \ref{thm:dist} does not actually 
depend on the convexity of $\K$.  When $\K$ is only locally 
convex, the Kac-Rice formula [i.e., the right-hand side in
\eqref{eq:dist}] counts the expected
total number of global maximizers of $f_\eta$ lying in some set $A \subseteq
\real^p$, with the achieved maximum value in $O \subseteq
\real$. For convex $\K$, our Morse condition on $f_\eta$ implies an 
almost surely unique maximizer, and hence the notation
\smash{$\Pp(\eta^* \in  A,\,f_{\eta^*} \in O)$} on the left-hand side
of \eqref{eq:dist} makes sense as written.  For locally convex $\K$,
one simply needs to interpret the left-hand side as  
\begin{equation*}
\Pp\Big(\eta^* \in A \;\,\text{for some}\;\,
f_{\eta^*} = \max_{\eta\in\K}\, f_\eta,\;\, \max_{\eta\in\K} \, f_\eta
\in O\Big).
\end{equation*}
\end{remark}

\begin{remark}
When $f_\eta$ has constant variance, the distribution of the maximum
value $f_{\eta^*}$ can be approximated extremely well by the expected   
Euler characteristic \citep{RFG} of the excursion set 
$f^{-1}_\eta(t,\infty) \cap \K$,
\begin{multline*}
 \sum_{j=0}^p \int_{\partial_j\K } \Ee \bigg(\det(-\nabla^2
 f_{|T_{\eta}\K}) \indic_{\{f_{\eta} > t, 
   \nabla f_{|(T_\eta\K)^\perp} \in N_{\eta} \K\}} \,\bigg|\, \nabla f_{|T_{\eta}\K} =0
 \bigg) \; \cdot \\
\psi_{\nabla f_{|T_{\eta}\K}}(0) \, \hauss_j(d\eta). 
\end{multline*}
This approximation is exact when $\K$ is convex
\citep{takemura:kuriki}, since the Euler characteristic of the
excursion set is equal to the indicator that it is not empty. 
\end{remark}

%\subsection{When $\K$ is convex}
%
%As in the characterization of the global maximizer, the case when
%$\K$ is convex admits a second representation. We will see in \Ss
%\ref{sec :pval} that the first representation is actually more
%amenable to calculations, at least in terms of computing p-values. 
%
%\begin{lemma}[Distribution of $(\eta^*,f_{\eta^*})$: $\K$ convex]
%Suppose $\K$ is convex. Then,
%\begin{equation}
%\label{eq:dist:convex}
%\begin{aligned}
%\Pp(f_{\eta^*} \in O, \eta^* \in A) & = \sum_{j=0}^p
%\int_{\partial_j\K \cap A} \Ee \biggl(\det(-\nabla^2 f_{|T_{\eta}\K})
%\indic_{\{ \tf^{\eta}_{\eta} \in O, P_{T_{\eta}\K}^{\perp} \nabla f
%\in N_{\eta}\K\}} \biggl| \nabla f_{|T_{\eta}\K} =0 \biggr) \\ 
%& \qquad  \qquad \psi_{\nabla f_{|T_{\eta}\K}}(0) \; \hauss_j(d\eta).
%\end{aligned}
%\end{equation}
%\end{lemma}

\subsection{Decomposition of the Hessian}

In looking at the formula \eqref{eq:dist}, we note that the quantities
\smash{$\tf^\eta_\eta, \V^-_\eta, \V^+_\eta, \V^0_\eta$} inside the
indicator are all independent, by construction, of $\nabla
f_{|T_{\eta}K}$. It will be useful to decompose the Hessian term
similarly. We write  
\begin{equation*}
-\nabla^2 f_{|T_{\eta}\K} = -H_{\eta} + G_{\eta} \cdot
\tf^{\eta}_{\eta} + R_{\eta},
\end{equation*}
where
\begin{align}
\label{eq:r}
R_{\eta} &= - \Ee_0\big(\nabla^2 f_{|T_{\eta}\K} \,\big|\, \nabla
  f_{|T_{\eta}\K}\big), \\ 
\label{eq:g}
G_{\eta} \cdot \tf^{\eta}_{\eta} &
%= - \Ee_0 \big(\nabla^2 f_{|T_{\eta}\K} - R_{\eta} \,\big|\, \tf^{\eta}_{\eta} \big)  
= -\Ee_0 \big(\nabla^2 f_{|T_{\eta}\K} \,\big|\, \tf^{\eta}_{\eta} \big), \\   
\label{eq:h}
H_{\eta} &= -\big( \nabla^2 f_{|T_{\eta}\K} -  R_{\eta}  \big)-
G_{\eta} \cdot \tf^{\eta}_{\eta}. 
\end{align}
At a critical point of $f_{|\partial_j\K}$, notice that
$R_{\eta}=0$ (being a linear function of the gradient 
$\nabla f_{|T_\eta \K}$, which is zero at such a critical point). 
Furthermore, the pair of matrices \smash{$(G_{\eta}  
  \tf^{\eta}_{\eta}, H_{\eta})$} is independent of $\nabla
f_{|T_{\eta}\K}$.  Hence, we can rewrite our key formulae for the 
distribution of the maximizer and its value. 

\begin{lemma}
For each fixed $\eta\in\K$, we have 
\begin{equation*}
\tf^{\eta}_{\eta} \sim N( \mu_{\eta}, \sigma^2_{\eta}),
\end{equation*}
independently of $(\V^-_{\eta}, \V^+_{\eta}, \V^0_{\eta}, H_{\eta})$, 
with
\begin{align}
\label{eq:mu}
\mu_{\eta} &= \eta^TX(I-P_{\eta,X,\Sigma}) X\beta_0, \\
\label{eq:sigma}
\sigma^2_{\eta} &=
\eta^TX^T(I-P_{\eta,X,\Sigma})\Sigma(I-P_{\eta,X,\Sigma}^T)X\eta,
\end{align}
and (recall) \smash{$P_{\eta,X,\Sigma} = \Sigma X V_\eta(V_\eta^T X^T \Sigma
X V_\eta)^\dagger V_\eta^T X^T$}, for an orthonormal basis $V_\eta$ 
of $T_\eta\K$.

Moreover, the formula \eqref{eq:dist} can be equivalenty expressed as: 
\begin{align}
\nonumber
\Pp(\eta^*\in A, \, f_{\eta^*} \in O) 
& = \sum_{j=0}^p \int_{\partial_j\K \cap A} 
\Ee \bigg( \det(-H_{\eta} + G_{\eta} \tf^{\eta}_{\eta}) \; \cdot \\
\label{eq:dist1}
& 
%\displaybreak[0]
\hspace{35pt}
\indic_{\{\V^-_{\eta} \leq \tf^{\eta}_{\eta} \leq \V^+_{\eta},
  \V^0_{\eta} \leq 0,\tf^{\eta}_{\eta}  \in O\}} \bigg)
\psi_{\nabla f_{|T_{\eta}\K}}(0) \, \hauss_j(d\eta) \\ 
\nonumber
 & = \sum_{j=0}^p \int_{\partial_j\K \cap A}  \Ee
 \bigg(\Mm_{\Lambda_{\eta},\V_{\eta}, \mu_{\eta},
   \sigma^2_{\eta}}(\indic_O) \indic_{\{\V^-_{\eta} \leq \V^+_{\eta},
   \V^0_{\eta} \leq 0 \}} \bigg) \; \cdot 
\\ 
\label{eq:dist2} 
& \hspace{110pt}
\psi_{\nabla f_{|T_{\eta}\K}}(0) \det(G_{\eta}) \, \hauss_j(d\eta),
\end{align}
where
\begin{equation}
\label{eq:m}
\Mm_{\Lambda,\V, \mu, \sigma^2}(h) = 
\int_{\V^-}^{\V^+} h(z) \det(\Lambda + z I)
\frac{e^{-(z-\mu)^2/2\sigma^2}}{\sqrt{2\pi\sigma^2}} \, dz,
%= \int_0^{\V^+ - \V^-} h(z+\V^-) \det(\Lambda + (z+\V^-) I)
%\frac{e^{-(z+\V^--\mu)^2/2\sigma^2}}{\sqrt{2\pi \sigma^2}} \, dz,
\end{equation}
and $\Lambda_{\eta} = G_{\eta}^{-1}H_{\eta}$.
\end{lemma}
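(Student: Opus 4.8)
The plan is to establish the distributional statement first---the Gaussian law of $\tf^\eta_\eta$ together with its independence from $(\V^-_\eta,\V^+_\eta,\V^0_\eta,H_\eta)$---and then to obtain \eqref{eq:dist1} and \eqref{eq:dist2} by substituting this into the Kac-Rice integral \eqref{eq:dist} and integrating out $\tf^\eta_\eta$. That $\tf^\eta_\eta$ is Gaussian with the stated moments is immediate from \eqref{eq:tildef}: $\tf^\eta_\eta=\eta^TX^T(I-P_{\eta,X,\Sigma})y$ is an affine image of $y\sim N(X\beta_0,\Sigma)$, so reading off its mean and variance gives \eqref{eq:mu}--\eqref{eq:sigma}. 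For the independence, the only tool I would invoke is that a jointly Gaussian family is independent of a fixed Gaussian variable as soon as every member is uncorrelated with it (independence being determined by finite-dimensional marginals). It therefore suffices to check, under $H_0$, that $\tf^\eta_\eta$ is uncorrelated with (i) the residual process $g_z:=\tf^\eta_z-z^TC_{X,\Sigma}(\eta)\tf^\eta_\eta$, $z\in\K$, (ii) $\nabla f_{|T_\eta\K}$, and (iii) $H_\eta$. Claim (i) is the defining property of $C_{X,\Sigma}(\eta)$ in \eqref{eq:c}---this is Lemma \ref{lem:indep}---and, since each of $\V^-_\eta,\V^+_\eta,\V^0_\eta$ is a deterministic function of $\{g_z\}_{z\in\K}$ (the weights $1/(1-z^TC_{X,\Sigma}(\eta))$ and the constraint regions being deterministic), it already handles the triple $(\V^-_\eta,\V^+_\eta,\V^0_\eta)$. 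Claim (ii) follows because $P_{\eta,X,\Sigma}$ in \eqref{eq:projection} is the $\Sigma$-orthogonal projection onto $XV_\eta$, so $(I-P_{\eta,X,\Sigma})\Sigma XV_\eta=0$ and hence $\Cov_0(\tf^\eta_z,\nabla f_{|T_\eta\K})=0$ for every $z$; in particular $\tf^\eta_\eta\perp\nabla f_{|T_\eta\K}$.

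Claim (iii) then follows from \eqref{eq:r}--\eqref{eq:h}: up to the sign conventions there, $H_\eta$ is a linear combination of the residual of $\nabla^2 f_{|T_\eta\K}$ regressed on $\tf^\eta_\eta$ and a term linear in $\nabla f_{|T_\eta\K}$, and both of these are uncorrelated with $\tf^\eta_\eta$---the first by construction, the second because $\nabla f_{|T_\eta\K}\perp\tf^\eta_\eta$ by (ii). Since $\tf^\eta_\eta$, the $g_z$'s and the entries of $H_\eta$ are all jointly Gaussian---note that although $f_\eta=\eta^TX^Ty$ is linear in $\eta$ on $\real^p$, its restriction to a curved stratum has a nonzero Hessian, but that Hessian carries only a curvature term linear in $\nabla f$ and hence in $y$, so the residual manipulations are legitimate---claims (i)--(iii) give that $\tf^\eta_\eta$ is independent of $(\{g_z\}_{z\in\K},H_\eta)$, and in particular of $(\V^-_\eta,\V^+_\eta,\V^0_\eta,H_\eta)$, as asserted.

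To pass from \eqref{eq:dist} to \eqref{eq:dist1}, I would substitute the decomposition of $-\nabla^2 f_{|T_\eta\K}$ in terms of $R_\eta$, $G_\eta\tf^\eta_\eta$ and $H_\eta$ into the integrand of \eqref{eq:dist}. On the conditioning event $\{\nabla f_{|T_\eta\K}=0\}$ the term $R_\eta$, being linear in $\nabla f_{|T_\eta\K}$ by \eqref{eq:r}, vanishes, so the determinant reduces to $\det(\tf^\eta_\eta\,G_\eta+H_\eta)$; together with the indicator, the whole integrand is then a function of $(\tf^\eta_\eta,H_\eta,\V^-_\eta,\V^+_\eta,\V^0_\eta)$, which by (ii)--(iii) above (and since $(G_\eta\tf^\eta_\eta,H_\eta)$ is independent of $\nabla f_{|T_\eta\K}$) is independent of $\nabla f_{|T_\eta\K}$. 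Hence the conditional expectation given $\nabla f_{|T_\eta\K}=0$ equals the unconditional one, which is \eqref{eq:dist1}, with the Kac-Rice factor $\psi_{\nabla f_{|T_\eta\K}}(0)$ carried along untouched.

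Finally, to pass from \eqref{eq:dist1} to \eqref{eq:dist2}, I would condition inside the expectation on $(H_\eta,\V^-_\eta,\V^+_\eta,\V^0_\eta)$ and integrate out $\tf^\eta_\eta$, which is independent of that quadruple and is $N(\mu_\eta,\sigma^2_\eta)$. The indicator $\indic_{\{\V^0_\eta\le 0\}}$ pulls out of the $\tf^\eta_\eta$-integral; the events $\{\V^-_\eta\le\tf^\eta_\eta\le\V^+_\eta\}$ and $\{\tf^\eta_\eta\in O\}$ turn the remaining expectation into $\indic_{\{\V^-_\eta\le\V^+_\eta\}}\int_{\V^-_\eta}^{\V^+_\eta}\indic_O(z)\det(zG_\eta+H_\eta)\,\phi_{\sigma^2_\eta}(z-\mu_\eta)\,dz$; and factoring $\det(G_\eta)$ out via $\det(zG_\eta+H_\eta)=\det(G_\eta)\det(zI+G_\eta^{-1}H_\eta)$ (up to the sign conventions of \eqref{eq:r}--\eqref{eq:h}) identifies the integrand as $\det(\Lambda_\eta+zI)\,\phi_{\sigma^2_\eta}(z-\mu_\eta)$ with $\Lambda_\eta=G_\eta^{-1}H_\eta$, so the inner integral is exactly $\Mm_{\Lambda_\eta,\V_\eta,\mu_\eta,\sigma^2_\eta}(\indic_O)$ as in \eqref{eq:m}, giving \eqref{eq:dist2}. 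The step I expect to be the real obstacle is the bookkeeping behind claim (iii)---verifying that the partial regressions \eqref{eq:r} and \eqref{eq:g} sum correctly, which hinges on $\nabla f_{|T_\eta\K}\perp\tf^\eta_\eta$ and $\Ee_0\nabla^2 f_{|T_\eta\K}=0$, and correctly accounting for the curvature term in $\nabla^2 f_{|T_\eta\K}$ on the curved strata; the determinant factorization and the collapse of the conditional expectation onto the unconditional one are then routine, modulo the mild nondegeneracy ($G_\eta$ invertible, and the pseudoinverses in \eqref{eq:projection} and \eqref{eq:c} acting as genuine inverses on the relevant subspaces) guaranteed by the Morse and stratification hypotheses of Theorem \ref{thm:mainresult}.
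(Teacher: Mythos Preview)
Your proposal is correct and follows the same line as the paper, which in fact does not give a formal proof of this lemma: the paper simply notes that $R_\eta=0$ on the conditioning event (being linear in $\nabla f_{|T_\eta\K}$), that $(G_\eta\tf^\eta_\eta,H_\eta)$ is independent of $\nabla f_{|T_\eta\K}$, and then remarks after stating the lemma that \eqref{eq:dist2} is obtained from \eqref{eq:dist1} by integrating over $\tf^\eta_\eta$ using the independence from Lemma~\ref{lem:indep}. Your write-up is a careful unpacking of exactly these steps---the Gaussian law from \eqref{eq:tildef}, the three uncorrelatedness checks (i)--(iii), the collapse of the conditional expectation via independence from $\nabla f_{|T_\eta\K}$, and the factoring $\det(G_\eta)\det(\Lambda_\eta+zI)$---so nothing is missing; your caution about the sign bookkeeping in \eqref{eq:r}--\eqref{eq:h} is warranted (the display for $H_\eta$ in the paper appears to carry a sign slip relative to the decomposition line above it), but the argument you give for $H_\eta\perp\tf^\eta_\eta$ via $\nabla f_{|T_\eta\K}\perp\tf^\eta_\eta$ is the intended one.
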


\smallskip
\begin{remark}
Until \eqref{eq:dist2}, we had not used the independence of
\smash{$\tf^\eta_\eta$} and 
\smash{$\V^-_\eta,\V^+_\eta,\V^0_\eta$} (Lemma \ref{lem:indep}).
In \eqref{eq:dist2} we do so, by first integrating over
\smash{$\tf^\eta_\eta$} 
(in the definition of $\Mm$), and then over
\smash{$\V^-_\eta,\V^+_\eta,\V^0_\eta$}.    
\end{remark}

\subsection{The conditional distribution}

The Kac-Rice formula can be generalized further. For a possibly
random function $h$, with $h_{|\partial_j\K}$ continuous for
each $j=0,\ldots p$, we see as a natural extension from
\eqref{eq:dist1},
\begin{multline}
\label{eq:disth}
\Ee \big( h(\eta^*) \big) 
% = \Ee \left(\sum_{\eta=\eta^*} h(\eta) \right) \\
= \sum_{j=0}^p \int_{\partial_j\K }  \Ee \bigg( h(\eta)
\det(-H_{\eta} + G_{\eta} \tf^{\eta}_{\eta}) \; \cdot \\
\indic_{\{\V^-_{\eta} \leq \tf^{\eta}_{\eta} \leq \V^+_{\eta},
 \V^0_{\eta} \leq 0\}} \bigg) 
\psi_{\nabla f_{|T_{\eta}\K}}(0) \, \hauss_j(d\eta). 
\end{multline}
This allows us to form a conditional distribution function of
sorts. As defined in \eqref{eq:m}, $\Mm_{\Lambda, \V, \mu, \sigma^2}$
is not a probability measure, but it can be normalized to yield
one: 
\begin{equation}
\label{eq:q}
\Qq_{\Lambda, \V, \mu, \sigma^2}(g) = \frac{\Mm_{\Lambda, \V, 
\mu, \sigma^2}(g)}{\Mm_{\Lambda, \V, \mu, \sigma^2}(1)}.  
\end{equation}
Working form \eqref{eq:disth}, with $h(\eta)=g(f_\eta)$, 
\begin{align}
\nonumber
\Ee \big( g(f_{\eta^*}) \big)
& = \sum_{j=0}^p \int_{\partial_j\K }  \Ee \bigg(
\Mm_{\Lambda_{\eta}, \V_{\eta}, \mu_{\eta}, \sigma^2_{\eta}} (g)
\indic_{\{\V^-_{\eta}   \leq \V^+_{\eta}, \V^0_{\eta} \leq 0\}}
\bigg) \; \cdot \\
\nonumber
& \hspace{150pt}
\psi_{\nabla f_{|T_{\eta}\K}}(0) \det(G_\eta) \, \hauss_j(d\eta) \\
\nonumber
& = \sum_{j=0}^p \int_{\partial_j\K }  \Ee \bigg(
\Qq_{\Lambda_{\eta}, \V_{\eta}, \mu_{\eta}, \sigma^2_{\eta}}(g) 
\Mm_{\Lambda_{\eta}, \V_{\eta}, \mu_{\eta}, \sigma^2_{\eta}}(1)
\indic_{\{\V^-_{\eta}   \leq \V^+_{\eta}, \V^0_{\eta} \leq 0\}}
\bigg) \; \cdot \\ 
\label{eq:distq}
& \hspace{150pt} \psi_{\nabla f_{|T_{\eta}\K}}(0) 
\det(G_\eta) \, \hauss_j(d\eta). 
\end{align}
This brings us to our next result. 

\begin{lemma}
\label{lem:cond} 
Formally, the measure $\Qq_{\Lambda,\V,\mu_{\eta},\sigma^2_{\eta}}$   
is a conditional distribution function of $f_{\eta^*}$, in the sense
that on each stratum $\partial_j\K$, $j=0,\ldots p$, we have
\begin{equation}
\label{eq:cond}
\Ee\big(g(f_{\eta^*}) \,\big|\, \eta^*=\eta, \,
\Lambda_{\eta^*}=\Lambda, \,
\V_{\eta^*}=\V \big) = \Qq_{\Lambda, \V, \mu_{\eta},
  \sigma^2_{\eta}}(g), 
\end{equation}
for all $\eta \in \partial_j\K$.
\end{lemma}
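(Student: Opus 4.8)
The plan is to deduce Lemma~\ref{lem:cond} directly from the integral identity \eqref{eq:distq}, reading off the conditional distribution of $f_{\eta^*}$ by disintegrating the Kac-Rice formula along the fibers $\{\eta^*=\eta\}$ stratum by stratum. The key point is that \eqref{eq:distq} already has the structure of a mixture: for an arbitrary test function $g$, the quantity $\Ee(g(f_{\eta^*}))$ is written as an integral over $\bigcup_j \partial_j\K$ of $\Qq_{\Lambda_\eta,\V_\eta,\mu_\eta,\sigma^2_\eta}(g)$ against a (random, but $g$-independent) weight, namely $\Mm_{\Lambda_\eta,\V_\eta,\mu_\eta,\sigma^2_\eta}(1)\,\indic_{\{\V^-_\eta\le\V^+_\eta,\,\V^0_\eta\le 0\}}\,\psi_{\nabla f_{|T_\eta\K}}(0)\det(G_\eta)$. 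Taking $g=1$ in \eqref{eq:distq} gives the total mass of this weight (which equals $1$, since $\eta^*$ exists a.s.\ by the Morse assumption), so the weight, suitably normalized, is precisely the joint law of $(\eta^*,\Lambda_{\eta^*},\V_{\eta^*})$. Since the integrand factors as $\Qq_{\Lambda_\eta,\V_\eta,\mu_\eta,\sigma^2_\eta}(g)$ times this weight, the definition of conditional expectation then forces \eqref{eq:cond}.

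First I would make the mixture reading precise: apply \eqref{eq:distq} not just to $g(f_{\eta^*})$ but to $g(f_{\eta^*})\cdot\mathbf{1}_B(\eta^*,\Lambda_{\eta^*},\V_{\eta^*})$ for an arbitrary Borel set $B$. Because $\Lambda_{\eta^*}$ and $\V_{\eta^*}$ are deterministic functions of the gradient/Hessian data already conditioned on inside the Kac-Rice expectation (and $\eta$ ranges over the stratum), the indicator $\mathbf{1}_B$ pulls inside the inner expectation and simply restricts the region of integration, yielding
\begin{equation*}
\Ee\big(g(f_{\eta^*})\mathbf{1}_B(\eta^*,\Lambda_{\eta^*},\V_{\eta^*})\big)
= \sum_{j=0}^p \int_{\partial_j\K} \mathbf{1}_B(\eta,\Lambda_\eta,\V_\eta)\,
\Qq_{\Lambda_\eta,\V_\eta,\mu_\eta,\sigma^2_\eta}(g)\,w(\eta)\,\hauss_j(d\eta),
\end{equation*}
where $w(\eta)$ is the $g$-independent weight above. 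Comparing with the same identity for $g\equiv 1$, which identifies $w(\eta)\,\hauss_j(d\eta)$ (summed over $j$) as the distribution of $(\eta^*,\Lambda_{\eta^*},\V_{\eta^*})$, I would then invoke the a.e.-uniqueness of conditional expectation to conclude that the conditional law of $f_{\eta^*}$ given $(\eta^*,\Lambda_{\eta^*},\V_{\eta^*})=(\eta,\Lambda,\V)$ is $\Qq_{\Lambda,\V,\mu_\eta,\sigma^2_\eta}$, which is exactly \eqref{eq:cond}.

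The main obstacle is making rigorous the claim that the Kac-Rice expectation can be disintegrated so that conditioning on $\eta^*=\eta$ corresponds to evaluating the integrand at that $\eta$ — i.e.\ that the ``meta-theorem'' expansion behaves like a genuine mixture representation rather than merely an expected-count formula. The subtlety is that $\eta^*$ is not a pre-specified point but the (a.s.\ unique) argmax, so one must argue that on the event that the unique maximizer lies in a small neighborhood of $\eta$ within $\partial_j\K$, the contributions of all other strata and all other points vanish; this is where the Morse assumption and the resulting ``0 or 1 maximizers'' fact (the Remark after Theorem~\ref{thm:dist}) do the real work. I would handle this by localizing $A$ to shrinking neighborhoods of $\eta$ in \eqref{eq:dist2} and noting that the ``$\#\{\cdots\}$'' in the proof of Theorem~\ref{thm:dist} is $\indic_{\{\eta^*\in A\}}$ a.s., so the formula is literally computing $\Pp(\eta^*\in A,\dots)$ and not an expected count; differentiating in $A$ (in the Hausdorff-measure sense) then yields the conditional statement. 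The remaining steps — verifying that $\Lambda_{\eta^*}$ and $\V_{\eta^*}$ are measurable functions of the data conditioned on, and that $\Qq$ integrates to one so it is a bona fide probability measure — are routine given \eqref{eq:q} and Lemma~\ref{lem:indep}.
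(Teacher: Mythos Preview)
Your proposal is correct and follows essentially the same route as the paper: apply the Kac-Rice formula with indicator sets $A\times B\times C$ for $(\eta^*,\Lambda_{\eta^*},\V_{\eta^*})$, shrink these to points to identify \eqref{eq:dens} as the joint density of $(\eta^*,\Lambda_{\eta^*},\V_{\eta^*})$ on each stratum, and then read off \eqref{eq:cond} as the Radon--Nikodym quotient. The only cosmetic gap is that your displayed identity still carries the random $(\Lambda_\eta,\V_\eta)$ without an outer expectation; the paper resolves this by explicitly writing the inner $\Ee$ as an integral against the law $F_{\Lambda_\eta,\V_\eta}(d\Lambda,d\V)$, which is exactly the ``disintegration'' step you flag as the main obstacle.
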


\begin{proof}
By expanding the right-hand side in \eqref{eq:distq}, we see
\begin{multline*}
\Ee\big(g(f_{\eta^*})\big) =
\sum_{j=0}^p \int_{\partial_j\K } \int_{\real^j \times \real^3} 
\Qq_{\Lambda, \V, \mu_{\eta}, \sigma^2_{\eta}}(g) 
\Mm_{\Lambda, \V, \mu_{\eta}, \sigma^2_{\eta}}(1) 
\indic_{\{\V^-   \leq \V^+, \V^0 \leq 0\}} \; \cdot \\
\psi_{\nabla f_{|T_{\eta}\K}}(0) \det(G_\eta) 
\, F_{\Lambda_\eta, \V_\eta}(d\Lambda,d\V)
\, \hauss_j(d\eta)
\end{multline*}
where $F_{\Lambda_\eta,\V_\eta}$ denotes the joint distribution of
$(\Lambda_\eta,\V_\eta) =
(\Lambda_\eta,\V^-_\eta,\V^+_\eta,\V^0_\eta)$ at a fixed value of
$\eta$. Consider the quantity 
\begin{equation}
\label{eq:dens}
\Mm_{\Lambda, \V, \mu_{\eta}, \sigma^2_{\eta}}(1) 
\indic_{\{\V^-   \leq \V^+, \V^0 \leq 0\}} 
\psi_{\nabla f_{|T_{\eta}\K}}(0) \det(G_\eta) 
\, F_{\Lambda_\eta, \V_\eta}(d\Lambda,d\V)
\, \hauss_j(d\eta).
\end{equation}
We claim that this is the joint density (modulo differential
terms) of $\eta^*,\Lambda_{\eta^*},\V_{\eta^*}$, when $\eta^*$ is
restricted to  
the smooth piece $\partial_j\K$.  This would complete the proof. 
Hence to verify the claim, we the apply Kac-Rice formula with open
sets $A \subseteq \partial_j\K$, $B \subseteq \real^{j\times j}$, and
$C \subseteq \real^3$, giving
\begin{align*}
\Pp ( \eta^* \in A, \, \Lambda_{\eta^*} \in B, \,
\V_{\eta^*} \in C) &= 
\int_A \Ee 
\bigg(\Mm_{\Lambda_{\eta}, \V_{\eta}, \mu_{\eta}, \sigma^2_{\eta}}(1)
\indic_{\{\V^-_{\eta}   \leq \V^+_{\eta}, \V^0_{\eta} \leq 0\}} \;
\cdot \\
& \hspace{40pt}
\indic_{\{\Lambda \in B, \V \in C\}} \bigg) 
\psi_{\nabla f_{|T_{\eta}\K}}(0) \det(G_\eta) \, \hauss_j(d\eta) \\
&= \int_A \int_{B \times C}
\Mm_{\Lambda, \V, \mu_{\eta}, \sigma^2_{\eta}}(1) 
\indic_{\{\V^-   \leq \V^+, \V^0 \leq 0\}} \; \cdot \\
&\hspace{30pt}
\psi_{\nabla f_{|T_{\eta}\K}}(0) \det(G_\eta) 
\, F_{\Lambda_\eta, \V_\eta}(d\Lambda,d\V)
\, \hauss_j(d\eta). 
\end{align*}
Taking $A,B,C$ to be open balls around some fixed points
$\eta,\Lambda,\V$, respectively, and sending their radii to zero, we
see that the joint density of $\eta^*,\Lambda_{\eta^*},\V_{\eta^*}$,
with $\eta^*$ restricted to $\partial_j\K$, is exactly as in
\eqref{eq:dens}, as desired.  
\end{proof}

\begin{remark}
\label{rem:uncond}
The analogous result also holds unconditionally, i.e., it is clear
that 
\begin{equation*}
\Ee\big(g(f_{\eta^*})\big) =
\Ee\Big(\Qq_{\Lambda_{\eta^*},\V_{\eta^*},\mu_{\eta^*},\sigma^2_{\eta^*}}(g)\Big),
\end{equation*}
by taking an expectation on both sides of \eqref{eq:cond}.
\end{remark}

% \subsection{The gap}

% Following \citet{covtest}, it is natural to believe that the gap
% \begin{equation*}
% \gap_{\eta^*} =  f_{\eta^*} - \V^-_{\eta^*}
% \end{equation*}
% contains useful information discerning the global null $H_0$ from the
% alternative.  Under some conditions on $X$, for the lasso problem
% \eqref{eq:lasso}, \citet{covtest} proved that the quantity 
% \smash{$f_{\eta*} \cdot \gap_{\eta^*} / \sigma^2_{\eta^*}$} has an
% $\mathrm{Exp}(1)$ asymptotic distribution under $H_0$.  Here we give
% the exact finite-sample distribution of $\gap_{\eta^*}$.

% \begin{lemma}[Distribution of $\gap_{\eta^*}$]
% For each $t > 0$,
% \begin{equation}
% \label{eq:distgap}
% \Pp (\gap_{\eta^*} > t) = \Ee \Big(
% \Qq_{\Lambda_{\eta^*},\V_{\eta^*},\mu_{\eta^*},\sigma_{\eta^*}^2}
% (\indic_{(\V^-_{\eta^*}+t,\infty)})\Big).
% \end{equation}
% Further, for $h:\real \rightarrow \real$,
% \begin{equation}
% \label{eq:distgaph}
% \Ee \big(h(\gap_{\eta^*}) \big) = 
% \Ee \Big(
% \Qq_{\Lambda_{\eta^*},\V_{\eta^*},\mu_{\eta^*},\sigma_{\eta^*}^2} 
% \big( h(\cdot - \V^-_{\eta^*}) \big)\Big),
% \end{equation}
% with $h(\cdot - \V^-_{\eta^*})$ denoting the function 
% $x \mapsto h(x-\V^-_{\eta^*})$. 
% \end{lemma}

% \begin{proof}
% The proof is very similar to that of Lemma \ref{lem:cond}.
% \end{proof}

% In \cite{covtest}, it was shown that, under certain conditions on
% $X$ for the lasso, i.e., when $\pen(\beta) = \|\beta\|_1$ 
% $$
% \frac{1}{\sigma^2_{\eta^*}} f_{\eta^*} \cdot \gap_{\eta^*} \to \mathrm{Exp}(1).
% $$

\subsection{The Kac-Rice pivotal quantity}
\label{sec:pval}

Suppose that we are interested in testing the null hypothesis $H_0 : y
\sim N(0,\Sigma)$.  We might look at the observed value of the first
knot $\lambda_1=f_{\eta^*}$, and see if it was larger than we would
expect under $H_0$. From the results of the last section,
\begin{equation*}
\Pp(f_{\eta^*} > t)  =
\Ee\Big(\Qq_{\Lambda_{\eta^*},\V_{\eta^*},\mu_{\eta^*},\sigma^2_{\eta^*}}
(\indic_{(t,\infty)})\Big),
\end{equation*}
and so the most natural strategy seems to be to plug our
observed value of the first knot into the above formula.  This,
however, requires computing the above expectation, i.e., the integral
in \eqref{eq:distq}.

In this section, we present an alternative approach that is
effectively a conditional test, conditioning on the observed value
of $\eta^*$, as well as $\Lambda_{\eta^*}$ and $\V_{\eta^*}$.  
To motivate our test, it helps to take a step back and think about the
measure $\Qq_{\Lambda,\V,\mu,\sigma^2}$ defined in
\eqref{eq:q}.  For fixed values of $\Lambda,\V,\mu,\sigma^2$, we can
reexpress this (nonrandom) measure as
\begin{equation*}
\Qq_{\Lambda,\V,\mu,\sigma^2}(g) = \int_{-\infty}^\infty g(t) \cdot
q_{\Lambda,\V,\mu,\sigma}(t) \,dt,
\end{equation*}
where $q_{\Lambda,\V,\mu,\sigma}$ is a density function (supported on
$[\V^-,\V^+]$). 
 In other words, $\Qq_{\Lambda,\V,\mu,\sigma^2}(g)$
computes the expectation of $g$ with respect to a density
$q_{\Lambda,\V,\mu,\sigma^2}$, so we can write
$\Qq_{\Lambda,\V,\mu,\sigma^2}(g) = \Ee(g(W))$ where $W$ is a random
variable whose density is $q_{\Lambda,\V,\mu,\sigma^2}$.  Now consider
the survival function 
\begin{equation*}
\Ss_{\Lambda, \V, \mu, \sigma^2}(t) = 
\Qq_{\Lambda, \V, \mu, \sigma^2}(\indic_{(t,\infty)}) = \Pp(W > t). 
\end{equation*}
A classic argument shows that $\Ss_{\Lambda,\V,\mu,\sigma^2}(W) \sim
\mathrm{Unif}(0,1)$.  Why is this useful?  Well, according to lemma
\ref{lem:cond} (or, Remark \ref{rem:uncond} following the lemma), the
first knot $\lambda_1=f_{\eta^*}$ almost takes the role of $W$ above,
except that there is a further level of randomness in $\eta^*$, and 
$\Lambda_{\eta^*},\V_{\eta^*}$.  That is, instead of the expectation
of $g(f_{\eta^*})$ being given by
\smash{$\Qq_{\Lambda_{\eta^*},\V_{\eta^*},\mu_{\eta^*},\sigma^2_{\eta^*}}(g)$},
it is given by 
\smash{$\Ee(\Qq_{\Lambda_{\eta^*},\V_{\eta^*},\mu_{\eta^*},\sigma^2_{\eta^*}}(g))$}. 
The key intuition is that the random variable
\begin{equation}
\label{eq:surv}
\Ss_{\Lambda_{\eta^*},\V_{\eta^*},\mu_{\eta^*},\sigma^2_{\eta^*}}
(f_{\eta^*}) =
\Qq_{\Lambda_{\eta^*},\V_{\eta^*},\mu_{\eta^*},\sigma^2_{\eta^*}}
(\indic_{(f_{\eta^*},\infty)})
\end{equation}
should still be uniformly distributed, since this is true
conditional on $\eta^*,\Lambda_{\eta^*},\V_{\eta^*}$, and
unconditionally, the extra level of randomness in
$\eta^*,\Lambda_{\eta^*},\V_{\eta^*}$ just gets ``averaged out'' and
does not change the distribution. Our next lemma formalizes this
intuition, and therefore provides a test for $H_0$ based on the
(random) survival function in \eqref{eq:surv}. 

\begin{lemma}
\label{lem:mainresult}[Kac-Rice pivot]
The survival function of $\Qq_{\Lambda,\V,\mu,\sigma^2}$, with
$\Lambda=\Lambda_{\eta^*}$, $\V=\V_{\eta^*}$, $\mu=\mu_{\eta^*}$,
$\sigma^2=\sigma^2_{\eta^*}$, and evaluated at $t=f_{\eta^*}$,
satisfies 
\begin{equation}
\label{eq:sunif2}
\Ss_{\Lambda_{\eta^*}, \V_{\eta^*}, \mu_{\eta^*},
  \sigma^2_{\eta^*}}(f_{\eta^*}) \sim \mathrm{Unif}(0,1).
\end{equation}
\end{lemma}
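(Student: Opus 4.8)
The plan is to combine the conditional-distribution statement of Lemma~\ref{lem:cond} with the classical probability integral transform, and then to integrate out the auxiliary randomness in $(\eta^*,\Lambda_{\eta^*},\V_{\eta^*})$ via the tower property. First I would fix the stratum $\partial_j\K$ in which $\eta^*$ lands and recall that, by Lemma~\ref{lem:cond}, the regular conditional law of the maximum value $f_{\eta^*}$ given $(\eta^*=\eta,\ \Lambda_{\eta^*}=\Lambda,\ \V_{\eta^*}=\V)$ is $\Qq_{\Lambda,\V,\mu_\eta,\sigma^2_\eta}$, where $\mu_\eta$ and $\sigma^2_\eta$ are the \emph{deterministic} functions of $\eta$ given in \eqref{eq:mu}, \eqref{eq:sigma}, so that conditioning on $\eta^*=\eta$ already pins down those two arguments. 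By construction \eqref{eq:m}, \eqref{eq:q}, this conditional law is absolutely continuous with respect to Lebesgue measure, with density $q_{\Lambda,\V,\mu_\eta,\sigma^2_\eta}$ supported on the conditioned-upon interval $[\V^-,\V^+]$, on which the integrand in \eqref{eq:m} is nonnegative --- the latter being exactly the positive semidefiniteness $-\nabla^2 f_{|T_{\eta^*}\K}\succeq 0$ at a global maximizer, so that $\det(\Lambda_{\eta^*}+zI)$ keeps one sign there --- and with a strictly positive, finite normalizing constant $\Mm_{\Lambda_{\eta^*},\V_{\eta^*},\mu_{\eta^*},\sigma^2_{\eta^*}}(1)$, which is what makes $\Qq$ in \eqref{eq:q} well-defined (guaranteed a.s.\ by the Morse assumption together with $\V^-_{\eta^*}<\V^+_{\eta^*}$).

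With this in hand, the second step is the probability integral transform applied conditionally. Since $\Qq_{\Lambda,\V,\mu_\eta,\sigma^2_\eta}$ is absolutely continuous, its survival function $\Ss_{\Lambda,\V,\mu_\eta,\sigma^2_\eta}(t)=\Qq_{\Lambda,\V,\mu_\eta,\sigma^2_\eta}(\indic_{(t,\infty)})$ is continuous and strictly decreasing on $[\V^-,\V^+]$; hence if $W\sim\Qq_{\Lambda,\V,\mu_\eta,\sigma^2_\eta}$ then $\Ss_{\Lambda,\V,\mu_\eta,\sigma^2_\eta}(W)\sim\mathrm{Unif}(0,1)$. Taking $W=f_{\eta^*}$ conditionally on $(\eta^*,\Lambda_{\eta^*},\V_{\eta^*})=(\eta,\Lambda,\V)$ gives
\begin{equation*}
\Pp\big(\Ss_{\Lambda_{\eta^*},\V_{\eta^*},\mu_{\eta^*},\sigma^2_{\eta^*}}(f_{\eta^*})\le u \,\big|\, \eta^*=\eta,\ \Lambda_{\eta^*}=\Lambda,\ \V_{\eta^*}=\V\big)=u
\end{equation*}
for every $u\in[0,1]$ and (a.e.) every admissible value of the conditioning triple. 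The third step is to integrate out the conditioning variables: because the conditional probability above does not depend on the value of $(\eta^*,\Lambda_{\eta^*},\V_{\eta^*})$ --- it equals $u$ on every stratum and for every value within a stratum --- the tower property yields, for all $u\in[0,1]$,
\begin{equation*}
\Pp\big(\Ss_{\Lambda_{\eta^*},\V_{\eta^*},\mu_{\eta^*},\sigma^2_{\eta^*}}(f_{\eta^*})\le u\big)=\Ee\big[u\big]=u,
\end{equation*}
which is precisely \eqref{eq:sunif2}. Concretely the averaging can be carried out stratum by stratum, weighting the contribution of $\partial_j\K$ by $\psi_{\nabla f_{|T_\eta\K}}(0)\det(G_\eta)\,\hauss_j(d\eta)$ exactly as in the proof of Lemma~\ref{lem:cond}; the uniform conditional law makes each stratum contribute in proportion to its total mass, so the normalized sum is again $\mathrm{Unif}(0,1)$.

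I expect the only real obstacle to lie in the first step: turning the ``formal'' conditional-distribution assertion of Lemma~\ref{lem:cond} into an honest disintegration, since the conditioning events $\{\eta^*=\eta\}$ have probability zero (they live on the lower-dimensional manifolds $\partial_j\K$), so Lemma~\ref{lem:cond} must be read as a statement about densities against $\sum_j\hauss_j$; and, within that, checking that $q_{\Lambda_{\eta^*},\V_{\eta^*},\mu_{\eta^*},\sigma^2_{\eta^*}}$ is a genuine probability density, i.e.\ the sign and integrability facts about the integrand in \eqref{eq:m} on $[\V^-_{\eta^*},\V^+_{\eta^*}]$. Once absolute continuity and strict positivity of that density on the interior of its support are secured, the last two steps are routine and the probability integral transform applies with no caveats about atoms or flat spots of the distribution function.
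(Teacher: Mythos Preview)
Your proposal is correct and is essentially the paper's own argument: both hinge on the identity $\Qq_{\Lambda,\V,\mu,\sigma^2}(h\circ\Ss_{\Lambda,\V,\mu,\sigma^2})=\int_0^1 h(t)\,dt$ followed by averaging over $(\eta^*,\Lambda_{\eta^*},\V_{\eta^*})$ stratum by stratum with the Kac-Rice weights. The only presentational difference is that the paper works directly with the integral representation \eqref{eq:distq} (computing $\Ee[h(\Ss_{\ldots}(f_{\eta^*}))]$ for arbitrary $h$) rather than invoking Lemma~\ref{lem:cond} plus the tower property, which neatly sidesteps the disintegration worry you flagged---the ``formal'' conditioning never needs to be upgraded to an honest regular conditional law, because the computation is carried out entirely at the level of the explicit Kac-Rice integrals.
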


\begin{proof}
Fix some $h:\real \rightarrow \real$.  A standard argument shows that 
(fixing $\Lambda,\V,\mu,\sigma^2$),
\begin{equation*}
\Qq_{\Lambda, \V, \mu, \sigma^2}
\big(h \circ \Ss_{\Lambda, \V, \mu, \sigma^2} \big) = 
\int_0^1 h(t) \, dt.
\end{equation*}
Now we compute, applying \eqref{eq:distq} with $g$ being a
composition of functions,
\begin{align*}
%\displaybreak[0]
\Ee \bigg(h \Big(&S_{\Lambda_{\eta^*}, \V_{\eta^*}, \mu_{\eta^*},
  \sigma^2_{\eta^*}}(f_{\eta^*}) \Big) \bigg) \\
%  = \Ee \bigg( \sum_{\eta:\eta=\eta^*} 
% h \Big(S_{\Lambda_{\eta}, \V_{\eta}, \mu_{\eta},
%       \sigma^2_{\eta}}(f_{\eta}) \Big) \bigg) \\ 
% & = \Ee \bigg( \sum_{\eta:\eta=\eta^*} 
% h \Big(S_{\Lambda_{\eta}, \V_{\eta}, \mu_{\eta},
%   \sigma^2_{\eta}}(\tf^{\eta}_{\eta}) \Big) \bigg) \\  
& = \sum_{j=0}^p \int_{\partial_j\K}  \Ee \bigg(
\Qq_{\Lambda_{\eta}, \V_{\eta}, \mu_{\eta}, \sigma^2_{\eta}} \Big(h
\circ S_{\Lambda_{\eta}, \V_{\eta}, \mu_{\eta}, \sigma^2_{\eta}} \Big) 
\Mm_{\Lambda_{\eta}, \V_{\eta}, \mu_{\eta}, 
  \sigma^2_{\eta}}(1)  \; \cdot \\
%\displaybreak[0]
& \hspace{105pt} \indic_{\{\V^-_{\eta} \leq \V^+_{\eta},
  \V^0_{\eta} \leq 0\}} \bigg)  \psi_{\nabla f_{|T_{\eta}\K}}(0) 
\det(G_\eta) \, \hauss_j(d\eta) \\ 
& = \sum_{j=0}^p \int_{\partial_j\K}  \Ee \bigg(
\Big[\int_0^1 h(t) \, dt \Big]
\Mm_{\Lambda_{\eta}, \V_{\eta}, \mu_{\eta}, 
  \sigma^2_{\eta}}(1)  \indic_{\{\V^-_{\eta} \leq \V^+_{\eta},
  \V^0_{\eta} \leq 0\}} \bigg) \; \cdot \\
%\displaybreak[0]
& \hspace{175pt} \psi_{\nabla f_{|T_{\eta}\K}}(0) 
\det(G_\eta) \, \hauss_j(d\eta) \\ 
& = 
\Big[\int_0^1 h(t) \, dt \Big]
\sum_{j=0}^p \int_{\partial_j\K}  \Ee \Big(
\Mm_{\Lambda_{\eta}, \V_{\eta}, \mu_{\eta}, 
  \sigma^2_{\eta}}(1)  \indic_{\{\V^-_{\eta} \leq \V^+_{\eta},
  \V^0_{\eta} \leq 0\}} \Big) \; \cdot \\
%\displaybreak[0]
& \hspace{175pt} \psi_{\nabla f_{|T_{\eta}\K}}(0) 
\det(G_\eta) \, \hauss_j(d\eta) \\ 
% &= \Big[\int_0^1 h(t) \, dt \Big] \cdot 
% \Ee \big( \# \{\eta:\eta=\eta^*\} \big) \\  
&=  \int_0^1 h(t) \, dt.
\end{align*}
\end{proof}

\begin{remark}
In particular, Lemma \ref{lem:mainresult} shows that under $H_0$, 
\begin{equation*}
\Ss_{\Lambda_{\eta^*}, \V_{\eta^*}, 0,
\sigma^2_{\eta^*}}(f_{\eta^*}) \sim \mathrm{Unif}(0,1).
\end{equation*} 
This proves our main result, Theorem \ref{thm:mainresult}, noting
that the statistic in \eqref{eq:sunif1} is just 
\smash{$\Ss_{\Lambda_{\eta^*}, \V_{\eta^*}, 0,
\sigma^2_{\eta^*}}(f_{\eta^*})$} written out a little more explicitly.
\end{remark}

\begin{remark}
We have used the survival function of $f_{\eta}$ conditional on
$\eta$ being the global maximizer as well the additional local information
$(\Lambda_{\eta}, \V_{\eta})$. Conditioning on this
extra information makes the test very simple to compute, at least
in the LASSO, group LASSO and nuclear norm cases. If we were
to marginalize over these quantities, we would have a more powerful test. In general,
it seems difficult to analytically marginalize over these quantities, but perhaps Monte
Carlo schemes would be feasible. Further, Lemma \ref{lem:mainresult} holds for
any $\mu$, implying that this marginalization over $(\Lambda_{\eta}, \V_{\eta})$ would need
access to the unknown $\mu$. Under the global null,  $H_0:\mu=0$, this is not
too much of an issue, though it already causes a problem for the construction
of selection intervals described in Section \ref{sec:selectiveinf}.
\end{remark}

\section{Practicalities and examples}
\label{sec:examples}

Given an instance of the regularized regression problem in
\eqref{eq:genprob}, we seek to compute the test statistic
\begin{equation}
\label{eq:teststat}
\Ss_{\Lambda_{\eta^*}, \V_{\eta^*}, 0,
  \sigma^2_{\eta^*}}(\lambda_1) = 
\frac{\displaystyle \int_{\lambda_1}^{\V^+_{\eta^*}}
\det(\Lambda_{\eta^*}+z I) \phi_{\sigma^2_{\eta^*}} (z) \, dz}
{\displaystyle \int_{\V^-_{\eta^*}}^{\V^+_{\eta^*}}
\det(\Lambda_{\eta^*}+z I) \phi_{\sigma^2_{\eta^*}} (z) \, dz},
\end{equation}
and compare this against $\mathrm{Unif}(0,1)$.
Recalling that $\Lambda_\eta = H_\eta^{-1} G_\eta$, this leaves 
us with essentially 6 quantities to be computed---\smash{$\lambda_1, 
\V^{+}_{\eta^*}, \V^{-}_{\eta^*}, G_{\eta^*},
H_{\eta^*},\sigma^2_{\eta^*}$}---and the above integral to be
calculated.  

If we know the dual seminorm $\dualpen$ of the penalty
$\pen$ in closed form, then the first knot $\lambda_1$ can be found
explicitly, as in $\lambda_1 = \dualpen(X^T(I-P_{X\dualball^\perp})y)$;
otherwise, it can be found numerically by solving the (convex)
optimization problem   
\begin{equation*}
\lambda_1 = \max_{\eta \in \real^p} \eta^T X^T (I-P_{X\dualball^\perp}) y \;\;
\st \;\; \pen(\eta) \leq 1.
\end{equation*}
The remaining quantities,
\smash{$\V^{+}_{\eta^*}, \V^{-}_{\eta^*}, G_{\eta^*},
H_{\eta^*},\sigma^2_{\eta^*}$}, all depend on $\eta^*$ and on the
tangent space $T_{\eta^*}\K$.  Again, depending on $\dualpen$, the
maximizer $\eta^*$ can either be found in closed form, or numerically
by solving the above optimization problem.  Once we know the
projection operator onto the tangent space $T_{\eta^*}\K$, there is an
explicit expression for $\sigma^2_{\eta^*}$, recall \eqref{eq:sigma}; 
furthermore, $\V^-_{\eta^*},\V^+_{\eta^*}$ are given by two 
more tractable (convex) optimization problems (which in some cases
admit closed form solutions), see Section \ref{sec:linfrac}.   

The quantities $G_{\eta^*},H_{\eta^*}$ are different, however; even
once we know $\eta^*$ and the tangent space $T_{\eta^*}\K$, finding  
$G_{\eta^*},H_{\eta^*}$ involves computing the Hessian $\nabla^2
f_{|T_{\eta^*}\K}$, which requires a geometric understanding of the
curvature of $f$ around $T_{\eta^*}\K$.  That is,
$G_{\eta^*},H_{\eta^*}$ cannot be calculated numerically (say,
via an optimization procedure, as with
$\lambda_1,\V_{\eta^*}^-,\V^+_{\eta^*}$), and demand a more
problem-specific, mathematical focus. For this reason, computation of 
$G_{\eta^*},H_{\eta^*}$ can end up being an involved process
(depending on the problem).  In the examples that follow, we do not
give derivation details for the Hessian $\nabla^2 f_{|T_{\eta^*}\K}$,
but refer the reader to \citet{RFG} for the appropriate background 
material. 

\JLcomment{This seemed like a natural spot to include the algorithm}

\begin{algorithm}
  \caption{Computing the \textit{Kac-Rice} pivot}
  \label{algo:fs}
  \begin{algorithmic}[1]
%    \REQUIRE An $n$ vector $y$ and $n \times p$ matrix $X$ of $G$ variable groups with weights $w_g$
%    \ENSURE Ordered active set $A$ of variable groups included in the model at each step
    \STATE Solve for $\lambda_1$ and $\eta^*$ \COMMENT{using consistent notation, see \eqref{eq:f} and Section \ref{sec:notation}}
    \STATE Form an orthonormal basis $V_{\eta^*}$ of the tangent space $T_{\eta^*}\K$
    \STATE Compute the projection $P_{\eta^*,X,\Sigma}$ in \eqref{eq:projection}
    \STATE Evaluate the conditional variance $\sigma^2_{\eta^*}$ and $C_{X,\Sigma}(\eta^*)$ from \eqref{eq:sigma} and \eqref{eq:c}
    \IF{$f_{|T_{\eta^*}\K}$ has zero Hessian}
        \STATE Let $\Lambda_{\eta^*} = 0$
    \ELSIF{$\nabla^2 f_{|T_{\eta^*}\K} \neq 0$}
        \STATE Let $\Lambda_{\eta^*}=G^{-1}_{\eta^*} H_{\eta^*}$ from \eqref{eq:g} and \eqref{eq:h}
    \ENDIF
    \STATE Solve the optimization problems \eqref{eq:vminus} and \eqref{eq:vplus}, yielding $\V^-_{\eta^*}$, $\V^+_{\eta^*}$
    \STATE Evaluate the integrals in \eqref{eq:teststat} to obtain $\Ss = \Ss_{\Lambda_{\eta^*}, \V_{\eta^*}, 0, \sigma^2_{\eta^*}}(\lambda_1)$
    \RETURN $\Ss$
  \end{algorithmic}
\end{algorithm}

We now revisit the lasso example, and then consider the 
group lasso and nuclear norm penalties, the latter yielding
applications to principal components and matrix completion.   We
remark that in the lasso and group lasso cases, the matrix
$\Lambda_{\eta^*}=G^{-1}_{\eta^*} H_{\eta^*}$ is zero, simplifying the
computations. In contrast, it is nonzero for the nuclear norm case.    
%\begin{table}
%\label{tab:ingredients}
%\begin{center}
%\begin{tabular}{lrr}
%Setting & $G_{\eta^*}$ &  $H_{\eta^*}$\\
%\hline
%Lasso &    &   \\
%Group lasso &    &   \\
%Principal components &    &   \\
%Matrix completion &    &   \\
%\end{tabular}
%\end{center}
%\end{table}

Also, it is important to point out that in all three problem cases, we
have $\spa(C)=\real^p$, so the notational shortcut that we applied
in Sections \ref{sec:character} and \ref{sec:kacrice} has no effect
(see Section \ref{sec:notation}), and we can use the formulae from
these sections as written.

\subsection{Example: the lasso (revisited)}

For the lasso problem \eqref{eq:lasso}, we have
$\pen(\beta)=\|\beta\|_1$ and 
$\dualball = \{u : \|u\|_\infty \leq 1\}$, so 
$\dualpen(\beta)=\|\beta\|_\infty$ and 
$\K = \dualball^\circ = \{v : \|v\|_1 \leq 1\}$.
Our Morse assumption on the process $f_{\eta}=\eta^T X^T y$ over $\K$ (which amounts
to an assumption on the design matrix $X$) 
implies that there is a unique index $j^*$ such that 
\begin{equation*}
\lambda_1 = |X_{j^*}^T y| = 
\|X^T y\|_\infty = \max_{\|\eta\|_1 \leq 1} \eta^T X^T y.
\end{equation*}
Then in this notation $\eta^*=\sign(X_{j^*}^T y) \cdot e_{j^*}$  
(where $e_{j^*}$ is the
$j^*$th standard basis vector), and the normal cone to $\K$ at
$\eta^*$ is  
\begin{equation*}
N_{\eta^*}\K = \{v \in \real^p : \sign(v_{j^*})=\sign(X_{j^*}^Ty), \;
|v_j| \leq |v_{j^*}| \;\, \text{for all} \;\, j\not=j^* \}.
\end{equation*}
Because this is a full-dimensional set, the tangent space to $\K$ at 
$\eta^*$ is $T_{\eta^*}\K = (N_{\eta^*}\K)^\perp = \{0\}$.  This
greatly simplifies our survival function test statistic \eqref{eq:surv} since
all matrices in consideration here are $0\times 0$ and 
therefore have determinant 1, giving
\begin{equation*}
\Ss_{\Lambda_{\eta^*}, \V_{\eta^*}, 0, \sigma^2_{\eta^*}} = 
\frac{\Phi(\V_{\eta^*}^-/\sigma_{\eta^*}) -
  \Phi(\lambda_1/\sigma_{\eta^*})}
{\Phi(\V_{\eta^*}^-/\sigma_{\eta^*}) -
  \Phi(\V_{\eta^*}^+/\sigma_{\eta^*})}
\end{equation*}
The lower and upper limits $\V_{\eta^*}^+,\V_{\eta^*}^-$ are easily
computed by solving two linear fractional programs, see Section
\ref{sec:linfrac}.  The variance $\sigma^2_{\eta^*}$ of $f_{\eta^*}$
is given by \eqref{eq:sigma}, and again simplifies because
$T_{\eta^*}\K$ is zero dimensional, becoming
\begin{equation*}
\sigma^2_{\eta^*} = (\eta^*)^T X^T \Sigma X \eta^* = X_{j^*}^T \Sigma
X_{j^*}. 
\end{equation*}
Plugging in this value gives the test 
statistic as in \eqref{eq:lassopval1} in Section \ref{sec:lasso}.  The
reader can return to this section for examples and discussion in the
lasso case.

\subsection{Example: the group lasso}
\label{sec:group}

The group lasso \citep{grouplasso} can be viewed
as an extension of the lasso for grouped (rather than individual)
variable selection.  Given a pre-defined collection $\cG$ of groups,
with $\cup_{g \in \cG} g = \{1,\ldots p\}$,
the group lasso penalty is defined as
\begin{equation*}
\pen(\beta) = \sum_{g=1}^G w_g \|\beta_g\|_2, 
\end{equation*}
where $\beta_g \in \real^{|g|}$ denotes the subset of
components of $\beta \in \real^p$ corresponding to $g$, and $w_g > 0$ 
for all $g \in \cG$. 
%In the decomposition of
%$\{1,\ldots p\}$ given by $\cG$, there can be overlapping groups.  
We note that
\begin{equation*}
\dualball = \{u \in \real^p : \|u_g\|_2 \leq w_g, \; g \in \cG\},
\end{equation*}
so the dual of the penalty is 
\begin{equation*}
\dualpen(\beta) = \max_{g \in \cG} \, w_g^{-1} \|\beta_g\|_2,
\end{equation*}
and 
\begin{equation*}
\K = \dualball^\circ = \Big\{v \in \real^p : \sum_{g\in\cG} w_g \|v_g\|_2
\leq 1\Big\}.
\end{equation*}
Under the Morse assumption on $f_\eta = \eta^T X^T y$ over $\K$ (again, this
corresponds to an assumption about the design matrix $X$), there
is a unique group $g^*$ such that 
\begin{equation*}
\lambda_1 = w_{g^*}^{-1} \|X_{g^*}^T y\|_2 =
\max_{g \in \cG} \, w_g^{-1} \|X_g^T y\|_2 = 
\max_{\sum_{g\in \cG} w_g \|\eta_g\|_2 \leq 1} \, \eta^T X^T y,
\end{equation*}
where we write $X_g \in \real^{n\times |g|}$ to denote the matrix
whose columns are a subset of those of $X$, corresponding to $g$.   
Then the maximizer $\eta^*$ is given by
\begin{equation*}
\eta^*_g = \begin{cases}
\displaystyle
\frac{X_g^Ty}{w_g\|X_g^Ty\|_2} &
\text{if}\;\, g=g^* \\
0 & \text{otherwise}
\end{cases},
\qquad 
\text{for all}\;\, g \in \cG,
\end{equation*}
and the normal cone $N_{\eta^*}\K$ is seen to be
\begin{multline*}
N_{\eta^*}\K = \Big\{v\in \real^p : 
v_{g^*} = c \,X_{g^*}^T y, \;
\|v_g\|_2/w_g \leq c \|X_{g^*}^T y\|_2 / w_{g^*} \\
\text{for all}\;\, g\not=g^*, \; c\geq 0 \Big\}.
\end{multline*}
Hence the tangent space $T_{\eta^*}\K = (N_{\eta^*}\K)^\perp$ is 
\begin{equation*}
T_{\eta^*}\K = 
\Big\{ u \in \real^p : u_{g^*}^T X_{g^*}^T y = 0, \;
u_g = 0 \;\,\text{for all}\;\, g\not=g^*\Big\},
\end{equation*}
which has dimension $r^*-1$, with $r^*=\mathrm{rank}(X_{g^*})$. 
An orthonormal basis $V_{\eta^*}$ for this tangent space is 
given by padding an orthonormal basis for $(\spa(X_{g^*}^T y))^\perp$ 
with zeros appropriately.  From this we can compute the projection
operator 
\begin{equation*}
P_{\eta^*,X,\Sigma} = \Sigma X V_{\eta^*} (V_{\eta^*}^T X^T \Sigma X 
V_{\eta^*})^\dagger V_{\eta^*}^T X^T,
\end{equation*}
and the variance of $f_{\eta^*}$ as
\begin{equation*}
\sigma_{\eta^*}^2 = \frac{1}{w_{g^*}^2 \|X_{g^*}^T y\|_2^2}
y^T X_{g^*}X_{g^*}^T (I-P_{\eta^*,X,\Sigma}) \Sigma X_{g^*}X_{g^*}^T
y. 
\end{equation*}
The quantities $\V^-_{\eta^*},\V^+_{\eta^*}$ can be readily computed by
solving two convex programs, see Section \ref{sec:linfrac}.
Finally, we have $H_{\eta^*}=0$ in the group lasso
case, as the special form of curvature matrix of a sphere
implies that
 \smash{$G_{\eta^*} \tf^{\eta^*}_{\eta^*} = -\nabla^2
   f_{|T_{\eta^*}\K}$} in \eqref{eq:g}. 
This makes $\Lambda_{\eta^*}=G_{\eta^*}^{-1}H_{\eta^*}=0$, and the
test statistic \eqref{eq:teststat} for the group lasso problem becomes 
\begin{equation}
\label{eq:grouppval}
\frac{\displaystyle \int_{\lambda_1}^{\V^+_{\eta*}} z^{r^*-1}
  \phi_{\sigma^2_{\eta^*}}(z) \, dz}
{\displaystyle \int_{\V^-_{\eta^*}}^{\V^+_{\eta*}} z^{r^*-1}
  \phi_{\sigma^2_{\eta^*}}(z) \, dz} = 
\frac{  \Pp(\chi_{r^*} \leq \V^+_{\eta^*}/\sigma_{\eta^*})  -  
\Pp(\chi_{r^*} \leq \lambda_1 / \sigma_{\eta^*})}
{\Pp (\chi_{r^*} \leq \V^+_{\eta^*} / \sigma_{\eta^*})
 - \Pp (\chi_{r^*} \leq \V^-_{\eta^*}/\sigma_{\eta^*})}.
\end{equation}
In the above, $\chi_{r^*}$ denotes a chi distributed random
variable with $r^*$ degrees of freedom, and the equality follows from
the fact that the missing multiplicative factor in the $\chi_{r^*}$
density [namely, $2^{1-r^*/2}/\Gamma(r^*/2)$] is common to the 
numerator and denominator, and hence cancels.

\begin{figure}
\begin{center}
\subfigure[Kac-Rice test]{
\label{fig:grouplasso:pval:exact}
\includegraphics[width=0.5\textwidth]{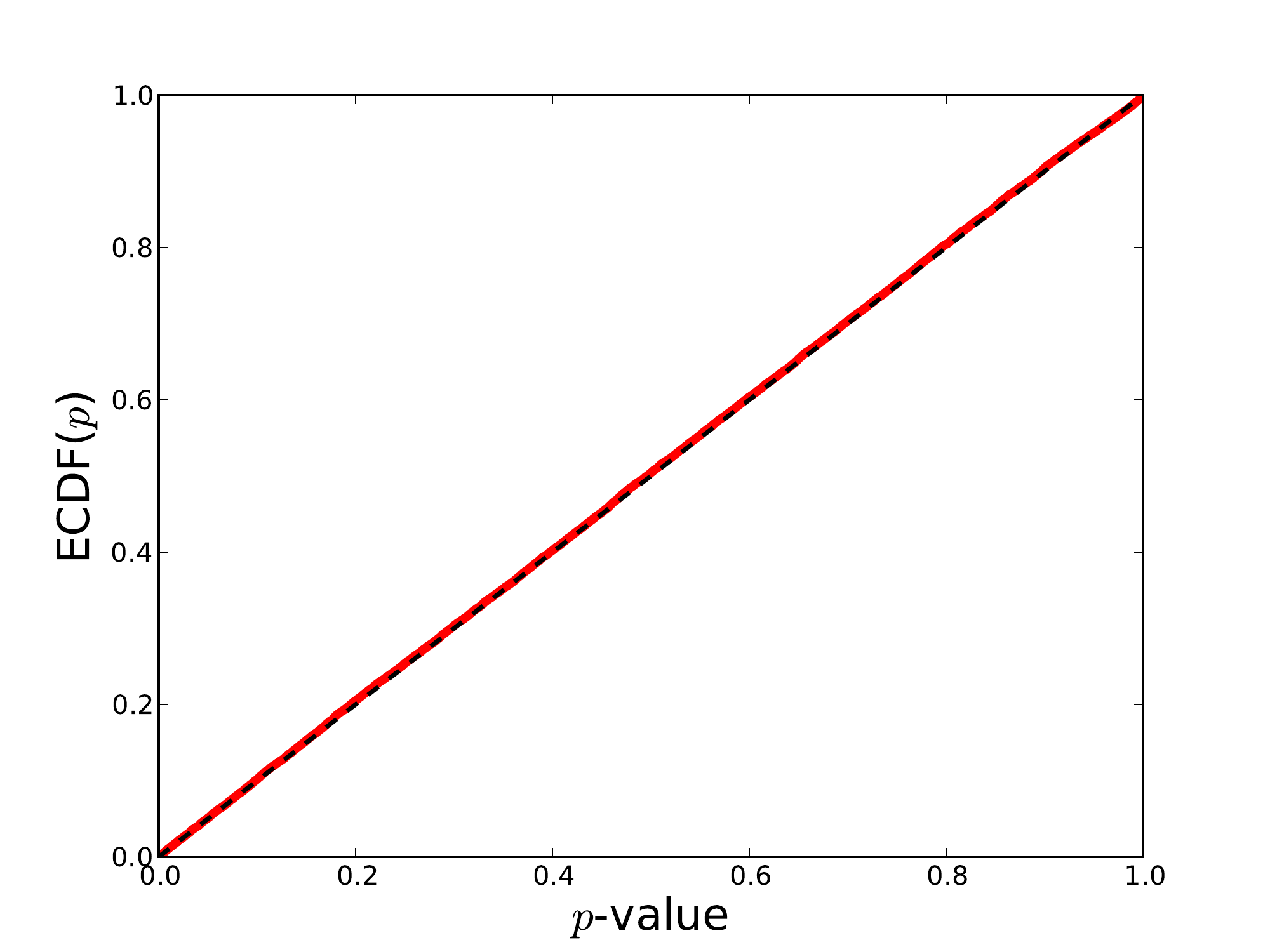}}
\hspace{-15pt}
\subfigure[Covariance test]{
\label{fig:grouplasso:pval:exp}
\includegraphics[width=0.5\textwidth]{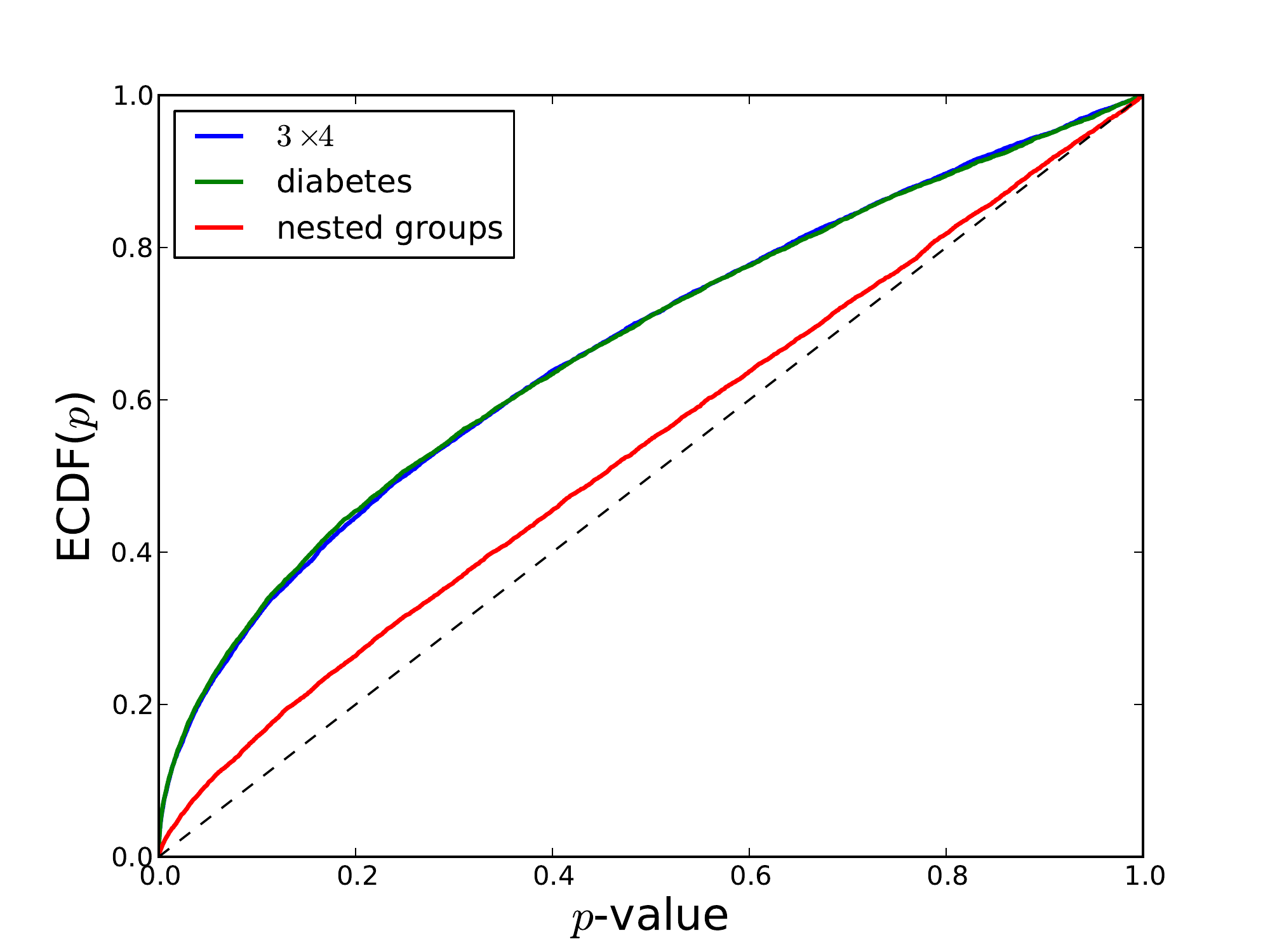}}
\caption{\small \it The left panel shows the empirical
    distribution function of a sample of 20,000 p-values
    \eqref{eq:grouppval} computed from various group lasso setups,
    which agrees very closely with the uniform distribution.  The
    right panel shows the empirical distribution functions in three
    different group lasso setups, each over 10,000 samples, when
    p-values are instead computed using an $\mathrm{Exp}(1)$
    approximation for the covariance test.  This approximation 
    ends up being anti-conservative whereas the Kac-Rice test is exact.}
\end{center}
\end{figure}

Figure \ref{fig:grouplasso:pval:exact} shows the empirical
distribution function of a sample of 20,000 p-values from problem
instances sampled randomly from a variety of different group lasso
setups (all under the global null model $\beta_0=0$):
\begin{itemize}
\item small case: $X$ is $3 \times 4$, a fixed matrix slightly
  perturbed by Gaussian noise; there are 2 groups of size 2, one with 
  weight $\sqrt{2}$, the other with weight $0.1$;
\item fat case: $X$ is $100 \times 10,000$ with features drawn from
  the compound symmetric Gaussian distribution having correlation
  0.5; here are 1000 groups each of size 10, each having weight
  $\sqrt{10}$;
\item tall case: $X$ is $10,000 \times 100$ with features drawn from
  the compound symmetric Gaussian distribution having correlation 0.5; 
  there are 1000 groups each of size 10, each having weight $\sqrt{10}$;
\item square case: $X$ is $100 \times 100$ with features drawn from
  the compound symmetric Gaussian distribution having correlation 0.5;
  there are 10 groups each of size 10, each having weight $\sqrt{10}$;
\item diabetes case 1: $X$ is $442 \times 10$, the diabetes data set
  from \citet{lars}; there are 4 (arbitrarily created) groups: one of
  size 4, one of size 2, one of size 3 and one of size 1, with varying
  weights;
\item diabetes case 2: $X$ is $442 \times 10$, the diabetes data set
  from \citet{lars}; there are now 10 groups of size 1 with i.i.d. random
  weights drawn from $1 + 0.2 \cdot \mathrm{Unif}(0,1)$ (generated
  once for the entire simulation);
\item nested case 1: $X$ is $100 \times 10$, with two nested groups
  (the column space for one group of size 2 is contained in that of
  the other group of size 8) with the weights favoring inclusion of
  the larger group first; 
\item nested case 2: $X$ is $100 \times 10$, with two nested groups
  (the column space for one group of size 2 is contained in that of
  the other group of size 8) with the weights favoring inclusion of
  the smaller group first;
\item nested case 3: $X$ is $100 \times 12$, with two sets of two
  nested groups (in each set, the column space for one group of size 2
  is contained in that of the other group of size 4) with the weights
  chosen according to group size; 
\item nested case 4: $X$ is $100 \times 120$, with twenty sets of two
  nested groups (in each set, the column space for one group of size 2
  is contained in that of the other group of size 4) with the weights
  chosen according to group size.
\end{itemize}
As we can see from the plot, the p-values are extremely close to
uniform.  

In comparison, arguments similar to those given in \citet{covtest} for
the lasso case would suggest that for the group lasso, under the null
hypothesis, 
\begin{equation*}
\frac{ \lambda_1 (\lambda_1 - \V^-_{\eta^*})}
{\sigma_{\eta^*}^2} \overset{d}{\to} \text{Exp}(1)
\quad \text{as} \quad n,p \rightarrow \infty,
\end{equation*}
under some conditions (one of these being that $\V^-_{\eta^*}$
diverges to $\infty$ fast enough).  
Figure \ref{fig:grouplasso:pval:exp} shows the  empirical distribution
function of 10,000 samples from three of the above scenarios,
demonstrating that, while asymptotically reasonable,
the $\text{Exp}(1)$ approximation for the covariance test in the group 
lasso case can be quite anti-conservative in finite samples.

\subsection{Nuclear norm}
\label{sec:nuclear}

In this setting, we treat the coefficients in \eqref{eq:genprob} as a 
matrix, instead of a vector, denoted by $B \in 
\real^{n \times p}$.  We consider a nuclear norm penalty on $B$,
\begin{equation*}
\pen(B) = \|B\|_* = \tr(D),
\end{equation*}
where $D$ is the diagonal matrix of singular values in the singular
value decomposition $B=UDV^T$.  Here the dual seminorm is 
\begin{equation*}
\dualpen(B) = \|B\|_\op = \max(D),
\end{equation*}
the operator norm (or spectral norm) of $B$, i.e., its maximum
singular value.  Therefore we have
\begin{align*}
\dualball &= \{ A : \|A\|_\op \leq 1\}, \\
\K = \dualball^\circ &= \{ W : \|W\|_* \leq 1\}.
\end{align*}
Examples of problems of the form \eqref{eq:genprob} with nuclear norm
penalty $\pen(B)=\|B\|_*$ can be roughly categorized according to the
choice of linear operator $X=X(B)$.  For example, 
\begin{itemize}
\item {\it principal components analysis:} if $X:\real^{n \times p}
\rightarrow \real^{n \times p}$ is the identity map, then $\lambda_1$
is the largest singular value of $y \in \real^{n \times p}$, and
moreover, $\V^-_{\eta^*}$ is the second largest singular value of $y$; 
\item {\it matrix completion:} if $X:\real^{n\times p} \rightarrow
\real^{n\times p}$ zeros out all of the entries of its argument
outside some index set $\mathcal{O} \subseteq \{1,\ldots n\} \times 
\{1,\ldots p\}$, and leaves the entries in $\mathcal{O}$ untouched,
then problem \eqref{eq:genprob} is a noisy version of the matrix
completion problem
\citep{candes:recht:exact,mazumder:hastie:softsvd}. 
\item {\it reduced rank regression:} if $X:\real^{n\times p}
  \rightarrow \real^{m\times p}$ performs matrix multiplication,
  $X(B)=XB$, and $y\in\real^{m\times p}$, then problem
  \eqref{eq:genprob} is often referred to as reduced rank 
  regression \citep{reducedrank}.
\end{itemize}
The first knot in the solution path is given by
$\lambda_1 = \|X^T(y)\|_\op$, with $X^T$ denoting the adjoint of the
linear operator $X$.  
Assuming that $X^T(y)$ has singular value decomposition $X^T(y)=UDV^T$ 
with $D=\mathrm{diag}(d_1,d_2,\ldots)$ for $d_1 \geq d_2 \geq \ldots$,
and that the process $f_\eta=\langle \eta, X^T(y)\rangle$ is Morse
over $\eta \in \K$, there is a unique $\eta^*\in\K$ achieving the
value $\lambda_1$,  
\begin{equation*}
\eta^* = U_1 V_1^T,
\end{equation*}
where $U_1,V_1$ are the first columns of $U,V$, respectively.  The
normal cone $N_{\eta^*}\K$ is
\begin{multline*}
N_{\eta^*}\K = \Big\{ c \, U_1V_1^T + c \, 
\widetilde{U} \widetilde{D} \widetilde{V}^T : 
\widetilde{U}^T \widetilde{U} = I, \;
\widetilde{V}^T \widetilde{V} = I, \;
\widetilde{D} = \mathrm{diag}(\widetilde{d}_1, \widetilde{d}_2,
\ldots), \\
U_1^T \widetilde{U} = 0, \; V_1^T \widetilde{V}=0, \;
\max(\widetilde{D}) \leq 1, \; c \geq 0\Big\},
\end{multline*}
and so the tangent space $T_{\eta^*}\K = (N_{\eta^*}\K)^\perp$ is
\begin{equation*}
T_{\eta^*}\K = \spa\Big(\Big\{U_1V_j^T, j=2,\ldots p\Big\} \cup
  \Big\{U_j V_1^T, j=2,\ldots n\Big\} \Big).
\end{equation*}
From this tangent space, the marginal variance $\sigma_{\eta^*}^2$ in
\eqref{eq:sigma} can be easily computed.  This leaves
\smash{$\V^-_{\eta^*},\V^+_{\eta^*},G_{\eta^*},H_{\eta^*}$} to be
addressed. 
As always, the quantities \smash{$\V^-_{\eta^*},\V^+_{\eta^*}$} can be 
determined numerically, as the optimal values of two convex
programs, see Section \ref{sec:linfrac}.  We now discuss
computation of the $(n+p-2)\times(n+p-2)$ matrices
\smash{$G_{\eta^*},H_{\eta^*}$}, and refer the reader 
to \citet{candes:svt,takemura:kuriki,takemura:kuriki1} for some of  
the calculations that follow.
The entries of $G_{\eta^*}$ are given by
\begin{align*}
G_{\eta^*}(U_1V_{i}^T, U_jV_1^T) &= G_{\eta^*}(U_jV_1^T, U_1V_j^T) \\  
&= \tr(V_jU_i^T C_{X,\Sigma}(\eta^*)), \\
G_{\eta^*}(U_1V_i^T, U_1V_j^T) &= G_{\eta^*}(U_iV_1^T,U_jV_1^T) \\ 
&= \delta_{ij} \cdot \tr(V_1U_1^T C_{X,\Sigma}(\eta^*)), 
\end{align*}
with $C_{X,\Sigma}(\eta^*)$ as in \eqref{eq:c}, which has a similar
computational form to $\sigma^2_{\eta^*}$, and can be computed
from knowledge of $T_{\eta^*}\K$ above.  The entries of $H_{\eta^*}$
are given by 
\begin{align*}
H_{\eta^*}(U_1V_{i}^T, U_jV_1^T) &= H_{\eta^*}(U_iV_1^T, U_1V_j^T) \\ 
&= \tr\big(V_iU_j^T X^T(y)\big) - d_1 \cdot 
G_{\eta^*}(U_jV_1^T, U_1V_i^T) \\ 
&= \delta_{ij} \cdot d_i - d_1 \cdot 
G_{\eta^*}(U_jV_1^T, U_1 V_i^T), \\ 
H_{\eta^*}(U_1V_{i}^T, U_1V_j^T) &= 
H_{\eta^*}(U_iV_{1}^T, U_jV_i^T) \\ 
&= \delta_{ij} \big[d_1 - \tr\big(V_1U_1^T
C_{X,\Sigma}(\eta^*)\big)\big].  
\end{align*}
When expressed in a suitable ordering of the above basis of the
tangent space, the matrix form of the above expressions are,
abbreviating $C=C_{X,\Sigma}(\eta^*)$,
\begin{align*}
G_{\eta^*} &= \begin{pmatrix}
U_1^T CV_1 \cdot I_{(n-1) \times (n-1)} &  
U_{-1}^TCV_{-1} \\
V_{-1}^TC^TU_{-1}  & U_1^T CV_1
\cdot I_{(p-1) \times (p-1)} 
\end{pmatrix}, \\
H_{\eta^*} &= \begin{pmatrix} \left(d_1 - U_1^T
    CV_1\right)  I_{(n-1) \times (n-1)} &  
D_{-1} - d_1 U_{-1}^TCV_{-1} \\
D_{-1}^T - d_1 V_{-1}^TC^TU_{-1} & \left(d_1 - U_1^T
  CV_1\right) 
 I_{(p-1) \times (p-1)}
\end{pmatrix},
\end{align*}
where $U_{-1}$ denotes all but the first column of $U$, with $V_{-1}$
similarly defined, and $D_{-1}$ denotes the matrix 
$\text{diag}(d_2, d_3 ,\ldots)$ with zeros added below in such a way
that $D_{-1}$ is an $(n-1) \times (p-1)$ matrix.  That is, if
$r=\mathrm{rank}(X^T(y))$,
\begin{equation*}
D_{-1} = 
\begin{pmatrix}
\text{diag}(d_2, \ldots d_r) \\
0_{(n-r) \times (p-1)}
\end{pmatrix}.
\end{equation*}
In the special case of principal components analysis, in which $X$ is 
the identity map on $\real^{n \times p}$, the above expressions
simplify to $G_{\eta^*}=I_{(n+p-2) \times (n+p-2)}$ and 
\begin{equation*}
H_{\eta^*} = \begin{pmatrix}
0 & D_{-1} \\
 D_{-1}^T & 0
\end{pmatrix},
\end{equation*}
and the $(n+p-2)$ eigenvalues of $\Lambda_{\eta^*} = G_{\eta^*}^{-1}
H_{\eta^*}$ are seen to be 
\begin{equation*}
\{\pm d_j, j=2,\ldots r\} \cup [0] \cdot (n+p - 2r). 
\end{equation*}

\begin{figure}
\begin{center}
\subfigure[Our test]{
\label{fig:nuclearpvala}
\includegraphics[width=0.5\textwidth]{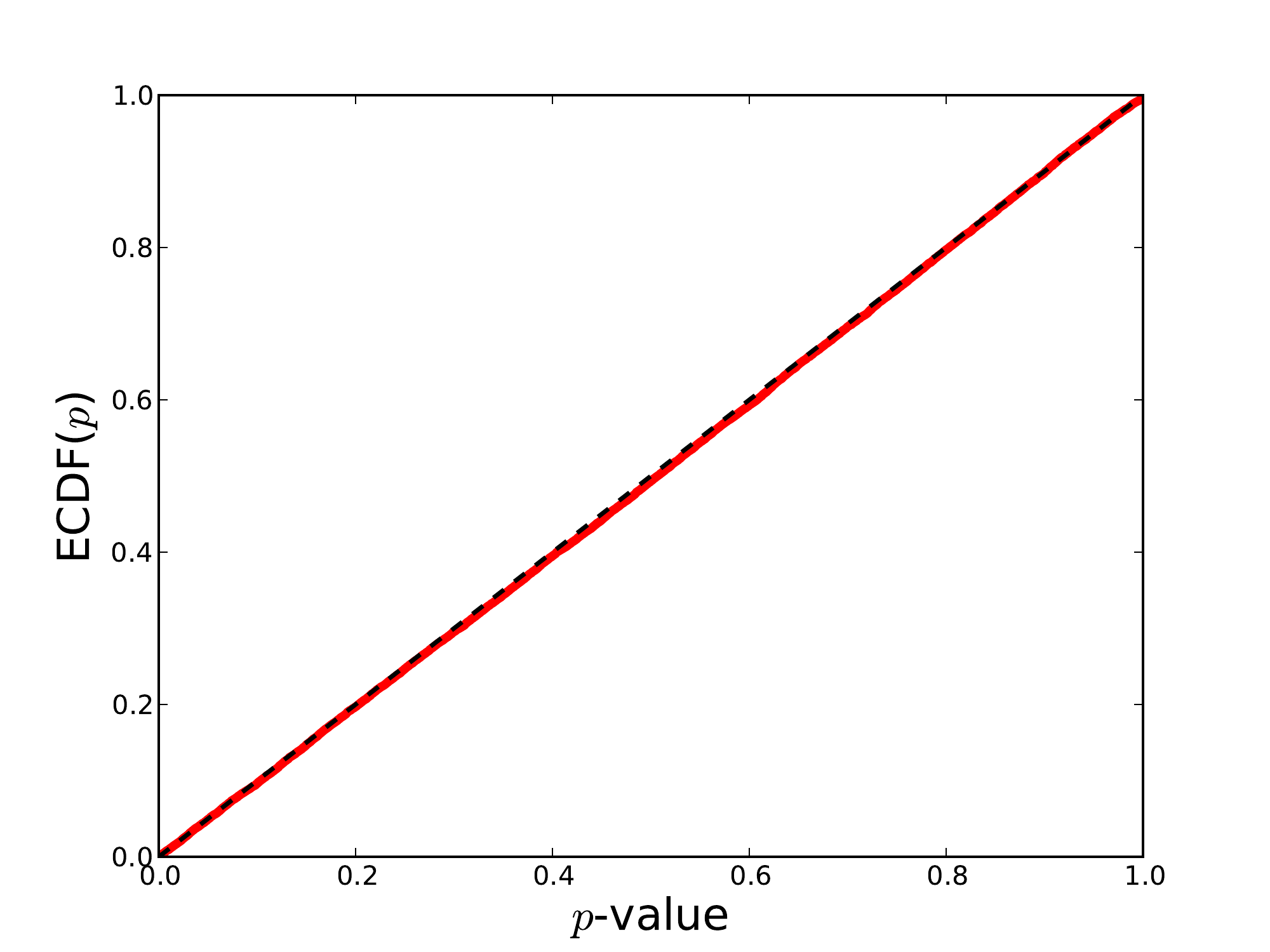}}
\hspace{-15pt}
\subfigure[Covariance test]{
\label{fig:nuclearpvalb}
\includegraphics[width=0.5\textwidth]{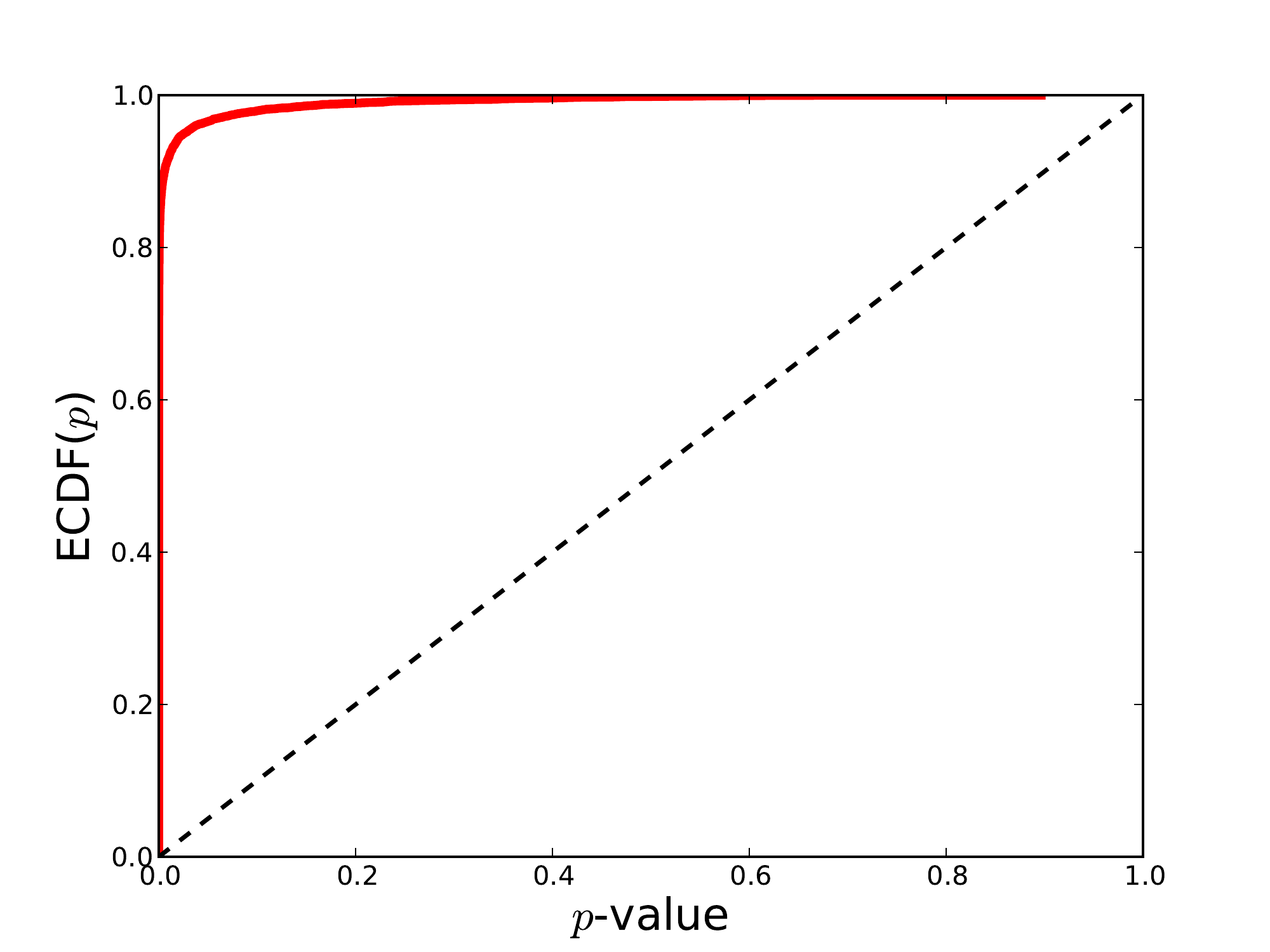}}
 \end{center}
\caption{\small\it The left panel shows the empirical distribution
  function of a sample of 20,000 p-values computed over a variety of
  problem setups that utilize the nuclear norm penalty. The right
  panel shows the distribution of 20,000 covariance test
  p-values when an $\mathrm{Exp}(1)$ approximation is used, which
  shows the exponential approximation to be clearly inappropriate for
  the nuclear norm setting.}
\label{fig:nuclear:pval}
\end{figure}

In Figure \ref{fig:nuclear:pval}, we plot the empirical distribution
function of a sample of 20,000 p-values computed over problem
instances that have been randomly sampled from the following 
scenarios, all employing the nuclear norm penalty (and all under 
the null model $B_0=0$, with $B_0$ being the underlying coefficient
matrix): 
\begin{itemize}
\item {\it principal components analysis:} 
$y$ is $2 \times 2, 3 \times 4, 50 \times 50, 100 \times
20, 30 \times 1000, 30 \times 5, 1000 \times 1000$; 
\item {\it matrix completion:} $y$ is 
$10 \times 5$ with 50\% of its entries observed at random,
$100 \times 30$ with 20\% of its entries observed at random,
$10 \times 5$ with a nonrandom pattern of observed entries,
$20 \times 10$ with a nonrandom pattern of observed entries,
$200 \times 10$ with 10\% of its entries observed at random; 
\item {\it reduced rank regression:} $X$ is $100 \times 10$ whose
  entries are drawn from the compound symmetric Gaussian distribution 
  with correlation 0.5, and $y$ is $100 \times 5$.
\end{itemize}
As is evident in the plot, the agreement with uniform 
is excellent (as above, each particular scenario above produces
$\text{Unif}(0,1)$ $p$-values regardless of how $X$ was chosen, modulo
the Morse assumption).

Again, along the lines of the covariance test, we consider
approximation of
\smash{$\lambda_1(\lambda_1-\V^-_{\eta^*})/\sigma^2_{\eta^*}$} by an
$\mathrm{Exp}(1)$ distribution under the null hypothesis.  Figure
\ref{fig:nuclearpvalb} shows that this approximation is quite far off,
certainly much more so than in the other examples.  Preliminary
calculations confirm mathematically that the $\mathrm{Exp}(1)$
distribution is not the right limiting distribution here; we will
pursue this in future work. 

% ATTN: Jon, why is using the covariance test for the nuclear norm so
% inappropriate?  We should at least comment briefly on this, as we are
% not showing any plot to this effect
% JT: included it, but it is really bad. The issue is that, assuming our 
% V^{\pm} solution is correct (or whenever it is correct, at least), there is
% an eigenvalue of \lambda that is exactly - V^+. This makes the density of
% Q 0 at 0, so it can never look like an exponential...

\section{Finding $\V^-_{\eta}, \V^+_{\eta}$}
\label{sec:linfrac}

Earlier, in Remark \label{rem:convexity} we noted that convexity of $\K$ was not needed for much of the work in this paper. It is only when we wish to actually
compute the test statistic that we restrict to convex $\K$. This section describes
convex optimization problems that can be solved to find the values of $\V^{\pm}_{\eta}$ which
are one of  the key stumbling pieces to computing the test statistic. Nevertheless,
if we had access to a computer that could compute these quantities, along with
$\Lambda_{\eta}$ we would
not need to restrict interest to convex $\K$. We are still assuming positive reach as this is an
important assumption to ensure the modified processes $\tf^{\eta}_{\eta}$ are not 
singular. See Chapter 14 of \cite{RFG} for the case when $\K$ is a smooth locally convex (i.e. positive reach) subset of the
Euclidean sphere. This paper considers arbitrary smooth subsets of positive reach
up to this point.

Inspecting Lemma \ref{lem:mainresult}, we see that in order to compute the $p$-value in \eqref{eq:sunif1}, we must find
$\V^-_{\eta}, \V^+_{\eta}$ at $\eta=\eta^*$.
Recall the definition of $\V^-_{\eta}$ in \eqref{eq:vplus},
\begin{equation*}
\V^-_{\eta} = \max_{z \in \K: \, z^T C_{X,\Sigma}(\eta) < 1} \,
\frac{\tf^{\eta}_z - z^T C_{X,\Sigma}(\eta) \cdot \tf^{\eta}_{\eta}}
{1 - z^T C_{X,\Sigma}(\eta)},
\end{equation*}
where $C_{X,\Sigma}(\eta)$ is defined as in \eqref{eq:c}, and can be
expressed more concisely as
\begin{equation*}
z^T C_{X,\Sigma}(\eta) = 
\frac{z^TX^T(I-P_{\eta,X,\Sigma})\Sigma X \eta}
{\eta^TX^T(I-P_{\eta,X,\Sigma}) \Sigma X\eta}.
\end{equation*}
Recall that at $\eta=\eta^*$ we have
\smash{$\tf^{\eta}_z=f_z$} for all $z$, and
\smash{$\tf^{\eta}_{\eta}=\lambda_1$},
so 
\begin{equation*}
\tf^{\eta^*}_z - z^T C_{X,\Sigma}(\eta^*) \cdot \tf^{\eta^*}_{\eta^*} =  
z^T \big(X^T y - C_{X,\Sigma}(\eta^*) \cdot \lambda_1 \big).
\end{equation*}
Therefore, $\V^-_{\eta^*}$ is given by the maximization
problem
\begin{align}
\nonumber
\V^-_{\eta^*} = \; & \max_{z \in \real^p} \qquad\quad
\frac{z^T \big(X^Ty - C_{X,\Sigma}(\eta^*) \lambda_1\big)}
{1 - z^T C_{X,\Sigma}(\eta^*)} \\
\label{eq:linfrac}
& \st \quad 
z \in \K, \; 1 - z^T C_{X,\Sigma}(\eta^*) > 0. 
\end{align}
This is a generalized {\it linear-fractional} problem \citep{boyd}.
(We say ``generalized'' here because the constraint $z \in \K 
\Longleftrightarrow \pen(z) \leq 1$ can be seen as an infinite
number of linear constraints, while the standard linear-fractional 
setup features a finite number of linear constraints.)  Though 
not convex, problem \eqref{eq:linfrac} is quasilinear (i.e., both
quasiconvex and quasiconcave), and further, it is equivalent to a
convex problem.  Specifically, 
\begin{align}
\nonumber
\V^-_{\eta^*} = \; & \max_{u \in \real^p, \, w \in \real}
\quad\; u^T \big(X^Ty - C_{X,\Sigma}(\eta^*) \lambda_1\big) \\
\label{eq:vpluscvx}
& \st \quad 
\pen(u) \leq w, \; w - u^T C_{X,\Sigma}(\eta^*) = 1, \;
w \geq 0.
\end{align}
The proof of equivalence between \eqref{eq:linfrac} and
\eqref{eq:vpluscvx} follows closely Section 4.3.2 of
\citet{boyd}, and so we omit it here. Similarly, $\V^+_{\eta^*}$ is
given by the convex minimization problem   
\begin{align}
\nonumber
\V^+_{\eta^*} = \; & \min_{u \in \real^p, \, w \in \real}
\quad\; u^T \big(X^Ty - C_{X,\Sigma}(\eta^*) \lambda_1\big) \\
\label{eq:vminuscvx}
& \st \quad 
\pen(u) \leq w, \; w - u^T C_{X,\Sigma}(\eta^*) = -1, \;
w \geq 0.
\end{align}

In the worst case, $\V^-_{\eta^*}$ and $\V^+_{\eta^*}$ can be
determined numerically by solving the problems \eqref{eq:vpluscvx} 
and \eqref{eq:vminuscvx}.  Depending on the penalty $\pen$, one may
favor a particular convex optimization routine over another for this
task, but a general purpose solver like ADMM \citep{admm} should be
suitable for a wide variety of penalties.
In several special cases involving the lasso, group lasso, and
nuclear norm penalties, $\V^-_{\eta^*}$ and $\V^+_{\eta^*}$
have closed-form expressions.  We state these next, but in the
interest of space, withhold derivation details. We leave more precise  
calculations to future work.  

\subsection{Lasso and group lasso}

The lasso problem is a special case of the group lasso where 
all groups have size one; therefore the formulae derived here for
the group lasso also apply to the lasso.  (In the lasso case,
actually, the angles $\psi^{\pm}$ below are always $\pm \pi$.) 
%For the group lasso, 
%We now restrict the optimization in $\V^{\pm}_{\eta}$ to $\real^{h}$
%for some group $h \neq g^*$. 
%Setting $K_h = \left\{\nu \in \real^h: C_{X,\Sigma}(\eta)_h^T\nu < 1
%\right\}$, we must find 
%$$
%\begin{aligned}
%v^+(\eta,h) &= \sup_{\nu \in \real^h: \|\nu\|_2 \leq w_h^{-1}, \nu
%\in K_h(\eta)} \frac{(X_h\nu)^Ty - C_{X,\Sigma}(\eta)_h^T\nu}{1 -
%C_{X,\Sigma}(\eta)_h^T\nu} \\  
%v^-(\eta,h) &= \inf_{\nu \in \real^h: \|\nu\|_2 \leq w_h^{-1}, \nu
%\in K_h(\eta)^c} \frac{(X_h\nu)^Ty - C_{X,\Sigma}(\eta)_h^T\nu}{1 -
%C_{X,\Sigma}(\eta)_h^T\nu}. \\  
%\end{aligned}
%$$

For any group $g\in\cG$, let $\theta(\eta,g)$ be the
angle between $X_g^Ty - C_{X,\Sigma}(\eta)_g$ and
$C_{X,\Sigma}(\eta)_g$ [where recall that $C_{X,\Sigma}(\eta)$ is as
defined in \eqref{eq:c}], and define the angles $\psi^{\pm}(\eta, g)$ by   
\begin{equation*}
\sin \psi^{\pm}(\eta, g) = \frac{\|C_{X,\Sigma}(\eta)_g\|_2}{w_g}
\cdot \sin \theta(\eta,g).
\end{equation*}
Define the quantities
\begin{equation*}
w^{\pm}(\eta,g) = \frac{\|X_g^Ty-C_{X,\Sigma}(\eta)_g\|_2 \cdot
\cos \psi^{\pm}(\eta,g) }{w_g - \|C_{X,\Sigma}(\eta)_g\|_2 \cdot
\cos\big(\theta(\eta,g)-\psi^{\pm}(\eta,g)\big)},
\end{equation*}
and
\begin{align*}
%\label{vplus:group:0}
v^+(\eta,g) &= \begin{cases}
\min\{w^{\pm}(\eta,g)\} & \text{if $\|C_{X,\Sigma}(\eta)_g\|_2 
\geq w_g$} \\ 
\max\{w^{\pm}(\eta,g)\} & \text{otherwise}
\end{cases}, \\
%\label{vminus:group:0}
v^-(\eta,g) &= \begin{cases}
\max\{w^{\pm}(\eta,g)\} & \text{if $\|C_{X,\Sigma}(\eta)_g\|_2 
\geq w_g$} \\ 
\infty & \text{otherwise}
\end{cases}.
\end{align*}
Then $\V^-_{\eta^*},\V^+_{\eta^*}$ have the explicitly computable form  
\begin{align*}
%\label{vpm:group}
\V^-_{\eta^*} &= \max_{g \neq g^*} \, v^+(\eta^*,g), \\
\V^+_{\eta^*} &= \min_{g \neq g^*} \, v^-(\eta^*,g).
\end{align*}

\subsection{Nuclear norm}

For principal components analysis, when
$X=I$, it is not difficult
to show that  $\V^-_{\eta^*} = d_2=-\min(\Lambda_{\eta^*})$ and
$\V^+_{\eta^*}=\infty$.    
%In the setting of Chapter 14 of \citep{RFG}, this would correspond to
%the critical radius, and $\V^-_{\eta^*}$ being achieved locally
%rather than globally. 
% In this case, 
%$\V^-_{\eta^*} =
% -\min(\Lambda_{\eta^*})$. Numerically, in fact, this seems true even
% when $X\neq I$. From 
%the considerations in Chapter 14 of \citep{RFG}, and the fact that
%$\eta^*$ is a local maxima of $f$, we know that $\V^-_{\eta^*}$ must be 
%at least as big as $-\min(\Lambda_{\eta^*})$, otherwise the Hessian
%at $\eta^*$ would not be non-positive definite. 
%On the other hand, in the interest of space, we do not attempt to
%prove that $\V^-_{\eta^*}=-\min(\Lambda_{\eta^*})$.  
Numerically, this seems to be true even for an arbitrary $X$ (reduced
rank regression) or arbitrary patterns of missingness (matrix
completion). 
We remark that computing
$\V^-_{\eta^*}=d_2$ requires solving an eigenvalue problem of size at
least $\max(n,p) \times \max(n,p)$.  (The ADMM approach for solving
problem \eqref{eq:vpluscvx} is no better, as each iteration involves 
projecting onto the nuclear norm epigraph, which requires an 
eigendecomposition.)  This is quite an expensive computation, and a
fast approximation is desirable. We leave this for future work. 

\begin{figure}\begin{center}
\subfigure[Group lasso]{
\includegraphics[width=0.5\textwidth]{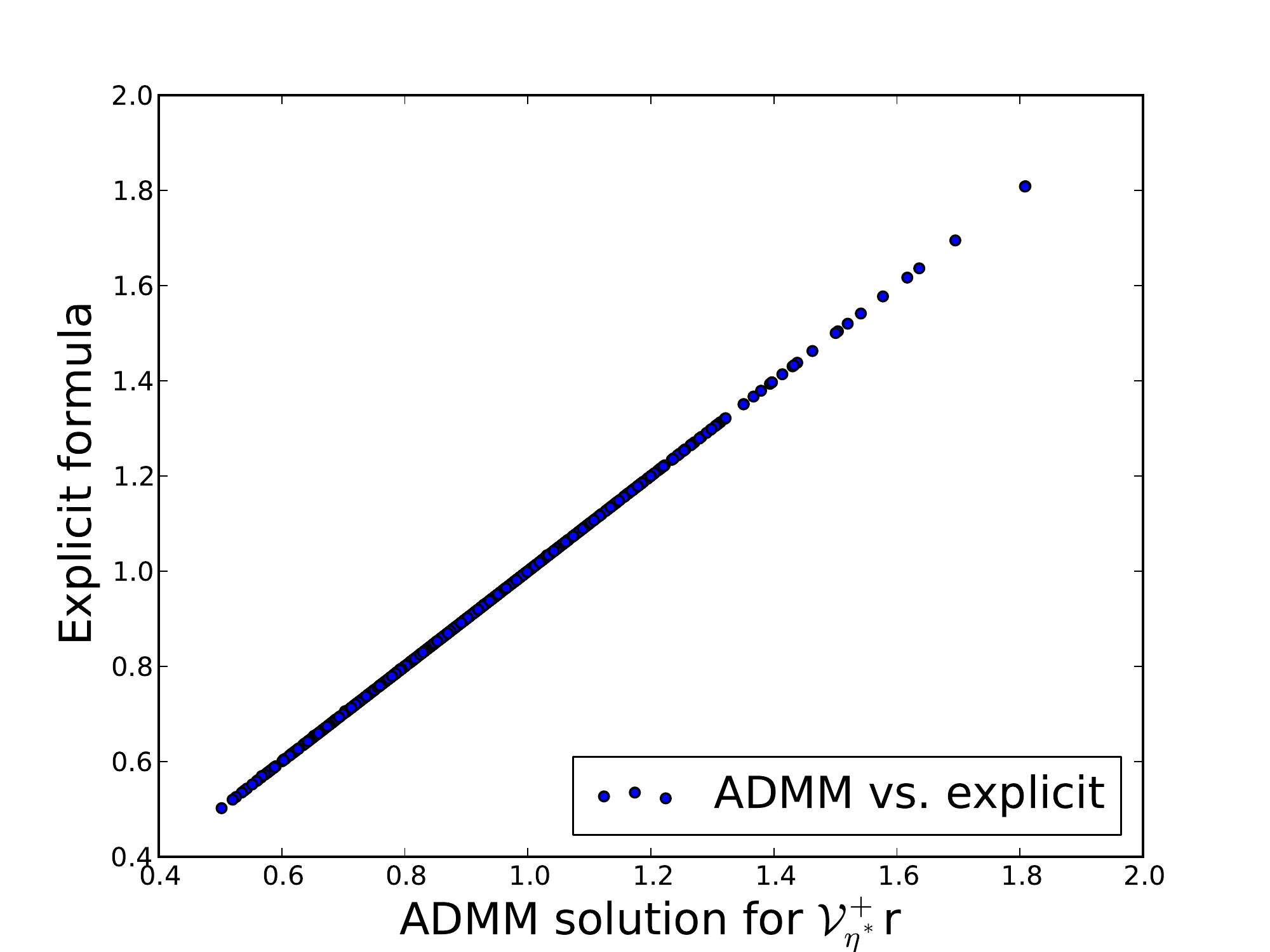}}
\hspace{-15pt}
\subfigure[Nuclear norm]{
\includegraphics[width=0.5\textwidth]{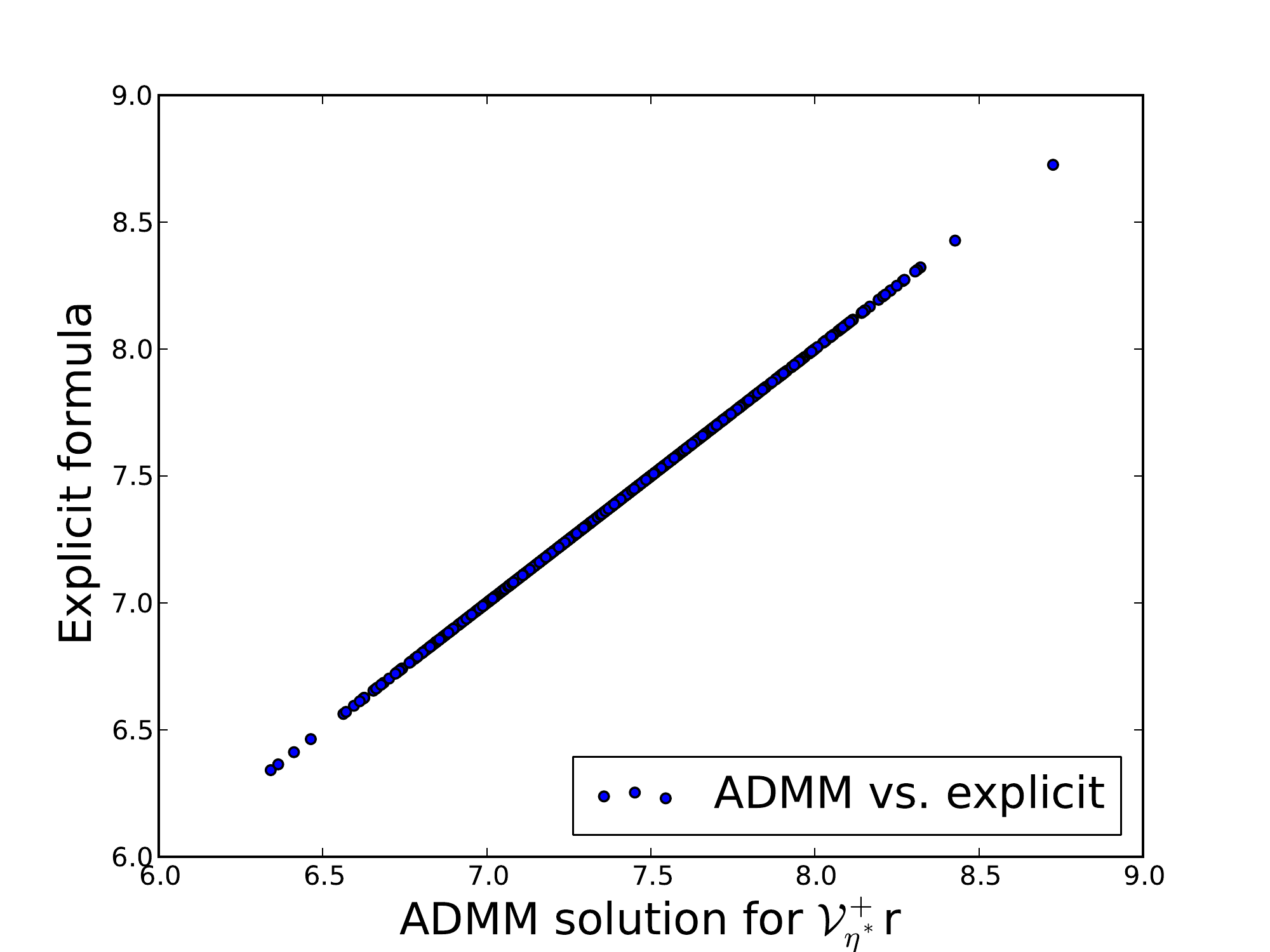}}
 \end{center}
\caption{\small\it Comparison of the explicit (analytically derived)
  form of $\V^-_{\eta^*}$ versus the solution of convex program
  \eqref{eq:vpluscvx} found numerically by ADMM, for the group 
  lasso and nuclear norm penalties.} 
\label{fig:comparison}
\end{figure}

\subsection{Relation to second knot in the solution path}

\begin{figure}
\begin{center}
\subfigure[Group lasso]{
\label{fig:knota}
\includegraphics[width=0.5\textwidth]{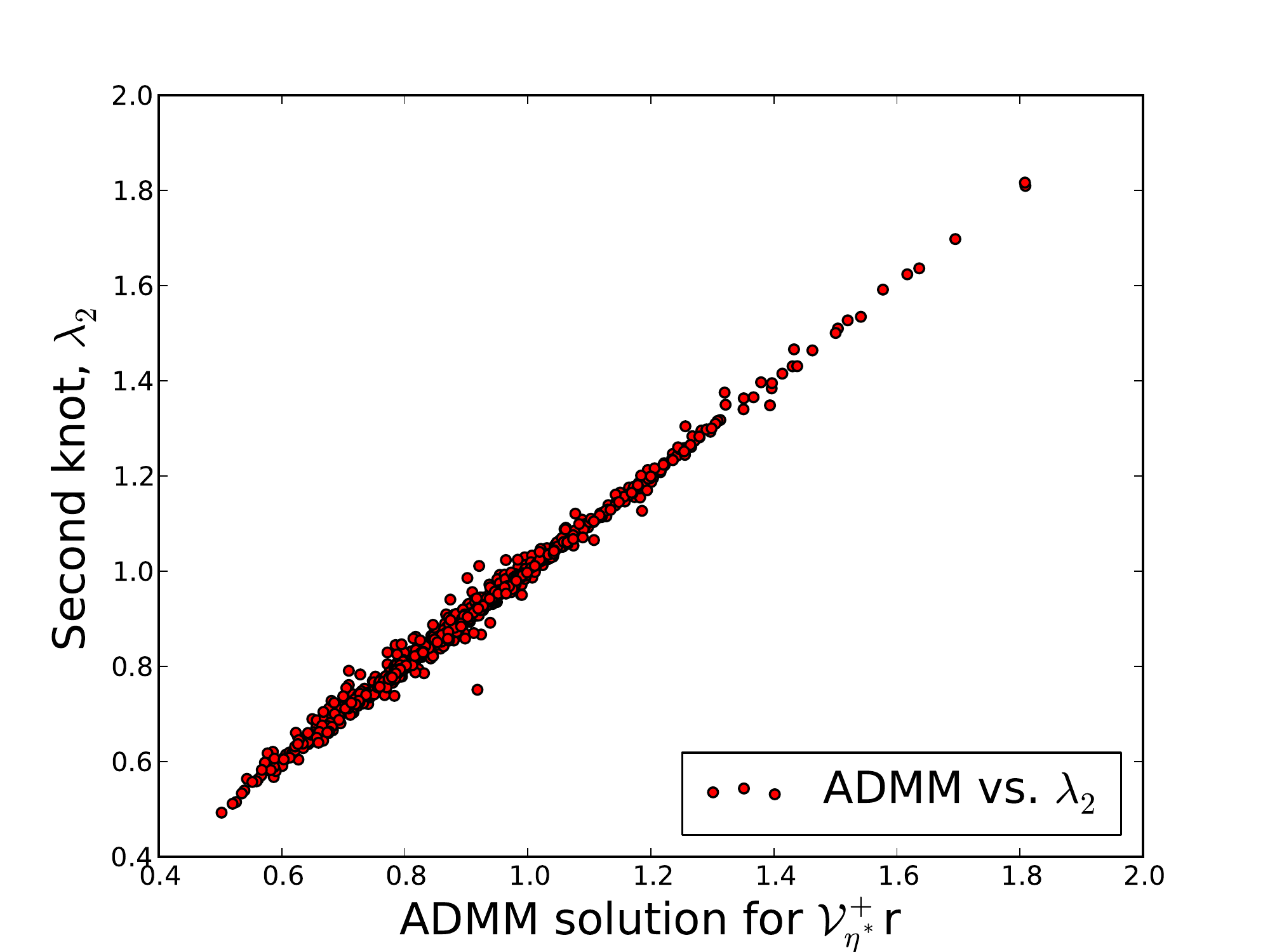}}
\hspace{-15pt}
\subfigure[Nuclear norm]{
\label{fig:knotb}
\includegraphics[width=0.5\textwidth]{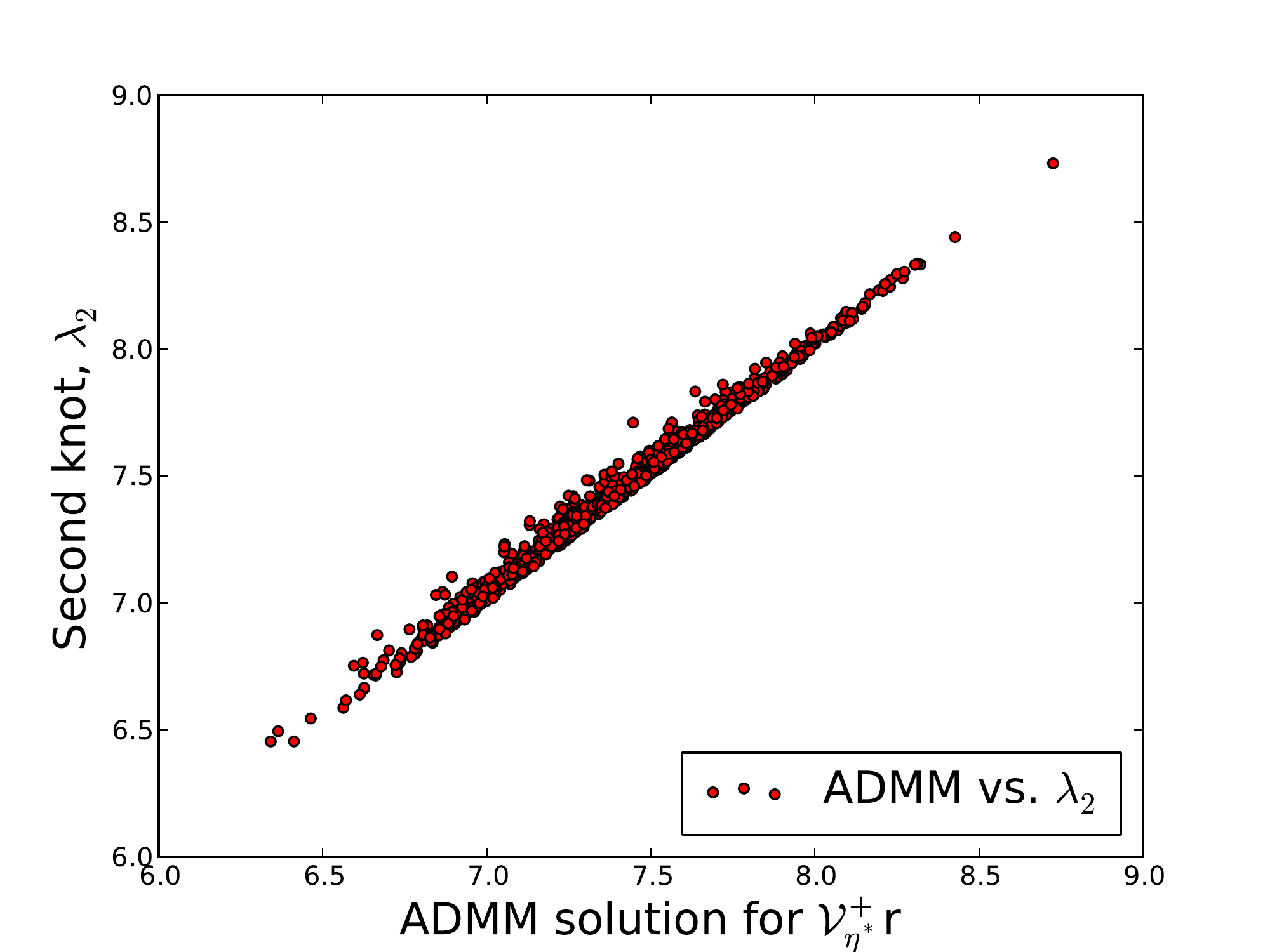}}
\end{center}
\caption{\small \it Comparison of $\V^-_{\eta^*}$ with the second knot 
  $\lambda_2$ in the solution path, for the group lasso and the
  nuclear norm problems. For the group lasso, $\lambda_2$ is defined
  to be the largest value of $\lambda$ such that only one group of
  variables is active. For the nuclear norm, it is defined as the
  largest value of $\lambda$ for which the solution has rank 1.}  
\end{figure}

The quantity $\V^-_{\eta^*}$ is always a lower bound on the first knot
$\lambda_1$, and possesses an interesting connection to another tuning
parameter value of interest along the solution path.  In particular,
$\V^-_{\eta^*}$ is related to the second knot $\lambda_2$, which can be
interpreted more precisely in a problem-specific context, as follows:
% For the examples discussed in this paper: lasso, group lasso and nuclear norm, these problems seem to have explicit solutions and it is not
% necessary to solve these linear programs. We say seem to, because we have not proven that these formulae are explicitly correct for all these penalties, except the lasso. The quantities $\V^-_{\eta^*}$ also numerically are important quantities in the corresponding
% optimization problems:
\begin{itemize}
\item for the lasso, with standardized predictors,
  $\|X_j\|_2$ for all $j=1,\ldots p$, we have exactly
  $\V^-_{\eta^*}=\lambda_2$, the value of the tuning parameter
  $\lambda$ at which the solution changes from one nonzero component 
  to two [this follows from a calculation as in Section 4.1
  of \citet{covtest}];
\item for the group lasso, as shown in Figure \ref{fig:knota},
  $\V^-_{\eta^*}$ is numerically very close to the value $\lambda_2$
  of the parameter at which the solution changes from one nonzero
  group of components to two;
\item for the nuclear norm penalty, as shown in Figure
  \ref{fig:knotb}, 
$\V^-_{\eta^*}$ is numerically very close to the value $\lambda_2$ of
the parameter at which the solution changes from rank 1 to rank 2.
\end{itemize}

\section{Non-Gaussian errors}
\label{sec:nongaussian}

Throughout, our calculations have rather explicitly used the fact that 
$X^Ty$ is Gaussian distributed. One could potentially appeal to
the central limit theorem if the components of $y-X\beta_0$ are
i.i.d. from some error distribution (treating $X$ as fixed), though
the calculations in this work focus on 
the extreme values, so the accuracy of the central limit theorem in
the tails may be in doubt. In the interest of space, we do not address 
this issue here. 

Instead, we consider a scenario with heavier
tailed, skewed noise and present simulation results.  In particular,
we drew errors according to a $t$-distribution with 5
degrees of freedom plus an independent centered $\mathrm{Exp}(1)$.  
Figure \ref{fig:nongauss} shows that the lasso and group lasso
p-values are relatively well-behaved, while the nuclear norm p-values
seem to break down.

\begin{figure}
\begin{center}
\subfigure[Lasso]{
\includegraphics[width=0.6\textwidth]{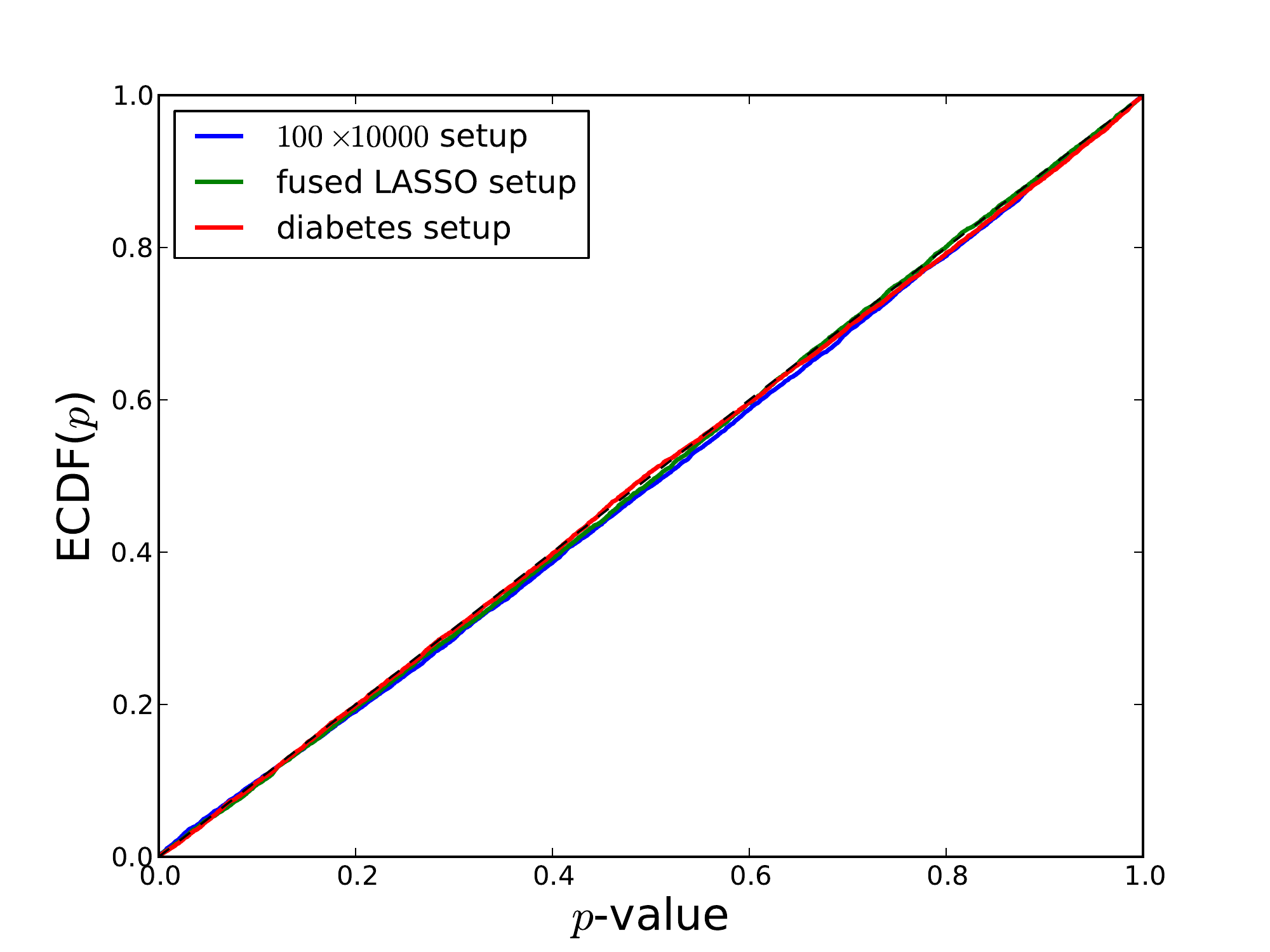}}
\subfigure[Group lasso]{
\includegraphics[width=0.6\textwidth]{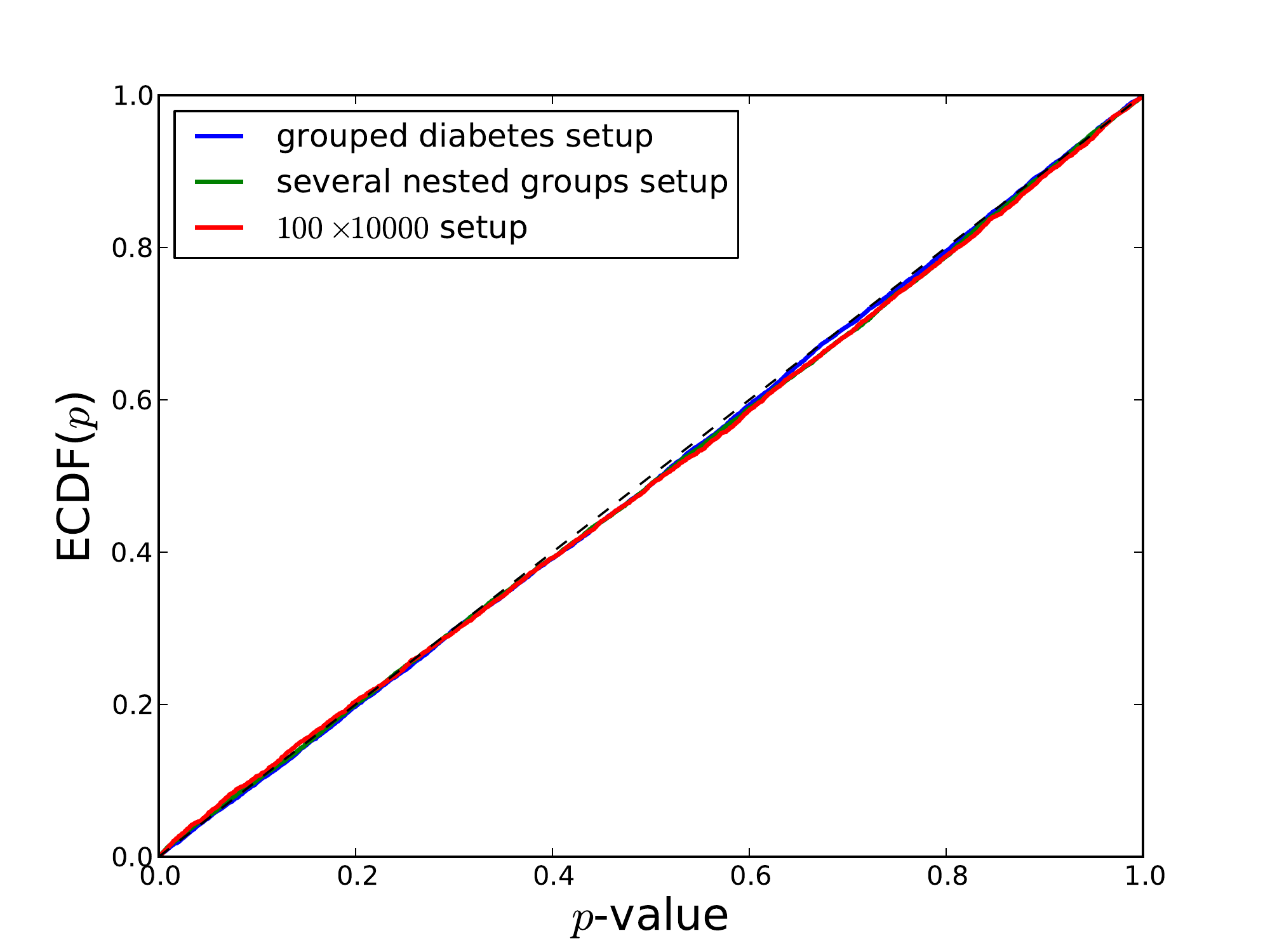}}
\subfigure[Matrix completion]{
\includegraphics[width=0.6\textwidth]{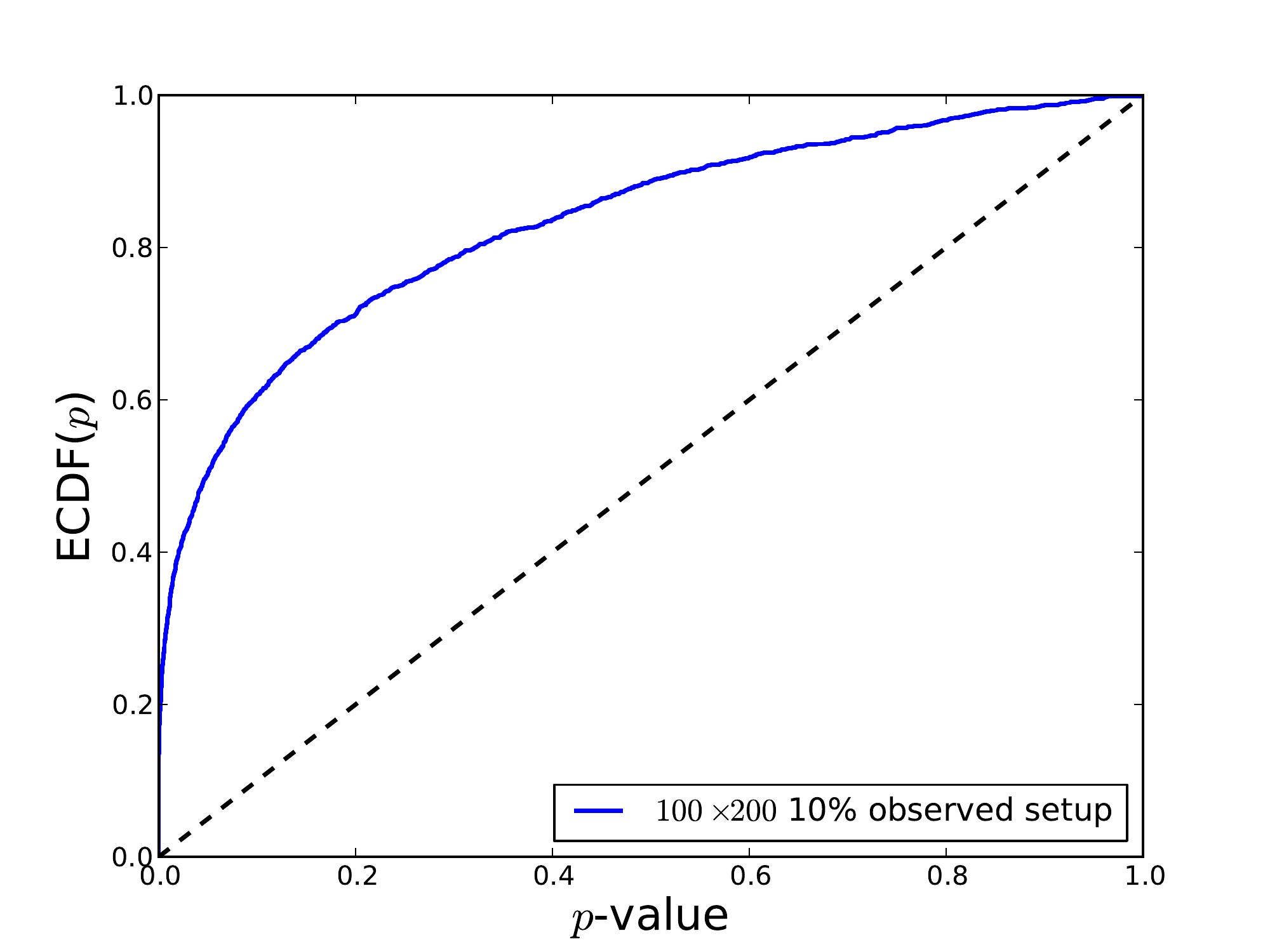}}
\end{center}
\label{fig:nongauss}
\caption{\small \it Empirical distribution function of p-values
  for the lasso, group lasso and matrix completion problems under 
heavy-tailed, skewed noise.}
\end{figure}

\section{Discussion}
\label{sec:discussion}

We derived an exact (non-asymptotic) p-value for 
testing a global null hypothesis in a general regularized regression
setting.  Our test is based on a geometric characterization
of regularized regression estimators and the Kac-Rice formula, and has
a close relationship to the covariance test for the lasso problem
\citep{covtest}.  In fact, the $\mathrm{Exp}(1)$ limiting null
distribution of the covariance test can be derived from the formulae
given here.  These two tests give similar results for lasso problems,
but our new test has exact (not asymptotic) error control under the
null, with less assumptions on the predictor matrix $X$.  

Another strength of our approach is that it provably extends well
beyond the lasso problem; in this paper we examine tests for both the
group lasso and nuclear norm regularization problems.  Still, the
test can be applied outside of these settings too, and is limited only
by the difficulty in evaluating the p-value in practice (which relies
on geometric quantities to be computed).
 
We recall that the covariance test for the lasso can be
applied at any knot along the solution path, to test if the
coefficients of the predictors not yet in the current model are all
zero.  In other words, it can be used to test more refined null
hypotheses, not only the global null.
We leave the extension of Theorem \ref{thm:mainresult} to this more 
general testing problem for future work.  

\subsection{Power}

One important issue we have not yet addressed is the power of our
tests. While the setting under which our tests are applicable is quite
broad (in the lasso problem, the only assumption needed is that $X$
have columns in general position), an end user will likely be
interested in how powerful our test is. 

\subsubsection{Lasso}

For the lasso case, our test statistic is based on a conditional
distribution of $\|X^Ty\|_{\infty}$ [referred to as the Max test in
\cite{global_testing}]. Therefore, the best 
one can expect is to have similar results to the Max test in this situation. 
For general $X$, a simple sufficient condition for full asymptotic power is that the gap 
$\lambda_1 - \lambda_2$ must decrease to 0 no faster than $\lambda_1 \to \infty$.  
This follows from the Mills' ratio approximation 
$$
\frac{1 - \Phi(Z_{(1)})}{1 -\Phi( Z_{(2)})} \approx \exp\left(-\frac{1}{2}(\lambda_1-\lambda_2)(\lambda_1 + \lambda_2) \right).
$$

We note here that, in principle, by taking sequences of problems $(X_n,y_n)$ the set of limiting distributions for $\|X_n^Ty_n\|_{\infty}$ under $H_0$ 
is effectively the set of all possible distributions for the maximum absolute value of a 
(separable) centered
Gaussian process. This is a huge class, implying that the
study of power for such a problem is also a very broad problem. The Kac-Rice test
is based on a conditional distribution for the Max test statistic and is applicable for 
almost all matrices $X$.

The behavior of the Max test has been established for low coherence designs  in \cite{global_testing},
though we emphasize that our test makes no assumption about
coherence. In the interest of space, we 
consider the power of our test to the Max test in
the orthonormal design case. While this is a stronger assumption than low
coherence, we expect a similar situation to hold in the low coherence
setting discussed in . Attempting to prove that
these orthogonal results carry over to a low coherence scenario is an
interesting problem for future research and is beyond the scope of
this paper.  

In the orthonormal case, our test statistic reduces in distribution to 
$$
\frac{1 - \Phi(Z_{(1)})}{1 - \Phi(Z_{(2)})},
$$
where $Z_i \sim |N(\mu_i,1)|$ independently for $i=1,\ldots n$, and
$Z_{(1)} \geq Z_{(2)} \geq \ldots \geq Z_{(n)}$ are their order
statistics in decreasing order.

We first consider the case of finite sparsity and identity design.
The simplest possible alternative hypothesis
is the 1-sparse case where a single $\mu_i$ is nonzero. Let
$\mu_1 = r \sqrt{2\log(p)}$ and all other $\mu_i = 0$. If $r$ is some
constant greater than 1, then with high probability the first
knot $\lambda_1$ will be achieved by $Z_1$. 
Standard Gaussian tail bounds can be used to upper bound the 
Kac-Rice pivot and we see that for any $r > 1$ the upper bound goes to 0 as $p \to \infty$, so
in this case the test has asymptotic full power at the same threshold
as Bonferroni. This is the best possible threshold for asymptotic power
against the 1-sparse alternative. Similar statements hold for $k$-sparse alternatives
where $k$ is considered fixed.

We now consider the sparsity growing with $n$.
Following \cite{global_testing}, we set $\lfloor n^{1-\delta} \rfloor$
of the underlying means $\mu_i$, $i=1,\ldots n$ to be nonzero and
equal to $a$. This is arguably the worst case for our test, which 
is based on the gap between $Z_{(1)}$ and $Z_{(2)}$. 
Let $A(\delta,n)= \sqrt{2 \log n} \cdot (1 - \sqrt{1-\delta})$; this 
is the threshold for non-trivial asymptotic for the Max test. That is,
for any $\epsilon > 0$, the Max test is asymptotically power powerful
(Type I + Type II error $\to 0$) if the nonzero means have value
$a = A(\delta,n) + \epsilon \sqrt{2 \log n}$, and asymptotically powerless
(Type I + Type II error $\to 1$ or more) if the nonzero means have
value $a = A(\delta,n) -  \epsilon \sqrt{2 \log n}$.  

%Consider our exact test with level $\alpha$. Its Type I error is
%always $\alpha$, hence the best we can hope for is Type II error $\to
%0$.  A weaker conclusion is that its Type I + Type II error tends to a
%quantity strictly less than 1. This weaker conclusion is true for our
%test when the  means are chosen as above.  To see this, 
%take any $\epsilon > 0$ and $\delta \in (0,1)$ fixed. Then, as $n$
%grows, the top two values will, with high probability, be alternatives
%and their spacing determined by the spacings of $n^{1-\delta}$ i.i.d.\
%$N(0,1)$ random variables. By arguments detailed in \cite{covtest},
%and the choice of $a$, we see that, conditional 
%on the top two being from the alternative,
%$$
%\begin{aligned}
%\frac{Z_{(1)}}{(1+\epsilon) \sqrt{2 \log n}} & \overset{d}{\to} 1, \\
%\frac{Z_{(1)}}{Z_{(1)} - \mu_{(1)}} & \overset{d}{\to} \frac{(1+\epsilon)}{\sqrt{1-\delta}}, \\
%{(Z_{(1)}- \mu_{(1)})(Z_{(1)} - Z_{(2)})} & \overset{d}{\to} \text{Exp}(1),
%\end{aligned}
%$$
%where $\mu_{(1)}=A(\delta,n) + \epsilon \sqrt{2 \log n} = \mu_{(2)}$
%on this event. Therefore,
%$$
%\frac{1 - \Phi(Z_{(1)})}{1 -\Phi( Z_{(2)})}  \approx
%\exp(-Z_{(1)}(Z_{(1)}-Z_{(2)})) \overset{d}{\to} \exp \left(-
%  \frac{1+\epsilon}{\sqrt{1-\delta}} \text{Exp}(1)\right).$$ 

%\RTcomment{Where does the first approximation here come from?  I don't
%  see it.}
%  \JTcomment{Missing a $\Phi$, also implicitly using $Z_{(1)}/Z_{(2)} \to 1$ and
%  $$
%  \frac{1}{2}(Z_{(1)}^2 - Z_{(2)}^2) \to Z_{(1)}(Z_{(1)}-Z_{(2)}).
%  $$}

A fairly straightforward argument based on the spacings
calculations in \cite{covtest}, the asymptotic power of the Kac Rice level $\alpha$ test is
$\alpha^{\sqrt{1-\delta}/(1+\epsilon)}$ and Type I + Type II error =
$\alpha + 1 - \alpha^{\sqrt{1-\delta}/(1+\epsilon)} < 1$. 
As $\delta \to 1$, we recover the full power in the finite sparsity case.
%Further, suppose that we relax the alternative so that it is drawn
%from some distribution supported on $\sqrt{2  \log n} \cdot (1 -
%\sqrt{1-\delta})$ components. If the variance of this 
%distribution grows with $n$, then under many reasonable 
%models, we see that Type II error $\to 0$. For example, we might
%consider a case in which the coefficients were drawn as 
%$$
%\mu_i = \sqrt{2 \log n} \cdot (1 - \sqrt{1-\delta}) + \sigma_n |W_i|,
%\qquad W_i \sim N(0,1), \sigma_n \to \infty. 
%$$ 
%Of course, the rate of growth of $\sigma_n$ would affect the rate of
%decrease of Type II error. Such a fine analysis is beyond the scope of
%this paper. 
 
\subsubsection{Group lasso}

The authors are not aware of literature on the global power of 
tests such as the group lasso 
In practice, using the group lasso penalty allows the modeler freedom
to favor certain groups by judicious choices of the group
weights. Other interesting examples of the group lasso such as 
{\em glinternet} \citep{glinternet} generally have widely varying
group sizes. 

Nevertheless,
if we are willing to make simplifying assumptions on
group sizes and the designs, then we can say something about power. 
The simplest thing might be to again assume orthonormal design, with
the additional assumption of  
equal sized groups and weights. In this very specialized setting,
questions of power reduce to questions about 
order statistics of non-central $\chi_k$ random variables for a fixed
group size $k$. These random variables 
all have similar tail behavior to $N(0,1)$ random variables, with the effective number of observations
now being $n/k$. We therefore expect similar behavior to the Max test
if $k$ is fixed. Allowing $k$ to vary with $n$, or even be random 
seems an interesting problem to consider, even in the orthonormal
design setting. For general $X$ without coherence assumptions 
this seems like a challenging  problem indeed.

\subsubsection{Nuclear norm}

The authors are not aware of other approaches to testing the global
null using the test motivated by the nuclear norm, based 
the largest singular value of $X^Ty$. The special case $X=I$ is an
obvious exception, in which this largest singular 
value corresponds to the edge of the spectrum and much is known about 
the limiting distribution, the Tracy-Widom law \cite{johnstone}. 
In this case, our test statistic is based on the conditional
distribution of $\lambda_1$ given $\lambda_2, \dots,
\lambda_{\text{min}(n,p)}$. 
Rather than begin a detailed analysis of the power, we provide a small
simulation study to compare 
to existing implementations based on the Tracy-Widom limit. The simulation
is the work of Yunjin Choi, currently a Ph.D. student investigating
the use of the Kac-Rice pivot to inference in PCA. The example
is a rank-one example, demonstrating that the Kac-Rice test is competitive
with the Tracy-Widom approximation of \cite{johnstone}. The Kac-Rice test
has the advantage of control of Type I error and applicability to
the matrix completion or reduced-rank regression setting.

\begin{figure}
\begin{center}
\label{fig:TW}
\includegraphics[width=\textwidth]{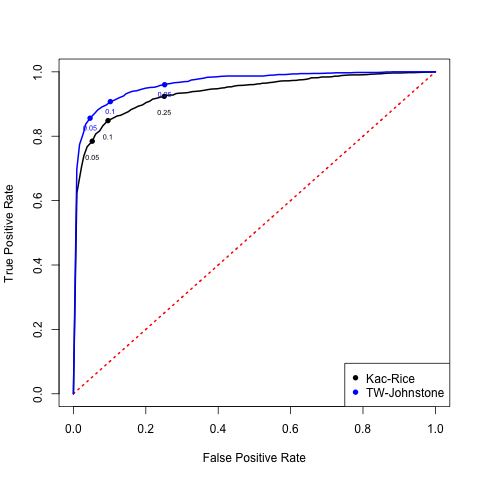}
\hspace{-15pt}
\caption{\small \it  Comparison of Kac-Rice test
to Tracy Widom test for $(n, p) = (50,10)$. The alternative
is a rank 1 matrix with singular value $\sqrt{50*10*0.125}=7.90$.
We see that the Tracy Widom test is more powerful here, probably because it conditions
on less then the Kac-Rice test, which conditions on all singular values of the data matrix.
See the remarks following Lemma \ref{lem:mainresult}. The Kac-Rice test
is applicable to situations beyond PCA such as reduced-rank regression and
matrix completion while the Tracy-Widom approximation is not, to the
authors' knowledge.} 
\end{center}
\end{figure}

\subsection{Related work on selective inference}
\label{sec:selectiveinf}

We conclude this paper with a discussion of how results in this paper can be applied to
{\em selective inference} and how this paper relates to other recent work.
 An astute reader will note that all the distributional
results in Section \ref{sec:kacrice} are valid for {\em any $\mu$} and not just the global null
hypothesis $H_0:\mu=0$. In this sense, the results in this paper go beyond just hypothesis tests
as they can be used to construct intervals containing linear functions of the true mean $\mu$. 
Formally, they can be used to construct intervals for $\mu_{\eta^*}$ defined in \eqref{eq:mu}.
Specifically, consider the set
\begin{equation}
\label{eq:selection_interval}
SI = \left\{\delta: \text{min}\left(
    \Ss_{\Lambda_{\eta^*},\V_{\eta^*},\delta,\sigma^2_{\eta^*}}(f_{\eta^*})
    , 1-
    \Ss_{\Lambda_{\eta^*},\V_{\eta^*},\delta,\sigma^2_{\eta^*}}(f_{\eta^*})
  \right) > \alpha/2 \right\}. 
\end{equation}
where $\Ss$ is defined in \eqref{eq:surv}. As $\Ss$ is an exact pivot for $\mu_{\eta^*}$ we see that
\begin{equation}
\label{eq:selection_coverage}
\Pp \left(\mu_{\eta^*} \in SI \right) = 1 - \alpha.
\end{equation}

Applying the above to the first step of the lasso, we can construct
exact intervals that cover $X_{(1)}^T\mu$ where 
$X_{(1)}$ is the first variable chosen by the lasso.
In work initiated after the initial submission of this paper
\cite{exact_lasso}, one of the authors has used this selective
inference 
framework for exact selective inference for selected variables in the problem
\begin{equation}
\label{eq:lasso:fixed}
\minimize_{\beta} \frac{1}{2} \|y-X\beta\|^2 + \lambda \|\beta\|_1
\end{equation}
for some $\lambda > 0$ fixed. With a small modification, one can
similarly analyze the problem 
\begin{equation}
\label{eq:lasso:bound}
\minimize_{\beta: \|\beta\|_1 \leq \kappa } \frac{1}{2} \|y-X\beta\|^2 .
\end{equation}

 In the PCA setting, the above interval gives an interval that covers
a pseudo-singular value $U_1(y)^T\mu V_1(y)$. If $U_1(y), V_1(y)$
recovered $U_1(\mu), V_1(\mu)$ without error, then this would be an
actual singular value and the selection interval would be a confidence
interval. 
Under certain conditions for the mean matrix $\mu$, results in random
matrix theory \cite{debashis_johnstone} ensure that the 
random singular vectors converge in some sense to the population
singular vectors, in the sense that $U_1(\mu)^TU_1(y) \to 1$. 
Extending such results to the nuclear norm setting with $X \neq I$
seems like an interesting and challenging problem. 

Perhaps the strongest sense in which these results differ from earlier
results on selective inference is that they  
are exact and computationally feasible. For the lasso, for instance,
we need only assume that $X$ has columns in general position.

Another sense in which these results differ from existing results on
selective inference is  
that our testing problem allows for the possibility that the choice is
made from some continuous set. In the lasso setting, 
a ``mode'' is chosen by selecting a set of active variables and their
signs. This differs from the group lasso setting in the sense 
that the active subgradient varies continuously in some set. This
added complexity makes conditioning on the ``active  
subgradient'' a somewhat more technically dubious procedure. This
conditioning is handled explicitly via the  
Kac-Rice formula which, in this case, can be thought of as enabling us
to carry out a partition argument over a smooth partition set. 

Finally, the construction of the process $\tf^{\eta}_{\eta}$ is a
contribution to the theory of smooth Gaussian random fields as 
described in \cite{RFG}. Our construction allows
earlier proofs that apply only to (centered) Gaussian random fields of
constant variance (i.e. marginally stationary)  
to smooth random fields with arbitrary variance.  Even in the
marginally stationary case, 
the conditional distribution $\Qq$ defined in \eqref{eq:distq} 
provides a new tool for exact selective inference at critical points
of such random fields. We leave this, and many other topics, for
future work. 

\bigskip
\noindent

{\bf Acknowledgements:} We are deeply indebted to Robert Tibshirani, 
without whose help and suggestions this work would not have been
possible. Two referees and and an associate editor 
provided helpful feedback on the first version of this work that has
helped the authors improve this paper. 

\appendix
\section{Proofs and supplementary details}

\subsection{Uniqueness of the fitted values and penalty}
\label{app:uniqueness}

Consider the strictly convex function $f(u)=\|y-u\|_2^2/2$. Suppose
\smash{$\hbeta_1,\hbeta_2$} are two solutions of the regularized
problem \eqref{eq:genprob} with 
\smash{$X\hbeta_1\not=X\hbeta_2$}.  Note
\begin{equation*}
f(X\hbeta_1)+\lambda \pen(\hbeta_1)=f(X\hbeta_2)+\lambda
\pen(\hbeta_2) = c^*,
\end{equation*}
the minimum possible value of the criterion in \eqref{eq:genprob}. 
Define \smash{$z=\alpha \hbeta_1+ (1-\alpha)\hbeta_2$} for any
$0<\alpha<1$. 
Then by strict convexity of $f$ and convexity of $\pen$, 
\begin{align*}
f(Xz)+\lambda \pen(z) &< \alpha f(X\hbeta_1) + (1-\alpha)
f(X\hbeta_2) + \lambda \Big(\alpha \pen(\hbeta_1) + (1-\alpha) 
\pen(\hbeta_2) \Big) \\ 
&= (1-\alpha) c^* + \alpha c^* \\
&= c^*,
\end{align*}
contradicting the fact that $c^*$ is the optimal value of the
criterion.  Therefore we conclude that \smash{$X\hbeta_1=X\hbeta_2$}.

Furthermore, the uniqueness of the fitted value $X\hbeta$ in
\eqref{eq:genprob} implies uniqueness of the penalty term
\smash{$\pen(\hbeta)$}, for any $\lambda > 0$. 

\subsection{Calculation of $\lambda_1$}
\label{app:genlam1}

By the subgradient optimality conditions, $\hbeta$ is a solution in
\eqref{eq:genprob} if and only if
\begin{equation*}
\frac{X^T (y-X\hbeta)}{\lambda} \in \partial \pen(\hbeta),
\end{equation*}
where \smash{$\partial \pen(\hbeta)$} denotes the subdifferential (set of
subgradients) of $\pen$ evaluated at \smash{$\hbeta$}. Recalling that 
$\pen(\beta) = \max_{u\in\dualball} u^T \beta$, we can rewrite
this as
\begin{equation}
\label{eq:sg}
\frac{X^T (y-X\hbeta)}{\lambda} \in \dualball \qquad \text{and} \qquad
\frac{(X\hbeta)^T(y-X\hbeta)}{\lambda} = \pen(\hbeta).
\end{equation}
The first statement above is equivalent to
\smash{$\dualpen(X^T(y-X\hbeta)) \leq \lambda$}, where
$\dualpen(\beta) = \max_{v\in\dualball^\circ} v^T \beta$ and  
$\dualball^\circ$ is the polar body of $\dualball$.  Now define  
\begin{equation*}
\lambda_1 = \dualpen\big(X^T (I-P_{X\dualball^\perp}) y\big), 
\end{equation*}
where $P_{X\dualball^\perp}$ denotes projection onto the linear
subspace 
\begin{equation*}
X\{v : v \perp \dualball\} = X \dualball^\perp.
\end{equation*}
Then for any $\lambda \geq \lambda_1$, the conditions in
\eqref{eq:sg} are satisfied by taking
\smash{$X\hbeta=P_{X\dualball^\perp}y$}, and 
accordingly, \smash{$\pen(\hbeta)=0$}.  

On the other hand, if $\lambda < \lambda_1$, then we must have
\smash{$\pen(\hbeta) \not= 0$}, because \smash{$\pen(\hbeta)=0$}
implies that \smash{$X\hbeta \in X \dualball^\perp$}, in fact
\smash{$X\hbeta=P_{X\dualball^\perp} y$}, so  $\lambda < \dualpen(X^T
(I-P_{X\dualball^\perp})y)$ and the first condition in \eqref{eq:sg}
is not satisfied.  

\subsection{Bounded support function}
\label{app:welldefined}

We first establish a technical lemma.

\begin{lemma}
\label{lem:boundedpolar}
Suppose that $D \subseteq \real^n$ has $0 \in \relint(D)$.  
Then for all $u \in \spa(D)$, 
\begin{equation*}
\max_{v \in D^\circ}\, v^Tu \leq M,
\end{equation*}
for some constant $M < \infty$, with
$D^{\circ} = \{v: u^Tv \leq 1 \;\, \text{for all}\;\, u \in D\}$
the polar body of $D$. In other words, $D^\circ \cap \spa(D)$ is a
bounded set.
\end{lemma}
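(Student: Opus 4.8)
The plan is to turn the hypothesis $0 \in \relint(D)$ into a concrete ``relative ball'' sitting inside $D$, and then test the defining inequality of the polar body against a well-chosen point of that ball. Write $L := \spa(D)$. Because $0$ lies in the relative interior of $D$, there is some $\rho > 0$ such that $\{x \in L : \|x\|_2 \leq \rho\} \subseteq D$. This $\rho$ is the only quantity the proof really uses.

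Next I would take an arbitrary nonzero $v \in D^\circ \cap \spa(D)$ and apply the polar condition to the single point $u_v := \rho\, v / \|v\|_2$. Since $u_v \in L$ with $\|u_v\|_2 = \rho$, we have $u_v \in D$, so by definition $v^T u_v \leq 1$, i.e. $\rho \|v\|_2 \leq 1$, giving $\|v\|_2 \leq 1/\rho$. Hence $D^\circ \cap \spa(D)$ is contained in the Euclidean ball of radius $M := 1/\rho$, which is exactly the assertion that it is bounded. To recover the displayed inequality for a general $v \in D^\circ$ and $u \in \spa(D)$, I would observe that $v^T u = (P_L v)^T u$ and that $P_L v \in D^\circ \cap \spa(D)$ — indeed $(P_L v)^T x = v^T x \leq 1$ for every $x \in D \subseteq L$ — so that $v^T u \leq \|P_L v\|_2 \, \|u\|_2 \leq M \|u\|_2$; in particular the bound is uniform over $u$ in the unit ball of $\spa(D)$.

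I do not expect any real obstacle here: once the relative ball is in hand the argument is a one-line computation. The only point that genuinely needs care — and the reason the statement is about $D^\circ \cap \spa(D)$ rather than $D^\circ$ itself — is that $D$ may fail to be full-dimensional, so the relative interior supplies a ball only \emph{within} $L$, and the component of a vector $v \in D^\circ$ orthogonal to $L$ is left entirely unconstrained by the polar condition. Restricting to $\spa(D)$ (or, equivalently, projecting onto $L$) is precisely what makes the bound possible.
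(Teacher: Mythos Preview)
Your proof is correct and follows essentially the same approach as the paper's: both arguments extract a relative ball $B_L(\rho) \subseteq D$ from the hypothesis $0 \in \relint(D)$ and then test the polar inequality against the point $\rho\, v/\|v\|_2$ (the paper's $r$ is your $\rho$, and both obtain the bound $1/\rho$). Your version is slightly more direct --- you bound $\|v\|_2$ immediately, whereas the paper argues by contraposition that points on spheres of radius $r^{-1}+\delta$ fail to lie in $D^\circ$ and then passes to the closure --- but the underlying idea and the resulting constant are identical.
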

\begin{proof}
By assumption, we know that $P_D B(r) \subseteq D$
for some $r > 0$, where $P_D$ is the projection matrix onto
$\spa(D)$, and $B(r)$ is the $\ell_2$ ball of radius $r$
centered at 0. Fix $\delta>0$, and define $S_{D,r^{-1}+\delta}
= \partial P_D B(r^{-1} + \delta)$, the boundary of the projected 
ball of radius $r^{-1}+\delta$. Then for any
$v\in S_{D,r^{-1}+\delta}+D^\perp$, 
we can define $u=r P_D v / \|P_D v\|_2$, and we have $u \in P_D
B(r) \subseteq D$ with $u^T v = 1+\delta r > 1$, hence $v
\notin D^\circ$.  As this holds for all $\delta > 0$, we must have 
\begin{equation*}
D^\circ \subseteq \overline{\real^p \setminus 
\Big(\bigcup_{\delta > 0} S_{D,r^{-1}+\delta} +
D^\perp \Big)}= P_D B(r^{-1}) +
D^{\perp}.
\end{equation*}
This implies that $P_D D^{\circ} \subseteq P_D B(r^{-1})$,
which gives the result.
\end{proof}

We now use Lemma \ref{lem:boundedpolar} to show that the process
$f_\eta$ in \eqref{eq:f} is bounded over $\eta\in\K=\dualball^\circ$,
assuming that $\dualball$ is closed and contains 0 in its relative
interior. 

\begin{lemma}
\label{lem:welldefined}
If $\dualball \in \real^p$ is a closed, convex set with $0 \in
\relint(\dualball)$, then 
\begin{equation*}
\max_{\eta\in\K}\, f_\eta \leq M,
\end{equation*}
where $f_\eta$ is the process in \eqref{eq:f}, $\K=\dualball^\circ$,
and $M<\infty$ is a constant.  
\end{lemma}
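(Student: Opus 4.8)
The plan is to reduce the supremum of the linear functional $f_\eta$ over the (possibly unbounded) polar set $\K=\dualball^\circ$ to a supremum over the bounded set $\dualball^\circ\cap\spa(\dualball)$, and then invoke Lemma~\ref{lem:boundedpolar}.

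First I would observe that $f_\eta$ depends on $\eta$ only through its orthogonal projection $P_\dualball\eta$ onto $\spa(\dualball)$. Since $P_{X\dualball^\perp}$ is a symmetric projection,
\begin{equation*}
f_\eta=\eta^TX^T(I-P_{X\dualball^\perp})y=\big((I-P_{X\dualball^\perp})X\eta\big)^Ty,
\end{equation*}
and decomposing $\eta=P_\dualball\eta+\eta^\perp$ with $\eta^\perp\perp\dualball$ gives $X\eta^\perp\in X\dualball^\perp$, hence $(I-P_{X\dualball^\perp})X\eta^\perp=0$ and $f_\eta=f_{P_\dualball\eta}$.

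Second I would check that $\eta\in\dualball^\circ$ forces $P_\dualball\eta\in\dualball^\circ\cap\spa(\dualball)$: for every $u\in\dualball\subseteq\spa(\dualball)$ we have $u^TP_\dualball\eta=(P_\dualball u)^T\eta=u^T\eta\le 1$, so $P_\dualball\eta\in\dualball^\circ$, while $P_\dualball\eta\in\spa(\dualball)$ by construction. Applying Lemma~\ref{lem:boundedpolar} with $D=\dualball$ (whose hypothesis $0\in\relint(\dualball)$ is among the assumptions) then yields a finite constant $R$ with $\|P_\dualball\eta\|_2\le R$ for all $\eta\in\K$.

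Finally I would combine the two observations via Cauchy--Schwarz,
\begin{equation*}
f_\eta=(P_\dualball\eta)^TX^T(I-P_{X\dualball^\perp})y\le\|P_\dualball\eta\|_2\,\big\|X^T(I-P_{X\dualball^\perp})y\big\|_2\le R\,\big\|X^T(I-P_{X\dualball^\perp})y\big\|_2=:M<\infty,
\end{equation*}
which is the desired bound. I do not foresee a genuine obstacle: the only delicate points are verifying that $P_\dualball$ maps $\dualball^\circ$ into itself and that the $\dualball^\perp$-directions contribute nothing to $f_\eta$, and both follow immediately from symmetry of the projections. (Closedness of $\dualball$ plays no role in the upper bound itself; it is used only to guarantee that the bounded supremum is attained, so that writing ``$\max$'' in the statement is legitimate.)
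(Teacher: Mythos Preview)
Your proof is correct. Both your argument and the paper's hinge on Lemma~\ref{lem:boundedpolar}, but you apply it differently. The paper reparametrizes by $v=X\eta\in\real^n$ and invokes the lemma with $D=(X^T)^{-1}\dualball$, verifying (via the bipolar identity $\dualball^{\circ\circ}=\dualball$, which is where closedness of $\dualball$ enters) that $D^\circ=X\K$ and that $(I-P_{X\dualball^\perp})y\in\spa(D)=(X\dualball^\perp)^\perp$. You instead stay in $\real^p$, apply the lemma directly with $D=\dualball$ to bound $\|P_\dualball\eta\|_2$, and absorb the map $X$ only at the end via Cauchy--Schwarz. Your route is more elementary: it avoids the preimage construction and the bipolar theorem, and it makes explicit that the $\dualball^\perp$-component of $\eta$ contributes nothing to $f_\eta$. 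The paper's route has the cosmetic advantage of not splitting off an operator-norm factor $\|X^T(I-P_{X\dualball^\perp})y\|_2$ in the bound, but since only finiteness of $M$ is asserted this is immaterial.
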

\begin{proof}
We reparametrize $f_\eta = (X\eta)^T(I-P_{X\dualball^\perp})y$ as
\begin{equation*}
g_v = v^T (I-P_{X\dualball^\perp})y, \qquad v \in X\K,
\end{equation*} 
and apply Lemma \ref{lem:boundedpolar} to the process $g_v$, with 
$D = (X^T)^{-1} \dualball$, the inverse image of $\dualball$ under
the linear map $X^T$.  It is straightforward to check, using the fact
that $\dualball$ is closed, convex, and contains 0 (i.e., using
$\dualball^{\circ\circ}=\dualball$), that $D^\circ = X\dualball^\circ
= X\K$.  By Lemma \ref{lem:boundedpolar}, therefore, we know that 
$g_v \leq M$ for all $v \in \spa(D)$.  The proof is completed by
noting that $\spa(D) = \spa((X^T)^{-1} \dualball) =
(X\dualball^\perp)^\perp$, which means
$(I-P_{X\dualball^\perp})y \in \spa(D)$ for any $y \in \real^n$.  
\end{proof}

\bibliographystyle{agsm}
\bibliography{kacricetest}

\end{document}